\pdfoutput=1
\documentclass[10pt,aps,pra,twocolumn,nofootinbib,superscriptaddress,floatfix]{revtex4-2}

\usepackage{amsmath, amssymb}
\usepackage{xcolor}
\usepackage{url}
\usepackage{dirtytalk}
\usepackage{graphicx} 
\usepackage{subcaption}
\usepackage{wrapfig} 
\usepackage[T1]{fontenc}
\usepackage{tcolorbox}
\usepackage{mathtools}
\usepackage{svg}
\usepackage{bm}
\usepackage{multirow}
\usepackage[ruled,lined]{algorithm2e}
\DontPrintSemicolon                       
\usepackage{xfrac}
\usepackage{qcircuit}
\usepackage{braket}
\usepackage{amsthm}
\usepackage{hyperref}
\usepackage{cleveref}
\allowdisplaybreaks

\hypersetup{
    colorlinks=true,
    linkcolor=blue,
    citecolor=magenta,
    urlcolor=magenta
}
\DeclareMathOperator{\Tr}{Tr}

\newtheorem{theorem}{Theorem}

\newtheorem{definition}{Definition}
\newtheorem{assumption}{Assumption}

\newtheorem{lemma}{Lemma}

\newtheorem{remark}{Remark}

\newtheorem{proposition}{Proposition}

\begin{document}

\title{Bose--Einstein thermal operators for
semidefinite optimization}

\author{Michele Minervini}
\affiliation{School of Electrical and Computer Engineering, Cornell University, Ithaca, New York 14850, USA}

\author{Nana Liu}
\affiliation{Institute of Natural Sciences, Shanghai Jiao Tong University, Shanghai 200240, China}
\affiliation{School of Mathematical Sciences, Shanghai Jiao Tong University, Shanghai 200240, China}
\affiliation{Ministry of Education Key Laboratory in Scientific and Engineering Computing,
Shanghai Jiao Tong University, Shanghai 200240, China}
\affiliation{Global College, Shanghai Jiao Tong University, Shanghai 200240, China}

\author{Mark M. Wilde}
\affiliation{School of Electrical and Computer Engineering, Cornell University, Ithaca, New York 14850, USA}

\date{\today}

\begin{abstract}
We establish that semidefinite programs (SDPs) over the unbounded positive semidefinite cone are mathematically equivalent to thermodynamic systems of independent bosonic modes: the eigenvalues of the optimization variable play the role of expected occupation numbers, the linear objective plays the role of total expected energy, and the linear equality constraints play the role of conserved non-commuting charges. Building on this perspective, we recast general SDPs as bosonic free-energy minimization problems at strictly positive temperature, regularized by the Bose--Einstein entropy; the original SDP is recovered in the zero-temperature limit. The optimal primal variable takes the form of a Bose--Einstein thermal operator parametrized by the dual variables. We prove an approximation-error bound that depends on the ground-space degeneracy and the spectral gap of the dual slack operator, improving on the linear-in-dimension worst-case duality gap of interior-point methods. We also introduce the Bose--Einstein quantum relative entropy as a Bregman divergence on the unbounded positive semidefinite cone, generated by the negative Bose--Einstein entropy. We propose it as a natural divergence for unnormalized positive operators, for which the standard Umegaki relative entropy can become negative, and we show that it satisfies a restricted monotonicity property under affine maps modeling bosonic Gaussian channels. Finally, we develop hybrid quantum--classical algorithms for the regularized SDP using only Hamiltonian simulation, Hadamard tests, and classical sampling, and bound their runtime in closed form. Unlike existing quantum SDP solvers, whose runtimes scale polynomially with an a priori upper bound on the primal trace, our framework operates directly on the unbounded cone, replacing this bound with a dependence on the spectral structure of the dual slack operator.
\end{abstract}

\maketitle

\tableofcontents

\section{Introduction}

\label{sec:introduction}

\subsection{Background and motivation}
\label{subsec:background-and-motivation}

A semidefinite program (SDP) is a linear optimization problem over the cone of positive semidefinite operators. In its standard form, one minimizes a linear objective $\Tr[HX]$ over a $d \times d$ positive semidefinite operator $X \geq 0$ subject to linear equality constraints $\Tr[Q_i X] = q_i$~\cite{Boyd2004convex,vandenberghe1996semidefinite}. Solving SDPs efficiently is a cornerstone of modern convex optimization. Interestingly, this framework also lies at the foundation of quantum information theory~\cite{Mironowicz2024SDP}. Indeed, the core objects of quantum mechanics -- quantum states, measurements, and channels -- are all defined by positive semidefinite constraints. Many fundamental tasks -- such as state discrimination~\cite{Helstrom1969,Holevo1972}, channel-capacity bounds~\cite{Wang2018semidefinite}, entanglement quantification~\cite{Rains2001SDP}, and quantum hypothesis testing~\cite{Wang2019Resource} -- can therefore be cast as SDPs. 

The classical literature on SDPs is rich: interior-point methods (IPMs)~\cite{Karmarkar1984,NesterovNemirovski1994,Alizadeh1995IPMSDP} solve general SDPs to high precision in time polynomial in the dimension $d$ of the optimization variable, and first-order methods such as the matrix multiplicative weights update method achieve faster scaling for large dimensions, though their runtime acquires polynomial dependence on an a priori upper bound on the trace of the primal variable. Mature open-source and commercial solvers -- such as SDPA~\cite{Yamashita2010SDPA}, MOSEK~\cite{MOSEK}, CVX~\cite{CVX2014}, and CVXPY~\cite{Diamond2016CVXPY} -- implement these methods and have made SDPs a standard tool in convex-optimization practice. This polynomial-in-$d$ scaling, however, becomes prohibitive for SDPs arising in quantum information, where the dimension typically grows exponentially with the underlying system size (e.g., $d = 2^n$ for an $n$-qubit system). This motivates the search for quantum SDP solvers: a substantial line of work has established polynomial quantum speedups~\cite{BrandaoSvore2017,vanApeldoorn2017quantumSDP,Brandao2019QuantumSDP,vanApeldoorn2019SDP,KerenidisPrakash2020QIPM,AugustinoNannicini2023QIPM}, ranging from quantum implementations of the matrix multiplicative weights method to quantum interior-point methods.

A complementary line of work, recently initiated in~\cite{liu2025sdp,liu2026sdp_fermi} and continued in the present paper, takes a thermodynamic perspective directly inspired by Jaynes' approach to statistical mechanics~\cite{Jaynes1957}. Rather than treating an SDP as an abstract convex program, one interprets the optimization variable as the state of a thermodynamic system, identifies a physically natural entropy associated with that system, and replaces the linear SDP objective with the corresponding free energy at a strictly positive temperature $T>0$. The resulting unconstrained dual is concave, has a closed-form gradient and Hessian expressed as thermal expectation values, and can be solved by gradient ascent or Newton's method using either classical or hybrid quantum--classical primitives.

This program has produced two prior frameworks. In~\cite{liu2025sdp}, the variable $X$ is constrained to the set of density operators ($X=\rho\geq 0$, $\Tr[\rho]=1$), the corresponding entropy is the von Neumann entropy, and the regularized optimizer is a quantum Boltzmann state. In~\cite{liu2026sdp_fermi}, the variable is constrained to the set of measurement operators ($0\leq  M\leq  I$), the corresponding entropy is the Fermi--Dirac entropy, and the regularized optimizer is a Fermi--Dirac thermal measurement $M_T = (e^{(H-\mu\cdot Q)/T}+I)^{-1}$. Both frameworks rest on the observation that the constraint on the eigenvalues of the optimization variable matches a physical statistical-mechanics constraint: trace-one normalization for Boltzmann statistics and the Pauli exclusion principle for Fermi--Dirac statistics.

The standard form of an SDP, however, sits in neither of these settings: the constraint $X\geq 0$ in~\eqref{eq:general_SDP} imposes no upper bound on the eigenvalues of $X$, and no normalization constraint. The natural physical analog is therefore a system of non-interacting bosonic modes, whose occupation numbers are non-negative real numbers without an upper bound. This is the setting that we develop in the present paper.

\subsection{Approach and technical overview}
\label{subsec:approach}

Similar to~\cite{liu2025sdp,liu2026sdp_fermi}, we interpret each eigenmode of the positive semidefinite operator $X$ as an independent bosonic mode, with eigenvalue $x_j\geq 0$ playing the role of the expected occupation number of bosons in the $j$th mode. Under this interpretation, the linear objective $\Tr[HX]=\sum_j x_j\langle\phi_j|H|\phi_j\rangle$ is the total expected energy of the bosonic system, and each linear constraint $\Tr[Q_iX]=q_i$ is the conservation law for the non-commuting charge $Q_i$. The freedom of the optimization is to adjust both the modes $|\phi_j\rangle$ and their occupation numbers $x_j$ to minimize the total expected energy subject to the conservation constraints.

Inspired by the fact that physical systems operate at a strictly positive temperature, we regularize the optimization by minimizing the following bosonic free energy:
\begin{equation}
    F[X] \coloneqq \Tr[HX] - T\cdot S_{\mathrm{BE}}(X)
\end{equation}
at temperature $T>0$, where the Bose--Einstein entropy
\begin{equation}
    S_{\mathrm{BE}}(X) \coloneqq \Tr[(X+I)\ln(X+I) - X\ln X]
\end{equation}
is the von Neumann entropy of a thermal ensemble of independent bosonic modes with the same expected occupation as $X$. By Lagrangian duality, this constrained free-energy minimization problem becomes an unconstrained concave maximization problem in the dual chemical-potential vector $\mu\in\mathbb{R}^c$, with the optimal primal variable taking the form of a Bose--Einstein thermal operator $X_{T}(\mu)=(e^{K_\mu/T}-I)^{-1}$, where $K_\mu = H-\mu\cdot Q$ plays the role of a grand canonical Hamiltonian. As $T\to 0^+$, the optimal value of the regularized problem converges to the optimal value of the original SDP.

A central new ingredient introduced in this paper is the \emph{Bose--Einstein quantum relative entropy} $D_{\mathrm{BE}}(X\|Y)$, defined as the matrix Bregman divergence generated by the negative Bose--Einstein entropy. This object plays a triple role: it provides a concise alternative proof of the dual representation, it gives an information-geometric interpretation of the regularized optimization as Quantum Mirror Descent on the unbounded positive semidefinite cone, and it is the natural divergence to use in quantum information processing tasks involving unnormalized positive operators (e.g., bosonic mode occupations). 

The dual objective and its analytical gradient and Hessian (both expressed as thermal expectation values) admit hybrid quantum--classical estimators, which are based solely on Hamiltonian simulation, Hadamard tests, and classical Cauchy sampling. These algorithms serve as the core quantum primitives driving hybrid quantum--classical first- and second-order optimization methods to solve the regularized dual problem. To ensure this regularized solution accurately recovers the optimal value of the original SDP, the temperature parameter $T$ is chosen sufficiently small in order to guarantee an overall target precision $\varepsilon$. Furthermore, we establish an approximation-error bound that, when the dual slack operator $K_\mu = H-\mu\cdot Q$ has a small ground-space degeneracy and a non-vanishing spectral gap, depends only logarithmically on the ambient Hilbert-space dimension $d$, thus refining the worst-case $\mathcal{O}(T\cdot d)$ duality-gap bound inherent to classical IPMs.

\subsection{Main contributions and paper organization}
\label{subsec:summary-of-contributions}

Following a brief review of notation in Section~\ref{sec:notation}, the remainder of our paper formalizes this thermodynamic framework for general SDPs. Our main contributions are organized as follows:

\begin{itemize}
    \item \textbf{Bosonic free-energy formulation (Section~\ref{sec:general_sdp}):} We observe that the constraint $X\geq 0$ in a standard SDP coincides with the physical occupation rules of independent bosonic modes (Section~\ref{subsec:Interpretation-bosonic}). We recast the standard SDP as a thermodynamic free-energy minimization problem, deriving an unconstrained concave dual problem in the chemical-potential vector $\mu\in\mathbb{R}^c$ and showing that the optimal primal variable is a Bose--Einstein thermal operator (Sections~\ref{sec:free_energy_min} and~\ref{subsec:Bose--Einstein-thermal-operators}).
    
    \item \textbf{Thermodynamic approximation bounds and analytical derivatives (Section~\ref{sec:general_sdp}):} We establish two complementary approximation-error bounds between the original SDP and its finite-temperature Bose--Einstein formulation, including a bound that is independent of the global Hilbert-space dimension $d$ in the gapped regime (Section~\ref{subsec:Spectral-gap-based-approximation}). We also derive analytic expressions for the gradient and Hessian of the dual objective function, proving that the Hessian is negative semidefinite with bounded spectral norm (Section~\ref{subsec:Gradient-and-Hessian}). 
    
    \item \textbf{Quantum algorithms for bosonic thermal expectation values (Section~\ref{sec:quantum_algorithms}):} We propose hybrid quantum--classical algorithms for estimating the gradient and the Hessian matrix elements of the dual objective. These algorithms are based solely on Hamiltonian simulation, the Hadamard test, and classical sampling, and we bound the per-call quantum runtime in closed form.
    
    \item \textbf{End-to-end complexity and comparisons (Sections~\ref{sec:overall-complexity} and~\ref{sec:comparison}):} We construct stochastic gradient ascent and stochastic Newton schemes for the dual problem, with an end-to-end runtime guarantee for the first-order method and a per-iteration cost analysis for the second-order method (Section~\ref{sec:overall-complexity}). We then position the framework against classical interior-point methods and classical and quantum matrix multiplicative weights (MMW) SDP solvers, highlighting both its qualitative advantages and quantitative limitations (Section~\ref{sec:comparison}).
    
    \item \textbf{Bose--Einstein relative entropy (Section~\ref{sec:be-relative-entropy}):} We introduce $D_{\mathrm{BE}}(X\|Y)$ as a stand-alone information-theoretic primitive. We establish its faithfulness, unitary invariance, spectral expansion, and additivity under direct sums. We show that, although it does not satisfy the generalized data-processing inequality, it does satisfy a restricted monotonicity under the class of affine maps $Z \mapsto aZ + bI$ such that $2b+1 \geq a$, related to the attenuator, amplifier, and additive-noise bosonic Gaussian channels (Section~\ref{subsec:BE-affine-monotonicity}). We also identify the associated Bose--Einstein Fisher information matrix, establishing an information-geometric link between the dual Hessian and the curvature of the divergence.
\end{itemize}

Finally, in Section~\ref{sec:conclusion} we conclude with a discussion of open problems and future directions. All proofs of the main results are deferred to the appendices.

\section{Notation and preliminaries}
\label{sec:notation}

Throughout this paper, we employ the following notation and conventions. For every positive integer $c \in \mathbb{N}$, we use the shorthand $[c] \coloneqq \{1, \dots, c\}$. 

Let $\mathrm{Herm}_d$ denote the space of $d \times d$ Hermitian matrices. For every operator $X \in \mathrm{Herm}_d$, we write $X \geq 0$ to indicate that $X$ is positive semidefinite (having all non-negative eigenvalues), and $X > 0$ to indicate that $X$ is strictly positive definite (having all strictly positive eigenvalues). The set of $d \times d$ quantum density operators is denoted by $\mathcal{D}_d \coloneqq \{ \rho \in \mathrm{Herm}_d : \rho \geq 0, \Tr[\rho] = 1 \}$.

For a vector $\mu = (\mu_1, \dots, \mu_c) \in \mathbb{R}^c$ and a vector of scalars, $q = (q_1, \dots, q_c) \in \mathbb{R}^c$, the standard inner product is denoted by $\mu \cdot q \equiv \sum_{i=1}^c \mu_i q_i$. We extend this notation to vectors of operators: if $Q = (Q_1, \dots, Q_c)$ is a tuple of Hermitian operators, we define the scalar-operator inner product as $\mu \cdot Q \equiv \sum_{i=1}^c \mu_i Q_i$.

When evaluating operator sizes for a matrix $A$, we denote its spectral norm (maximum singular value) by $\|A\|$, and its trace norm by $\|A\|_1 \coloneqq \Tr[\sqrt{A^\dagger A}]$. In the quantum state model used in Section~\ref{sec:quantum_algorithms}, where an operator is decomposed as a linear combination of states $Q_i = \sum_k \alpha_{i,k} \rho_k$, we denote the $\ell_1$-norm of its classical coefficient vector by $\left\|\alpha_i\right\|_1 \coloneqq \sum_k |\alpha_{i,k}|$.

Finally, to analyze the runtime of our quantum algorithms, we use the standard Big-$\mathcal{O}$ notation to denote asymptotic upper bounds. Furthermore, we employ the Soft-$\mathcal{O}$ notation, denoted $\widetilde{\mathcal{O}}(\cdot)$, to suppress polylogarithmic factors in the physical parameters; that is, $f(x) = \widetilde{\mathcal{O}}(g(x))$ implies $f(x) = \mathcal{O}(g(x) \mathrm{polylog}(g(x)))$.

\section{General semidefinite optimization with Bose--Einstein operators}
\label{sec:general_sdp}

Let us begin by considering a general semidefinite optimization problem, in which we optimize a linear objective function over the intersection of the cone of positive semidefinite operators with an affine space. Because standard semidefinite programs (SDPs) are defined by the constraint $X \geq 0$ without an upper spectral bound, we demonstrate in Section~\ref{sec:free_energy_min} how this problem is naturally regularized by the Bose--Einstein entropy. As discussed in Section~\ref{subsec:Interpretation-bosonic}, this framework allows us to interpret the optimization variable $X$ as the expected occupation numbers of independent bosonic modes. Consequently, optimization problems in quantum information and many-body physics that involve unbounded positive operators can be solved natively within this thermodynamic paradigm.

\subsection{General semidefinite optimization problem}
\label{subsec:General-SDP-optimization}

Let $c, d \in \mathbb{N}$, let
\begin{equation}\label{eq:Q}
\mathcal{Q} \equiv (H, Q_1, \dots, Q_c)
\end{equation}
be a tuple of $d \times d$ Hermitian matrices, and let
\begin{equation}\label{eq:q}
q \equiv (q_1, \dots, q_c) \in \mathbb{R}^c.
\end{equation}
We can think of $H$ as a Hamiltonian and each Hermitian operator $Q_i$ as a non-commuting charge, such as electric charge, particle number, angular momentum, etc.

\begin{definition}
A semidefinite program in standard form is defined as follows:
\begin{equation}\label{eq:general_SDP}
    E(\mathcal{Q},q) \coloneqq \min_{X \geq 0} \left\{\Tr[HX] : \Tr[Q_i X] = q_i \quad \forall i \in [c]\right\},
\end{equation}
where the optimization is over the cone of $d \times d$ positive semidefinite operators $X \geq 0$.
\end{definition}

\begin{assumption}\label{ass:Slater}
We assume that the problem is strictly feasible (Slater's condition), meaning there exists an $X > 0$ such that $\Tr[Q_i X] = q_i$ for all $i \in [c]$~\cite[Section 5.2.3]{Boyd2004convex}. This ensures that strong duality holds and the dual optimum is attained.
\end{assumption}

\begin{proposition}\label{prop:dual_general_sdp}
If Assumption~\ref{ass:Slater} holds, then the optimal
value $E(\mathcal{Q}, q)$ in~\eqref{eq:general_SDP} can be expressed in terms of the following dual optimization problem: 
\begin{equation}\label{eq:dual_general_sdp}
    E(\mathcal{Q}, q) = \sup_{\mu \in \mathbb{R}^c} \left\{ f(\mu) : K_\mu \geq 0\right\} , 
\end{equation}
where the dual objective function $f(\mu)$ is defined as
\begin{equation}\label{eq:general_dual_obj_function}
    f(\mu) \coloneqq \mu \cdot q,
\end{equation}
and we have introduced the dual slack operator $K_\mu$, which takes the mathematical form of a grand canonical Hamiltonian in statistical mechanics~\cite{Jaynes1957, Pathria2011Statistical}:
\begin{equation}\label{eq:Kmu}
    K_\mu \coloneqq H - \mu \cdot Q,
\end{equation}
for which the linear matrix inequality $ K_\mu \geq 0$ must be satisfied. Furthermore, because the objective function $f(\mu)$ in~\eqref{eq:general_dual_obj_function} is linear in $\mu$ and the feasible region defined by $K_\mu \geq 0$ forms a convex set, the dual optimization problem constitutes a concave maximization problem.
\end{proposition}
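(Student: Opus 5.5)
We derive \eqref{eq:dual_general_sdp} by standard Lagrangian duality for conic programs and then invoke Assumption~\ref{ass:Slater} to close the gap. We introduce a multiplier $\mu_i \in \mathbb{R}$ for each equality constraint and form the Lagrangian
\begin{equation}
    L(X,\mu) \coloneqq \Tr[HX] - \sum_{i=1}^c \mu_i\left(\Tr[Q_i X] - q_i\right) = \mu\cdot q + \Tr[K_\mu X],
\end{equation}
with $K_\mu$ as in~\eqref{eq:Kmu}. Then $E(\mathcal{Q},q) = \inf_{X\geq 0}\sup_{\mu} L(X,\mu)$, because the supremum over $\mu$ is $+\infty$ unless every constraint holds. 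The dual function is $g(\mu) \coloneqq \inf_{X\geq 0} L(X,\mu)$, and we evaluate it explicitly.

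Suppose $K_\mu \geq 0$. Then $\Tr[K_\mu X]\geq 0$ for all $X\geq 0$, and the value $0$ is attained at $X=0$, so $g(\mu)=\mu\cdot q$. Suppose instead that $K_\mu$ has an eigenvector $|\psi\rangle$ with negative eigenvalue $\lambda<0$. Taking $X = t|\psi\rangle\!\langle\psi|$ with $t\to\infty$ gives $L = \mu\cdot q + t\lambda \to -\infty$. Hence $g(\mu)=\mu\cdot q$ on $\{K_\mu\geq 0\}$ and $g(\mu)=-\infty$ elsewhere. This already proves weak duality, $E(\mathcal{Q},q)\geq \sup\{f(\mu):K_\mu\geq 0\}$. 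The direct argument is just as easy: for feasible $X$ and dual-feasible $\mu$, we have $\Tr[HX]-\mu\cdot q = \Tr[K_\mu X]\geq 0$, since the trace of a product of two PSD operators is nonnegative.

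The main step is strong duality, which is where the Slater point enters. We would apply the conic strong duality theorem of~\cite[Section 5.2.3, 5.9]{Boyd2004convex}. The primal is convex: it has a linear objective, affine equality constraints, and a conic constraint over the self-dual cone of PSD matrices. By Assumption~\ref{ass:Slater} there is a strictly feasible $X>0$, which lies in the interior of the cone. The theorem then gives $E(\mathcal{Q},q) = \sup_\mu g(\mu)$, and the supremum is attained whenever $E(\mathcal{Q},q)$ is finite. The only subtlety is the case $E=-\infty$. In that case weak duality forces the dual feasible set to be empty, so both sides equal $-\infty$ with the convention $\sup\emptyset=-\infty$. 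If one wants a self-contained argument rather than citing the theorem, we would use a separating-hyperplane argument on the convex set $\{(\Tr[HX], (\Tr[Q_iX])_i) : X\geq 0\}$. The Slater point guarantees that the separating functional has a nonzero coefficient on the objective, and that coefficient yields the $\mu$.

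Concavity is immediate. The objective $\mu\mapsto\mu\cdot q$ is linear. The map $\mu\mapsto K_\mu$ is affine, so the feasible set is the preimage of the convex PSD cone under an affine map and is therefore convex. Equivalently, it is $\bigcap_{|\psi\rangle}\{\mu:\langle\psi|K_\mu|\psi\rangle\geq 0\}$, an intersection of half-spaces. Maximizing a linear function over a convex set is a concave maximization problem.
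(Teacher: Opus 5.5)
Your proposal is correct and follows essentially the same route as the paper. Both use the Lagrangian min--sup rewriting, exchange $\inf$ and $\sup$ via Assumption~\ref{ass:Slater} and the strong duality theorem of~\cite{Boyd2004convex}, evaluate the inner minimization over $X\geq 0$ explicitly (your case analysis is exactly the paper's Lemma~\ref{lem:unbounded-trace-min}), and get concavity from the linearity of $\mu\cdot q$ and the convexity of $\{\mu: K_\mu\geq 0\}$; your version is somewhat more careful than the paper's, since it isolates weak duality, writes $\inf$ rather than $\min$ on the primal side (primal strict feasibility yields dual attainment, not primal attainment), and treats the degenerate case $E(\mathcal{Q},q)=-\infty$ explicitly.
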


\begin{proof}
See Appendix~\ref{app:Proof_dual_gen_sdp} for a proof of the dual form in~\eqref{eq:dual_general_sdp} and the derivation of the dual feasibility constraints.
\end{proof}

Assuming strong duality holds (Assumption~\ref{ass:Slater}), the optimal primal solution $X^\star$ and the optimal dual slack operator $K_{\mu^\star} = H - \mu^\star \cdot Q \geq 0$ satisfy the complementary slackness condition~\cite[Section 5.5.2]{Boyd2004convex}, which states that the duality gap vanishes at optimality, yielding
\begin{equation}
    \Tr[K_{\mu^\star} X^\star] = 0.
\end{equation}
Because both $X^\star$ and $K_{\mu^\star}$ are positive semidefinite matrices, the trace of their product can be zero if and only if their standard matrix product is exactly the zero matrix:
\begin{equation}\label{eq:complementary_slackness_2}
    K_{\mu^\star} X^\star = 0.
\end{equation}
Since their product is zero, the two operators commute and can be simultaneously diagonalized. Let $\{|\phi_j\rangle\}_j$ be a common eigenbasis, where $\lambda_j \geq 0$ are the eigenvalues of the dual slack operator $K_{\mu^\star}$ (representing the energy cost of each mode), and $x_j \geq 0$ are the eigenvalues of $X^\star$ (representing the expected occupation numbers). The vanishing matrix product in~\eqref{eq:complementary_slackness_2} requires that
\begin{equation}\label{eq:orthogonality_slackness}
    \lambda_j \cdot x_j = 0
\end{equation}
for every mode $j$.
Physically, this imposes a strict selection rule: if a mode has a strictly positive energy cost ($\lambda_j > 0$), it must remain completely unoccupied ($x_j = 0$). An occupation number $x_j$ is permitted to be non-zero only if its corresponding energy cost is exactly zero ($\lambda_j = 0$). 

Consequently, the optimal operator $X^\star$ must be supported entirely within the ground-state subspace (the zero-eigenspace) of the Hamiltonian $K_{\mu^\star}$, and $X^\star$ takes the restricted diagonalized form:
\begin{equation}
    X^\star = \sum_{j: \lambda_j = 0} x_j |\phi_j\rangle\!\langle\phi_j|. \label{eq:optimal-X-structure}
\end{equation}

Similar to the Fermi--Dirac case~\cite{liu2026sdp_fermi}, this unregularized solution represents a sharp, non-smooth assignment: the occupation number $x_j$ can be non-zero only for modes with zero energy cost, while all modes with strictly positive energy must have zero occupation. In the following subsections, we introduce Bose--Einstein thermal operators, which smooth this sharp transition into a continuous distribution, analogous to how finite-temperature bosonic statistics smooth out ground-state condensation.

\subsection{Interpreting eigenmodes of positive semidefinite operators as independent bosons}
\label{subsec:Interpretation-bosonic}

A distinctive feature of the standard SDP in~\eqref{eq:general_SDP} is that the optimization variable $X\geq 0$ has eigenvalues that are unbounded from above and need not normalize to one. This is in contrast to the trace-normalized density operators handled by the Boltzmann framework~\cite{liu2025sdp} ($\Tr[\rho]=1$) and to the operator-bounded measurement operators of the Fermi--Dirac framework~\cite{liu2026sdp_fermi} ($0\leq M\leq I$). The unbounded nature of the spectrum of $X$ has a natural physical interpretation: it matches precisely the spectrum of expected occupation numbers of independent bosonic modes, for which no exclusion principle limits the number of particles that can share the same mode. This bosonic interpretation is what motivates the entire framework developed below. In particular, primal-feasible operators with arbitrarily large trace arise naturally in SDPs over the unbounded cone; physically, they correspond to the onset of Bose--Einstein condensation~\cite{Pathria2011Statistical,Anderson1995Observation,Davis1995BoseEinstein}. Our formalism accommodates such operators directly, without rescaling or imposing an a priori upper bound on $\Tr[X]$.

Consider a particular positive semidefinite operator $X$ with a spectral decomposition as follows:
\begin{equation}
X=\sum_{j=1}^{d}x_{j}|\phi_{j}\rangle\!\langle\phi_{j}|.
\end{equation}
Let us refer to the eigenvalue-eigenprojector pair $\left(x_{j},|\phi_{j}\rangle\!\langle\phi_{j}|\right)$ as an eigenmode. Due to the assumption that $X$ is a positive semidefinite operator, the constraint $x_{j}\geq0$ holds for all $j\in\left[d\right]$, with no upper bound on the value of $x_j$. As such, we can think of each of the $d$ eigenmodes as an independent bosonic mode, with $x_{j}\geq0$ corresponding to an expected occupation number of bosons occupying the $j$th eigenmode.

Consider furthermore that
\begin{align}
\Tr\!\left[HX\right] & =\sum_{j=1}^{d}x_{j}\langle\phi_{j}|H|\phi_{j}\rangle.\label{eq:total-expected-energy-bosons}
\end{align}
Here, we can think of $\langle\phi_{j}|H|\phi_{j}\rangle$ as the energy of a single boson when occupying the $j$th eigenmode, so that the expected energy of the $j$th bosonic mode is $x_{j}\langle\phi_{j}|H|\phi_{j}\rangle$. Thus, the total expected energy of all $d$ bosonic modes is $\Tr\!\left[HX\right]$. Furthermore, for all $i\in\left[c\right]$, we similarly have that
\begin{equation}
\Tr\!\left[Q_{i}X\right]=\sum_{j=1}^{d}x_{j}\langle\phi_{j}|Q_{i}|\phi_{j}\rangle.
\end{equation}
Here, each $\langle\phi_{j}|Q_{i}|\phi_{j}\rangle$ is the value of the $i$th charge for a single boson in the $j$th mode, so that the expected value of the $i$th charge for the $j$th mode is $x_{j}\langle\phi_{j}|Q_{i}|\phi_{j}\rangle$, and the total expected charge of the entire bosonic system is $\Tr\!\left[Q_{i}X\right]$. 

As such, with this perspective, the goal of the standard semidefinite optimization problem in~\eqref{eq:general_SDP} is to minimize the total expected energy of $d$ independent bosonic modes subject to a constraint on the total expected value of the $i$th charge, for each $i\in\left[c\right]$. The freedom in the optimization problem is to adjust the eigenmodes -- both the states $|\phi_j\rangle$ and their occupation numbers $x_j$ -- as desired to minimize the total expected energy, while satisfying the constraints dictated by the charges $Q_i$.

\subsection{Free energy minimization and Bose--Einstein operators}
\label{sec:free_energy_min}

Following the approach from~\cite{liu2025sdp,liu2026sdp_fermi}, we are motivated by the fact that real physical systems operate at a strictly positive temperature $T>0$, in which case the goal shifts to minimizing the constrained free energy rather than the constrained energy. Recall that the free energy is equal to the expected value of the energy minus the entropy scaled by the temperature. For the $j$th eigenmode, the free energy is thus
\begin{equation}
x_{j}\langle\phi_{j}|H|\phi_{j}\rangle-T\cdot g(x_{j}),
\end{equation}
where the scalar bosonic entropy $g(x)$ is defined for expected occupation numbers $x\geq0$ as
\begin{equation}
g(x)\coloneqq(x+1)\ln(x+1)-x\ln x.\label{eq:bosonic-entropy-def}
\end{equation}
This functional form of the bosonic entropy arises from the maximum entropy principle~\cite{Jaynes1957, Wehrl1978entropy}. Specifically, if one maximizes the Shannon entropy $-\sum_{n=0}^{\infty} p(n) \ln p(n)$ over all discrete probability distributions for bosonic occupation states ($n \in \{0, 1, 2, \dots\}$) subject to the constraint of a fixed mean occupation $\sum_{n=0}^{\infty} n p(n) = x$, the optimal distribution is geometric. Evaluating the Shannon entropy of this optimal geometric distribution yields the entropy $g(x)$ as defined in~\eqref{eq:bosonic-entropy-def}. See Appendix~\ref{app:be_entropy_thermal_state} for the full derivation.

The total free energy of the system is thus given by
\begin{equation}
\sum_{j=1}^{d}\left[x_{j}\langle\phi_{j}|H|\phi_{j}\rangle-Tg(x_{j})\right]=\Tr\!\left[HX\right]-T\cdot S_{\mathrm{BE}}(X),\label{eq:rewrite-free-energy-bosons}
\end{equation}
where the Bose--Einstein entropy of the positive semidefinite operator $X$ is defined as
\begin{equation}
S_{\mathrm{BE}}(X)\coloneqq\Tr\!\left[(X+I)\ln(X+I)-X\ln X\right].\label{eq:BE-entropy-def}
\end{equation}
This extends the scalar bosonic entropy function $g(x)$ defined in~\eqref{eq:bosonic-entropy-def} to positive semidefinite operators, and coincides with the von Neumann entropy of a non-interacting bosonic thermal state whose mean occupation numbers, in the eigenmodes of $X$, are the eigenvalues of $X$ (see Appendix~\ref{app:be_entropy_thermal_state} for an explicit derivation).
The equality in~\eqref{eq:rewrite-free-energy-bosons} follows from \eqref{eq:total-expected-energy-bosons} and the fact that $S_{\mathrm{BE}}(X)$ is unitarily invariant, depending only on the eigenvalues of $X$, so that
\begin{equation}
S_{\mathrm{BE}}(X)=\sum_{j=1}^{d}g(x_{j}).\label{eq:BE-entropy-sum-bosonic-entropies}
\end{equation}
Observe that the function $S_{\mathrm{BE}}(X)$ is strictly concave in $X$, stemming from the strict concavity of $g(x)$ for $x>0$. A detailed proof of the concavity can be found in Appendix~\ref{app:concavity_BEentropy}.

We can then modify the optimization
problem in~\eqref{eq:general_SDP} to become the following
free-energy optimization problem:
\begin{definition}
For $T > 0$, and $\mathcal{Q}$ and $q$ as given in~\eqref{eq:Q} and~\eqref{eq:q}, respectively, the operator free-energy optimization problem is as follows:
\begin{equation}\label{eq:sdp_primal_regularized}
F_T(\mathcal{Q}, q) \coloneqq  \min_{X \geq 0} \left\{
\begin{array}{c}
 \Tr[HX] - T \cdot S_{\mathrm{BE}}(X) : \\
 \Tr[Q_i X] = q_i \ \forall i \in [c]
\end{array}
\right\}.
\end{equation}
\end{definition}

By solving for the dual in this case, we arrive at the following:
\begin{theorem}\label{prop:dual_sdp_regularized}
Fix $T > 0$. Given $\mathcal{Q}$ and $q$ as given in~\eqref{eq:Q} and~\eqref{eq:q}, respectively, and under Assumption~\ref{ass:Slater}, the following equality holds:
\begin{equation}\label{eq:dual_def}
F_T(\mathcal{Q}, q) = \sup_{\mu \in \mathbb{R}^c} f_T(\mu),
\end{equation}
where the temperature-$T$ dual objective function $f_T(\mu)$ is defined as
\begin{equation}\label{eq:dual_obj_function}
f_T(\mu) \coloneqq \mu \cdot q + T \Tr\!\left[ \ln \!\left( I - e^{-\frac{1}{T}(H - \mu \cdot Q)} \right) \right].
\end{equation}
Furthermore, an optimal operator $X_T(\mu)$ for~\eqref{eq:sdp_primal_regularized} is a Bose--Einstein thermal operator of the following form:
\begin{equation}\label{eq:BE_operator}
X_T(\mu) \coloneqq \left( e^{\frac{1}{T}(H - \mu \cdot Q)} - I \right)^{-1}.
\end{equation}
The unique optimal operator for the primal regularized problem in~\eqref{eq:sdp_primal_regularized} is exactly $X_T(\mu_T^\star)$, where $\mu_T^\star$ is the optimal dual vector that achieves the supremum in~\eqref{eq:dual_def}.
\end{theorem}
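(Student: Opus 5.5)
The plan is to apply Lagrangian duality to the regularized primal~\eqref{eq:sdp_primal_regularized}, compute the inner minimization in closed form, and then use Slater's condition to close the duality gap.

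\textbf{Step 1: Lagrangian and weak duality.} Introduce multipliers $\mu\in\mathbb{R}^c$ for the equality constraints:
\begin{equation*}
L(X,\mu) = \Tr[HX] - T S_{\mathrm{BE}}(X) - \sum_i \mu_i(\Tr[Q_iX]-q_i) = \mu\cdot q + \Tr[K_\mu X] - T S_{\mathrm{BE}}(X).
\end{equation*}
Weak duality then gives $F_T \ge \sup_\mu \inf_{X\ge 0} L(X,\mu)$.

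\textbf{Step 2: the inner infimum.} We evaluate $\inf_{X\geq 0}\{\Tr[K_\mu X]-TS_{\mathrm{BE}}(X)\}$.
\begin{itemize}
\item \emph{Case $K_\mu\not>0$.} Take $X = t|\psi\rangle\!\langle\psi|$ with $\langle\psi|K_\mu|\psi\rangle\le 0$. Since $g(t)\sim \ln t\to\infty$, the infimum is $-\infty$. This matches the convention $\ln(I-e^{-K_\mu/T})=-\infty$ (or undefined) there, so $f_T=-\infty$ outside the open set $\{K_\mu>0\}$.
\item \emph{Case $K_\mu>0$.} The objective is strictly convex by the concavity of $S_{\mathrm{BE}}$ (Appendix~\ref{app:concavity_BEentropy}). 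The matrix derivative of $S_{\mathrm{BE}}$ at $X>0$ is $\ln(X+I)-\ln X$. Setting the gradient to zero gives $K_\mu = T\ln(I+X^{-1})$, i.e.\ $X = (e^{K_\mu/T}-I)^{-1}=X_T(\mu)$, which is strictly positive.
\item \emph{Boundary check.} The stationary point is interior, and $g'(x)\to\infty$ as $x\to 0^+$, so the boundary of the cone is never optimal.
\item \emph{Verification without differentiation.} A cleaner route is the Bregman identity
\begin{equation*}
\Tr[K_\mu X]-TS_{\mathrm{BE}}(X) = \Tr[K_\mu X_T]-TS_{\mathrm{BE}}(X_T) + T D_{\mathrm{BE}}(X\|X_T),
\end{equation*}
where $D_{\mathrm{BE}}\geq 0$ vanishes iff $X=X_T$; this is the Klein-type inequality for the convex function $-g$. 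This also yields uniqueness of the minimizer.
\item \emph{Value.} Plug in eigenvalues $\lambda_j>0$ of $K_\mu$ with $x_j=1/(e^{\lambda_j/T}-1)$. A scalar computation gives $\lambda_j x_j - T g(x_j) = T\ln(1-e^{-\lambda_j/T})$, using $\ln(x+1)-\ln x=\lambda/T$ and $\ln(x+1) = -\ln(1-e^{-\lambda/T})$. Summing over $j$ yields $f_T(\mu)$.
\end{itemize}

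\textbf{Step 3: strong duality (the main obstacle).} The primal is convex with affine constraints, and Assumption~\ref{ass:Slater} supplies a strictly positive feasible $X$, i.e.\ a point in the relative interior of the domain of $-S_{\mathrm{BE}}$. Slater's theorem for convex programs then gives $F_T=\sup_\mu f_T(\mu)$, and the dual optimum $\mu_T^\star$ is attained when the primal value is finite. Finiteness needs checking: $F_T>-\infty$ because $F_T\ge f_T(\mu)$ for any $\mu$ with $K_\mu>0$, and such a $\mu$ should exist. I would justify this via dual attainment for the unregularized SDP: boundedness of $E$ follows from Slater together with a strictly feasible dual, or I would assume dual strict feasibility implicitly. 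I expect this finiteness and attainment to be the delicate point. I would first try to argue directly that $f_T$ is concave and coercive on $\{K_\mu>0\}$, since $f_T\to-\infty$ at the boundary and linear growth is controlled by primal strict feasibility via $\mu\cdot q=\Tr[X_0(H-K_\mu)]$.

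\textbf{Step 4: uniqueness.} With attainment in hand, the KKT conditions show that the primal optimum is the minimizer of $L(\cdot,\mu_T^\star)$, namely $X_T(\mu_T^\star)$. Uniqueness follows from strict convexity of the primal objective, i.e.\ strict concavity of $S_{\mathrm{BE}}$.
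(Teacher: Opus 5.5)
Your Steps 1, 2 and 4 follow the paper's proof. The paper also forms the Lagrangian, exchanges $\min$ and $\sup$ by citing Slater's condition, and evaluates the inner minimum through Lemma~\ref{lem:opt-BE-free-energy}. For that lemma it gives exactly your two routes: the stationarity condition $K_\mu=T\ln(I+X^{-1})$, and the Bregman identity with $D_{\mathrm{BE}}(X\|X_T(\mu))\geq 0$. Your rank-one argument for $K_\mu\not>0$ is a useful addition. The paper's lemma only treats $A>0$ and tacitly sets $f_T=-\infty$ off the open set $\{\mu : K_\mu>0\}$.

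The point you leave open in Step 3 cannot be settled as you propose, because it does not follow from Assumption~\ref{ass:Slater}. Dual attainment for the unregularized SDP only gives $K_{\mu^\star}\geq 0$. Primal strict feasibility does not imply that any $\mu$ has $K_\mu>0$. For example, take $d=2$, $c=1$, $H=Q_1=|0\rangle\!\langle 0|$, $q_1=1$.
\begin{itemize}
\item $X=I$ is strictly feasible and $E(\mathcal{Q},q)=1$.
\item $K_\mu=(1-\mu)|0\rangle\!\langle 0|$ is singular for every $\mu$, so $f_T\equiv-\infty$.
\item $X=|0\rangle\!\langle 0|+s|1\rangle\!\langle 1|$ with $s\to\infty$ shows $F_T=-\infty$.
\end{itemize}
The equality $F_T=\sup_\mu f_T(\mu)$ still holds in $[-\infty,\infty)$ by weak duality. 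But there is no $\mu_T^\star$ and no optimal operator. So the final clause of the theorem needs an extra hypothesis such as strict dual feasibility: $K_{\mu_0}>0$ for some $\mu_0$. The paper's proof has the same hole, since it asserts dual attainment from Slater without checking that $F_T$ is finite.

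Under that hypothesis your plan goes through. First, $F_T\geq f_T(\mu_0)>-\infty$, so Slater for convex programs with affine constraints yields attainment. Your coercivity idea also works, for three reasons:
\begin{itemize}
\item $f_T(\mu)\leq \Tr[HX_0]-\lambda_{\min}(X_0)\Tr[K_\mu]$ on $\{K_\mu>0\}$;
\item $f_T\to-\infty$ at the boundary of that set;
\item $f_T$ is constant along directions $v$ with $v\cdot Q=0$, because $v\cdot q=\Tr[(v\cdot Q)X_0]=0$, and these directions leave $X_T(\mu)$ unchanged.
\end{itemize}
Finally, in Step 4 you need not presuppose a primal optimizer. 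The condition $\nabla f_T(\mu_T^\star)=0$ makes $X_T(\mu_T^\star)$ feasible with objective value $f_T(\mu_T^\star)=F_T$, and strict convexity then gives uniqueness.
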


\begin{proof}
The full derivation is provided in Appendix~\ref{sec:Proof-of-dual-be-reg}, where we present two independent proofs. The first (Appendix~\ref{app:proof_lemma_first_order}) relies on the first-order optimality conditions of a strictly concave objective function. The second (Appendix~\ref{app:proof_lemma_rel_entropy}) leverages the introduction of the Bose--Einstein relative entropy, whose definition and properties are discussed in Section~\ref{sec:be-relative-entropy}: by rewriting the objective function in~\eqref{eq:sdp_primal_regularized} in terms of this relative entropy, the optimal primal variable $X_T(\mu)$ emerges as the unique operator that drives the relative entropy to zero.
\end{proof}

\begin{remark}[Zero-temperature limit of the dual objective $f_T(\mu)$]
    \label{rem:zero_T_limit}
As a consequence of the scalar limit $\lim_{T\to0^{+}}T\ln(1-e^{-x/T})=0$ holding for all $x > 0$, we conclude that for every strictly feasible dual point (i.e., $\mu\in\mathbb{R}^{c}$ such that $H-\mu\cdot Q>0$), the temperature-$T$ dual objective function $f_{T}(\mu)$ in~\eqref{eq:dual_obj_function} converges exactly to the unregularized dual objective function $f(\mu)$ in~\eqref{eq:general_dual_obj_function} in the zero-temperature limit:
\begin{equation}
\lim_{T\to0^{+}}f_{T}(\mu)=f(\mu).\label{eq:temp-T-obj-to-zero-temp-obj-BE}
\end{equation}
Furthermore, as an eigenvalue $x$ of the operator $H-\mu\cdot Q$ approaches $0^+$, the corresponding scalar term $T\ln(1-e^{-x/T})$ diverges to $-\infty$. This asymptotic behavior recovers the hard feasibility barrier of the standard, unregularized SDP dual. In the exact SDP, the condition $H-\mu\cdot Q \geq 0$ is a strict constraint; in the Bose--Einstein regularized framework, this constraint is naturally enforced by the domain of the matrix logarithm, which requires $I - e^{-\frac{1}{T}(H-\mu\cdot Q)} > 0$ (and thus $H-\mu\cdot Q > 0$) to remain finite.
\end{remark}

In the standard unregularized semidefinite problem presented in Section~\ref{subsec:General-SDP-optimization}, we have discussed that the optimality conditions enforce strict complementary slackness, $K_{\mu^\star} X^\star = 0$, creating a sharp boundary where modes with positive energy must have exactly zero occupation. When the Bose--Einstein entropic regularizer is introduced as done in~\eqref{eq:sdp_primal_regularized}, the stationarity condition -- obtained by first-order optimality -- provides a different relationship. Setting the gradient of the regularized objective function in~\eqref{eq:sdp_primal_regularized} with respect to $X$ to zero (as done in~\eqref{step:der_reg_obj_fun} in the steps of the proof of Theorem~\ref{prop:dual_sdp_regularized}) yields the optimal primal operator $X_T(\mu_T^\star)$ defined in~\eqref{eq:BE_operator}, from which we can express the optimal dual slack operator $K_{\mu_T^\star}$ directly in terms of the primal operator:
\begin{equation}
    K_{\mu_T^\star} = T \ln\!\left(I + (X_T(\mu_T^\star))^{-1}\right).
\end{equation}
Thus, the complementary slackness condition for the regularized case is as follows:
\begin{equation} \label{eq:reg_comp_slackness_matrix}
    K_{\mu_T^\star} X_T(\mu_T^\star) = T X_T(\mu_T^\star) \ln\!\left(I + (X_T(\mu_T^\star))^{-1}\right).
\end{equation}
Unlike the unregularized case, the trace of this product -- representing the total expected energy penalty or duality gap -- is strictly positive for all $T>0$. Because $K_{\mu_T^\star}$ and $X_T(\mu_T^\star)$ commute, they share a common eigenbasis. For every mode $j$ with energy cost $\lambda_j$ and expected occupation $x_j$, the orthogonality $\lambda_j \cdot x_j = 0$ defined in~\eqref{eq:orthogonality_slackness} is replaced by a smooth scalar relation:
\begin{equation} \label{eq:reg_comp_slackness_eigen}
    \lambda_j \cdot x_j = T x_j \ln\!\left(1 + \frac{1}{x_j}\right).
\end{equation}

By forcing $\Tr[K_{\mu_T^\star} X_T(\mu_T^\star)] > 0$, the entropic regularizer acts as a barrier that prevents the solution from touching the sharp boundary of the positive semidefinite cone, keeping it strictly within the smooth interior. Physically, this relation clarifies the role of the temperature parameter $T$. For every strictly positive energy cost ($\lambda_j > 0$), the condition in~\eqref{eq:reg_comp_slackness_eigen} now permits a non-zero expected occupation $x_j > 0$. Consequently, the sharp, discontinuous constraint of the unregularized problem in~\eqref{eq:orthogonality_slackness} is smoothed into a continuous, differentiable distribution. This captures the exact mathematical mechanism by which finite-temperature bosonic statistics smooth out the macroscopic ground-state condensation of a Bose gas: thermal energy ($T>0$) allows particles to leak into higher-energy excited modes, incurring an energy penalty ($\lambda_j x_j > 0$) in exchange for maximizing the entropy of the distribution. As $T \to 0^+$, the right-hand side in~\eqref{eq:reg_comp_slackness_matrix} vanishes, recovering the strict unregularized condition $K_{\mu^\star} X^\star = 0$. 

Finally, by applying the scalar bound $0 \leq x \ln\!\left(1+1/x\right) \leq 1$ to each of the $d$ modes, we find that the regularized energy penalty for any single mode is strictly bounded by $T$, regardless of how large its expected occupation $x$ becomes. This guarantees the following upper bound on the duality gap: 
\begin{equation}
\Tr[K_{\mu_T^\star} X_T(\mu_T^\star)] \leq T \cdot d.    
\end{equation}

\subsection{Bose--Einstein thermal operators}
\label{subsec:Bose--Einstein-thermal-operators}

Let us highlight that the positive semidefinite operator $X_{T}(\mu)$ defined in~\eqref{eq:BE_operator} has the form of a Bose--Einstein distribution, and the corresponding quantity in the dual objective function in~\eqref{eq:dual_obj_function},
\begin{equation}
T\Tr\!\left[\ln\!\left(I-e^{-\frac{1}{T}\left(H-\mu\cdot Q\right)}\right)\right],
\end{equation}
is equal to the bosonic free energy. Additionally, because standard semidefinite programs do not impose an upper spectral bound on $X$, we consider also the operator $X_{T}(\mu)+I$ which appears naturally in the context of bosonic statistics: while $X_T(\mu)$ corresponds to the expected number of bosons in a given mode, the term $X_T(\mu)+I$ is proportional to the total rate of emission into that mode~\cite{Scully1997QuantumOptics}. Specifically, it is given by:
\begin{equation}
X_{T}(\mu)+I=\left(I-e^{-\frac{1}{T}\left(H-\mu\cdot Q\right)}\right)^{-1},\label{eq:other-outcome-BE}
\end{equation}
as a consequence of the following scalar algebraic identity
\begin{equation}
\frac{1}{e^{x}-1}+1=\frac{e^{x}}{e^{x}-1}=\frac{1}{1-e^{-x}}.\label{eq:scalar-ident-BE-compl}
\end{equation}
Thus, we refer to the operator $X_{T}(\mu)$ as a Bose--Einstein thermal operator and define it formally as follows:

\begin{definition}[Bose--Einstein thermal operator]
Let $d\in\mathbb{N}$, let $A>0$ be a $d\times d$ strictly positive definite Hermitian matrix, and let $T>0$ be a temperature. A Bose--Einstein thermal operator is a positive semidefinite operator $X_{T}(A) \geq 0$ defined as:
\begin{equation}
X_{T}(A)  \coloneqq\left(e^{\frac{A}{T}}-I\right)^{-1},\label{eq:gen-BE-op}
\end{equation}
satisfying
\begin{equation}
X_{T}(A)+I  =\left(I-e^{-\frac{A}{T}}\right)^{-1}.\label{eq:gen-BE-op-2}    
\end{equation}
\end{definition}

As discussed after Remark~\ref{rem:zero_T_limit}, for every finite temperature $T>0$, the thermal operator defined in~\eqref{eq:gen-BE-op} smooths out the sharp, discontinuous occupation rules of the unregularized problem as discussed around~\eqref{eq:optimal-X-structure}. Instead of a strict requirement where only zero-energy modes can be occupied, a mode with a strictly positive energy cost $\lambda > 0$ now receives a finite, non-zero expected occupation number given by the continuous function $(e^{\lambda/T}-1)^{-1}$. This continuous distribution assigns decaying weights to higher-energy modes, capturing the physics of thermal excitations in a finite-temperature Bose gas~\cite{Pathria2011Statistical}.

From this continuous distribution, we can recover the sharp unregularized behavior by taking the zero-temperature limit.
For a strictly feasible dual vector $\mu\in\mathbb{R}^{c}$, which means the grand canonical Hamiltonian in~\eqref{eq:Kmu} is strictly positive ($K_\mu \coloneqq H-\mu\cdot Q > 0$), the corresponding thermal operator vanishes in the zero-temperature limit:
\begin{equation}
\lim_{T\to0^{+}}X_{T}(\mu)=0, \label{eq:temp-T-op-to-zero-temp-op-BE}
\end{equation}
which stems from the scalar limit for $x>0$
\begin{equation}
\lim_{T\to0^{+}}\frac{1}{e^{\frac{x}{T}}-1}=0.
\end{equation}

Conversely, if the dual matrix operator $K_\mu$ becomes singular (meaning an eigenvalue of $K_\mu$ approaches $0$ from above), the corresponding expected occupation number diverges to infinity, since $\lim_{x\to0^{+}} (e^{x/T}-1)^{-1} = \infty$. Interestingly, this divergence corresponds physically to the onset of Bose--Einstein condensation into the ground state \cite{Pathria2011Statistical, Anderson1995Observation, Davis1995BoseEinstein}. Mathematically, it captures the feature of standard semidefinite programs for which the primal variable $X$ is bounded only from below, with no a priori upper bound on its trace.

Finally, for a fixed dual vector $\mu\in\mathbb{R}^{c}$, the operator grows linearly with $T$ in the infinite-temperature limit:
\begin{equation}
\lim_{T\to\infty}X_{T}(\mu)= T(H-\mu\cdot Q)^{-1}.
\end{equation}
This asymptotic scaling follows from the scalar Taylor expansion $e^{x/T} \approx 1 + x/T$ for large $T$, yielding:
\begin{equation}
\lim_{T\to\infty}\frac{1}{e^{\frac{x}{T}}-1}=\lim_{T\to\infty}\frac{T}{x}=\infty.
\end{equation}
Thus, the infinite-temperature limit results in an unbounded expected particle number, consequently overpowering the constraints imposed by the semidefinite program.

\subsection{Approximation error}
\label{subsec:Approximation-error}

In this section we bound the error introduced by replacing the original SDP in~\eqref{eq:general_SDP} with its Bose--Einstein regularization in~\eqref{eq:sdp_primal_regularized}, i.e., the gap between the optimal value $E(\mathcal{Q},q)$ of the original SDP and either the optimal regularized free energy $F_{T}(\mathcal{Q},q)$ or the unregularized energy of the optimal thermal operator, denoted $\tilde{f}_{T}(\mu_{T}^{\star})$ and defined precisely below. These bounds dictate how small the temperature $T$ must be chosen to reach a target precision $\varepsilon$, and hence they directly drive the runtime of the quantum algorithms in Section~\ref{sec:quantum_algorithms}.

We derive two complementary families of bounds:
\begin{enumerate}
    \item Entropy-based bounds (Section~\ref{subsec:Simple-uniform-approximation}), which apply uniformly over the feasible set and depend on the maximum Bose--Einstein entropy $S_{\max}$. They yield the simplest temperature schedule but, in the worst case, treat all $d$ modes on the same footing, reproducing the linear-in-$d$ duality gap of classical interior-point methods.
    \item Spectral-gap-based bounds (Section~\ref{subsec:Spectral-gap-based-approximation}), which exploit the spectrum of the dual slack operator $K_{\mu_{T}^{\star}}$ at optimality. They replace the dimension-dependent penalty by an exponentially suppressed contribution from excited modes, leaving only a linear term in the ground-space degeneracy~$d_0$.
\end{enumerate}

Before stating the bounds, we introduce a useful auxiliary quantity. The dual objective $f_T(\mu)$ in~\eqref{eq:dual_obj_function} can be rewritten as
\begin{equation}
f_{T}(\mu)=\mu\cdot q+\Tr\!\left[\left(H-\mu\cdot Q\right)X_{T}(\mu)\right]-T\,S_{\mathrm{BE}}(X_{T}(\mu)),
\end{equation}
using Lemma~\ref{lem:opt-BE-free-energy} in Appendix~\ref{sec:Proof-of-dual-be-reg} (see~\eqref{eq:obj_fun_RelEntropy} in particular). Removing the entropy term yields the \emph{unregularized energy}
\begin{equation}\label{eq:f_tilde}
\tilde{f}_{T}(\mu) \coloneqq \mu \cdot q + \Tr\!\left[\left(H-\mu \cdot Q\right)X_{T}(\mu)\right],
\end{equation}
so that $f_T(\mu)=\tilde f_T(\mu)-T\,S_{\mathrm{BE}}(X_T(\mu))$. When $\mu=\mu_T^\star$ is a dual optimizer, the first-order condition on $f_T$ (see~\eqref{eq:gradient-BE-gen-obj-func}) forces each equality constraint $\Tr[Q_i X_T(\mu_T^\star)] = q_i$ to be satisfied, so that the term linear in $\mu_T^\star$ cancels and
\begin{equation}\label{eq:approx_at_optimality}
\tilde{f}_{T}(\mu_T^\star) = \Tr\!\left[ H\,X_{T}(\mu_T^\star)\right].
\end{equation}
That is, $\tilde{f}_T(\mu_T^\star)$ is precisely the physical energy of the thermal operator produced by the algorithm. In contrast, the regularized free energy at optimality is
\begin{equation}\label{eq:F_T=f}
F_T(\mathcal{Q},q) = f_{T}(\mu_T^\star) = \Tr\!\left[ HX_{T}(\mu_T^\star)\right] - T\,S_{\mathrm{BE}}(X_{T}(\mu_T^\star)).
\end{equation}
Bounds on $\tilde{f}_T(\mu_T^\star)$ are therefore the relevant ones for hybrid quantum--classical algorithms, where the entropy correction is not directly observable but the energy is.

\subsubsection{Entropy-based approximation bounds}
\label{subsec:Simple-uniform-approximation}

Unlike the Fermi--Dirac framework, where the Pauli exclusion principle ($0\leq M\leq I$) implies a universal entropy bound $S_{\max}\leq d\ln 2$~\cite{liu2026sdp_fermi}, the Bose--Einstein entropy $S_{\mathrm{BE}}(X)$ is unbounded on the cone $X\geq 0$ because bosonic occupation numbers admit no a priori upper bound. To obtain uniform approximation bounds we therefore assume that the feasible set of the primal SDP is compact, and define
\begin{equation} \label{eq:S_max_def}
    S_{\max} \coloneqq \sup_{X \geq 0} \left\{ S_{\mathrm{BE}}(X) :   \Tr[Q_i X] = q_i \, \forall i \in [c] \right\}.
\end{equation}
For example, when the SDP fixes the total expected particle number $\Tr[X]=N$, Schur-concavity of the Bose--Einstein entropy gives $S_{\max}\leq d\,g(N/d)$, where $g(x)$ is the scalar bosonic entropy defined in~\eqref{eq:bosonic-entropy-def}. In this trace-fixed regime, the Bose--Einstein framework coincides (up to rescaling) with the Boltzmann (trace-one) formulation of~\cite{liu2025sdp} at $R=N$; the distinction is that here $N$ appears only as a problem-defined linear constraint $\Tr[X]=N$, rather than as an a priori upper bound to be guessed by the algorithm.

With this convention in place, the same temperature threshold $T\leq\varepsilon/S_{\max}$ yields a clean bound on both $F_T(\mathcal{Q},q)$ and $\tilde{f}_T(\mu_T^\star)$.

\begin{proposition}
\label{prop:simple-approx-bnd}
Fix $\varepsilon > 0$ and $T>0$. If $T \leq \frac{\varepsilon}{S_{\max}}$, then
\begin{equation}
E(\mathcal{Q},q) \geq F_{T}(\mathcal{Q},q) \geq E(\mathcal{Q},q) - \varepsilon. \label{eq:simple-approx-bnd}
\end{equation}
\end{proposition}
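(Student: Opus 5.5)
The plan is to sandwich the regularized primal value between the original SDP value and that value minus $T S_{\max}$, by comparing the two primal problems over the \emph{same} feasible set. Let $\mathcal{F} \coloneqq \{X \geq 0 : \Tr[Q_i X] = q_i\ \forall i\in[c]\}$ denote the common feasible set of~\eqref{eq:general_SDP} and~\eqref{eq:sdp_primal_regularized}. It is nonempty by Assumption~\ref{ass:Slater}, and compact by the standing convention under which $S_{\max}$ is defined. Compactness together with continuity guarantees that both minima are attained, so I can work with genuine minimizers rather than with infima.

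\textbf{Upper bound.} Take an optimal $X^\star \in \mathcal{F}$ for the original SDP, so that $\Tr[HX^\star] = E(\mathcal{Q},q)$. Since $X^\star$ is feasible for~\eqref{eq:sdp_primal_regularized}, we get $F_T(\mathcal{Q},q) \leq \Tr[HX^\star] - T\,S_{\mathrm{BE}}(X^\star)$. The key observation is that $S_{\mathrm{BE}}(X) = \sum_j g(x_j) \geq 0$ for every $X \geq 0$, by~\eqref{eq:BE-entropy-sum-bosonic-entropies}. To justify this, I note that $g(0)=0$ and $g'(x) = \ln(1+1/x) > 0$, so $g$ is nonnegative on $[0,\infty)$. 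It follows that $F_T(\mathcal{Q},q) \leq E(\mathcal{Q},q)$.

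\textbf{Lower bound.} For any $X \in \mathcal{F}$, the definition~\eqref{eq:S_max_def} gives $S_{\mathrm{BE}}(X) \leq S_{\max}$, and the definition~\eqref{eq:general_SDP} gives $\Tr[HX] \geq E(\mathcal{Q},q)$. Combining these, every feasible $X$ satisfies
\[
\Tr[HX] - T\,S_{\mathrm{BE}}(X) \geq E(\mathcal{Q},q) - T S_{\max} \geq E(\mathcal{Q},q) - \varepsilon,
\]
where the last step uses the hypothesis $T \leq \varepsilon/S_{\max}$. Minimizing the left-hand side over $\mathcal{F}$ yields $F_T(\mathcal{Q},q) \geq E(\mathcal{Q},q) - \varepsilon$.

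\textbf{Main obstacle.} The only delicate point is ensuring that $S_{\max}$ is finite and strictly positive, so that the threshold $\varepsilon/S_{\max}$ makes sense. Finiteness follows from compactness of $\mathcal{F}$ and continuity of $S_{\mathrm{BE}}$ on the cone, which in turn rests on continuity of $g$ at $0$. Positivity follows because Slater's condition supplies a strictly positive definite feasible $X$, and any such $X$ has $S_{\mathrm{BE}}(X) > 0$ since $g(x) > 0$ for $x > 0$. Duality (Theorem~\ref{prop:dual_sdp_regularized}) is not needed at all, because the argument is purely primal.
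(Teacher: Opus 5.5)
Your proposal is correct and follows the paper's proof essentially step for step. The upper bound comes from non-negativity of $S_{\mathrm{BE}}$ over the common feasible set, and the lower bound from combining $S_{\mathrm{BE}}(X)\le S_{\max}$ with $\Tr[HX]\ge E(\mathcal{Q},q)$ for every feasible $X$ and then minimizing. Your extra remarks on attainment of the minima and on $0<S_{\max}<\infty$ (via compactness and Slater's condition) are valid refinements that the paper leaves implicit.
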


\begin{proof}
See Appendix~\ref{sec:proof-simple-approx-bnd}. The proof leverages the non-negativity of the Bose--Einstein entropy and its supremum $S_{\max}$ over the feasible set.
\end{proof}

\begin{proposition}
\label{prop:no-BE-ent-simple-approx-err-bnd}
Fix $\varepsilon > 0$ and $T>0$. Let $\mu_{T}^{\star}\in\mathbb{R}^{c}$ be an optimal dual vector for the regularized problem $F_{T}(\mathcal{Q},q)$. If $T \leq \frac{\varepsilon}{S_{\max}}$, then
\begin{equation}
E(\mathcal{Q},q) \leq \tilde{f}_{T}(\mu_T^\star) \leq E(\mathcal{Q},q) + \varepsilon. \label{eq:simple-approx-bnd-no-BE-ent}
\end{equation}
\end{proposition}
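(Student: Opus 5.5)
The plan is to sandwich $\tilde f_T(\mu_T^\star)$ using primal feasibility of the optimal thermal operator together with Proposition~\ref{prop:simple-approx-bnd}.

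\textbf{Lower bound.} By Theorem~\ref{prop:dual_sdp_regularized}, the unique optimizer of the regularized problem is $X_T(\mu_T^\star)$. The first-order condition on $f_T$ forces $\Tr[Q_iX_T(\mu_T^\star)]=q_i$ for all $i$. This operator is also positive semidefinite, being of the form $(e^{K/T}-I)^{-1}$ with $K=K_{\mu_T^\star}>0$; strict positivity of $K$ holds because $f_T(\mu_T^\star)$ is finite, by Remark~\ref{rem:zero_T_limit}. Hence $X_T(\mu_T^\star)$ is feasible for the original SDP in~\eqref{eq:general_SDP}. Combined with~\eqref{eq:approx_at_optimality}, this gives
\begin{equation}
\tilde f_T(\mu_T^\star)=\Tr[HX_T(\mu_T^\star)]\geq E(\mathcal{Q},q).
\end{equation}

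\textbf{Upper bound.} By~\eqref{eq:F_T=f},
\begin{equation}
\tilde f_T(\mu_T^\star)=F_T(\mathcal{Q},q)+T\,S_{\mathrm{BE}}(X_T(\mu_T^\star)).
\end{equation}
Proposition~\ref{prop:simple-approx-bnd} gives $F_T(\mathcal{Q},q)\leq E(\mathcal{Q},q)$. More directly, for any feasible $X$ we have $F_T\leq \Tr[HX]-TS_{\mathrm{BE}}(X)\leq \Tr[HX]$, since $S_{\mathrm{BE}}\geq 0$ because $g\geq 0$ on $[0,\infty)$; taking the infimum over feasible $X$ yields the claim. Since $X_T(\mu_T^\star)$ is feasible, the definition~\eqref{eq:S_max_def} gives $S_{\mathrm{BE}}(X_T(\mu_T^\star))\leq S_{\max}$. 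Therefore
\begin{equation}
\tilde f_T(\mu_T^\star)\leq E(\mathcal{Q},q)+T S_{\max}\leq E(\mathcal{Q},q)+\varepsilon
\end{equation}
whenever $T\leq \varepsilon/S_{\max}$.

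\textbf{Main obstacle.} The step needing the most care is justifying the existence of the dual optimizer $\mu_T^\star$ and the resulting primal feasibility identity~\eqref{eq:approx_at_optimality}. The statement takes $\mu_T^\star$ as given, so I would invoke Theorem~\ref{prop:dual_sdp_regularized} and the stationarity condition $\nabla f_T(\mu_T^\star)=0$, whose $i$th component is $q_i-\Tr[Q_iX_T(\mu_T^\star)]$. The remaining steps are routine.
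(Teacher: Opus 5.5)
Your proof is correct and matches the paper's argument essentially step for step. The vanishing dual gradient makes $X_T(\mu_T^\star)$ exactly feasible, which gives the lower bound $\tilde f_T(\mu_T^\star)=\Tr[HX_T(\mu_T^\star)]\geq E(\mathcal{Q},q)$. For the upper bound, writing $\tilde f_T(\mu_T^\star)=F_T(\mathcal{Q},q)+T\,S_{\mathrm{BE}}(X_T(\mu_T^\star))$ and using $F_T\leq E$ (Proposition~\ref{prop:simple-approx-bnd}) together with $S_{\mathrm{BE}}(X_T(\mu_T^\star))\leq S_{\max}$ gives the claim. Your justification that $K_{\mu_T^\star}>0$, via finiteness of $f_T(\mu_T^\star)$, is a detail the paper makes explicit only later, in the proof of Proposition~\ref{prop:approx-error-spectral-gap}.
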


\begin{proof}
See Appendix~\ref{sec:no-BE-ent-simple-approx-err-bnd}. The proof relies on the fact that the dual penalty vanishes at optimality, yielding $\tilde{f}_T(\mu_T^\star) = \Tr[H X_T(\mu_T^\star)]$.
\end{proof}

Both bounds become uninformative when the feasible set is not compact (i.e., when $S_{\max}=+\infty$). For the unregularized energy $\tilde{f}_T(\mu_T^\star)$, however, a strictly weaker but assumption-free dimension-only bound is available, derived directly from regularized complementary slackness.

\begin{remark}[Dimension-only bound for the unregularized energy]
\label{rem:dimension-only-bound}
By the regularized complementary slackness identity in~\eqref{eq:reg_comp_slackness_eigen}, each eigenmode $j$ satisfies $\lambda_j x_j \leq T$. Summing over all $d$ modes yields $\Tr[K_{\mu_T^\star} X_T(\mu_T^\star)] \leq T \cdot d$. Combining this bound with weak duality, which establishes that $E(\mathcal{Q},q) \leq \tilde{f}_T(\mu_T^\star) \leq E(\mathcal{Q},q) + \Tr[K_{\mu_T^\star} X_T(\mu_T^\star)]$ (see Appendix~\ref{sec:proof-approx-error-spectral-gap} for the formal derivation), leads to the dimension-only bound:
\begin{equation}
E(\mathcal{Q},q) \leq \tilde{f}_{T}(\mu_T^\star) \leq E(\mathcal{Q},q) + T\cdot d. \label{eq:dim-only-bound}
\end{equation}
Hence, the temperature constraint $T\leq\varepsilon/d$ suffices to guarantee $|\tilde{f}_T(\mu_T^\star)-E(\mathcal{Q},q)|\leq\varepsilon$ without requiring compactness of the feasible set.
\end{remark}

The dimension-only bound in~\eqref{eq:dim-only-bound} recovers the duality gap of classical interior-point methods on the central path~\cite{Boyd2004convex,vandenberghe1996semidefinite}, where the barrier penalty (the analog $T$ here) must scale as $\varepsilon/d$ to reach precision $\varepsilon$. While consistent with classical theory, such a bound represents the worst-case regime: all $d$ modes are treated as if they occupy the bottom of the spectrum of $K_{\mu_T^\star}$, with each contributing a thermal fluctuation of order $T$. In the next subsection, we show that whenever $K_{\mu_T^\star}$ possesses a spectral gap, this bound can be sharpened.

\subsubsection{Spectral-gap-based approximation bound}
\label{subsec:Spectral-gap-based-approximation}

We now exploit the spectrum of $K_{\mu_T^\star}$ at optimality to derive a bound on $\tilde{f}_T(\mu_T^\star)$ that scales only with the spectral properties of the slack operator $K_{\mu_T^\star}$ rather than with the dimension $d$. The relevant spectral parameters are as follows:
\begin{enumerate}
    \item the minimum eigenvalue $\lambda_{\min}$ of $K_{\mu_T^\star}$,
    \item its degeneracy $d_0$ (the dimension of the ground space), and
    \item the spectral gap $\Delta$ separating the ground space from the excited spectrum. 
\end{enumerate}
As discussed below, the two parameters $\lambda_{\min}$ and $\Delta$ play conceptually different roles: $\Delta$ is a structural property of $K_{\mu_T^\star}$ that remains bounded away from zero in the gapped regime, while $\lambda_{\min}$ is forced to vanish along the trajectory $\mu_T^\star \to \mu^\star$ by complementary slackness.

\begin{proposition}
\label{prop:approx-error-spectral-gap}
Let $T>0$. Let $\mu_{T}^{\star}\in\mathbb{R}^{c}$ be an optimal dual vector for the regularized free energy $F_{T}(\mathcal{Q},q)$, and suppose that the optimal dual slack operator $K_{\mu_T^\star} \coloneqq H - \mu_T^\star \cdot Q > 0$ has a minimum eigenvalue $\lambda_{\min} > 0$ with degeneracy $d_0$ and a spectral gap $\Delta > 0$ separating the ground space from the remaining $d-d_0$ excited eigenvalues. Then
\begin{multline}
E(\mathcal{Q},q) \leq \tilde{f}_{T}(\mu_{T}^{\star}) \leq  E(\mathcal{Q},q)  \\
 + T\,d_0 + \left(d-d_0\right) \frac{\lambda_{\min} + \Delta}{e^{(\lambda_{\min} + \Delta)/T}-1}. 
\label{eq:duality_gap_spectral_bound}
\end{multline}
Moreover, for every precision $\varepsilon > 0$, the temperature constraint
\begin{equation}
T \leq \min \left\{ \frac{\varepsilon}{2 d_0}, \, \frac{\lambda_{\min} + \Delta}{\ln\!\left(1 + \frac{2(d-d_0)(\lambda_{\min} + \Delta)}{\varepsilon}\right)} \right\} \label{eq:temp_threshold_spectral}
\end{equation}
guarantees that $\left|\tilde{f}_{T}(\mu_T^\star) - E(\mathcal{Q},q)\right| \leq \varepsilon$.
\end{proposition}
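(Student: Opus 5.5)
The argument has three parts: a two-sided weak-duality sandwich, a mode-by-mode bound on the regularized duality gap $\Tr[K_{\mu_T^\star}X_T(\mu_T^\star)]$, and a temperature tuning step.

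\textbf{Step 1: the sandwich.} By~\eqref{eq:approx_at_optimality}, $\tilde f_T(\mu_T^\star)=\Tr[HX_T(\mu_T^\star)]$. The operator $X_T(\mu_T^\star)\geq 0$ satisfies the equality constraints, because the gradient of $f_T$ vanishes at the optimizer. It is therefore primal feasible for~\eqref{eq:general_SDP}, which gives $E(\mathcal{Q},q)\leq \tilde f_T(\mu_T^\star)$.

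For the upper bound, note that $K_{\mu_T^\star}>0$, so $\mu_T^\star$ is dual feasible in Proposition~\ref{prop:dual_general_sdp}. Hence $\mu_T^\star\cdot q\leq E(\mathcal{Q},q)$. Using the constraints again,
\begin{equation*}
\Tr[HX_T(\mu_T^\star)] = \mu_T^\star\cdot q + \Tr[K_{\mu_T^\star}X_T(\mu_T^\star)],
\end{equation*}
and therefore $\tilde f_T(\mu_T^\star)\leq E(\mathcal{Q},q)+\Tr[K_{\mu_T^\star}X_T(\mu_T^\star)]$.

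\textbf{Step 2: the gap.} Since $K_{\mu_T^\star}$ and $X_T(\mu_T^\star)$ commute, the trace splits over eigenvalues $\lambda_j$ of $K_{\mu_T^\star}$ as $\sum_j h(\lambda_j)$, with $h(\lambda)\coloneqq \lambda/(e^{\lambda/T}-1)$.

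For the $d_0$ ground modes, use $h(\lambda)\leq T$, which follows from $e^{x}-1\geq x$. This contributes at most $T d_0$.

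For the $d-d_0$ excited modes, each $\lambda_j\geq \lambda_{\min}+\Delta$. The function $h$ is decreasing on $(0,\infty)$: in the variable $x=\lambda/T$ it is $T\,x/(e^x-1)$, and $x/(e^x-1)$ is decreasing. Its derivative has the sign of $e^x(1-x)-1\leq 0$, which holds because $e^{-x}\geq 1-x$. Each excited term is therefore at most $h(\lambda_{\min}+\Delta)$, which gives~\eqref{eq:duality_gap_spectral_bound}. The monotonicity of $h$ is the only genuinely non-routine point, and this elementary inequality handles it.

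\textbf{Step 3: the temperature threshold.} It suffices to make each of the two error terms at most $\varepsilon/2$. The first condition, $T\leq \varepsilon/(2d_0)$, gives $Td_0\leq\varepsilon/2$.

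For the second term, write $a=\lambda_{\min}+\Delta$ and $m=d-d_0$. The requirement $m\,a/(e^{a/T}-1)\leq\varepsilon/2$ is equivalent to
\begin{equation*}
e^{a/T}\geq 1+\frac{2ma}{\varepsilon},
\end{equation*}
which is exactly the second entry of~\eqref{eq:temp_threshold_spectral}. Combined with the sandwich from Step 1, this yields $|\tilde f_T(\mu_T^\star)-E(\mathcal{Q},q)|\leq\varepsilon$.

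If $d_0=d$, the excited term is simply absent.
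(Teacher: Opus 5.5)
Your proposal is correct and follows the paper's proof essentially step for step. Weak duality at the strictly dual-feasible point $\mu_T^\star$ reduces the error to $\Tr[K_{\mu_T^\star}X_T(\mu_T^\star)]$, the trace is split into ground and excited modes, and each resulting term is set to at most $\varepsilon/2$. The only differences are cosmetic. You bound the ground modes via $e^x-1\geq x$, where the paper uses the equivalent inequality $x\ln(1+1/x)\leq 1$ from regularized complementary slackness. You also verify that $\lambda/(e^{\lambda/T}-1)$ is decreasing, which the paper only asserts.
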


\begin{proof}
See Appendix~\ref{sec:proof-approx-error-spectral-gap}. The proof exploits the fact that the duality gap at optimality simplifies to $\Tr[K_{\mu_T^\star} X_T(\mu_T^\star)]$, which we then bound term by term using the spectral decomposition of $K_{\mu_T^\star}$.
\end{proof}

The bound in~\eqref{eq:duality_gap_spectral_bound} separates the contributions of the ground space and the excited space of $K_{\mu_T^\star}$:
\begin{itemize}
\item \emph{Ground-space contribution} $T\,d_0$: a linear-in-$T$ penalty whose prefactor is the ground-space degeneracy $d_0$, not the ambient dimension $d$.
\item \emph{Excited-space contribution} $(d-d_0)\,\frac{\lambda_{\min}+\Delta}{e^{(\lambda_{\min}+\Delta)/T}-1}$: a sum over the $d-d_0$ excited modes, each thermally suppressed by a Boltzmann factor at energy $\lambda_{\min}+\Delta$.
\end{itemize}
In the low-temperature regime $T\ll\lambda_{\min}+\Delta$, the denominator of the excited-space term is dominated by the exponential, $e^{(\lambda_{\min}+\Delta)/T}-1 \approx e^{(\lambda_{\min}+\Delta)/T}$, so that
\begin{equation}
\begin{split}
    \tilde{f}_{T}(\mu_{T}^{\star}) &- E(\mathcal{Q},q) \leq T\,d_0 \\
    &+ \mathcal{O}\!\left((d-d_0)(\lambda_{\min}+\Delta)\,e^{-(\lambda_{\min}+\Delta)/T}\right).
\end{split}
\label{eq:duality_gap_spectral_bound_O}
\end{equation}
This low-temperature regime remains well-defined in the limit $T \to 0$, even though complementary slackness will force $\lambda_{\min}\to 0$ along the way. The reason is that the relevant scale in the bound is $\lambda_{\min}+\Delta$, and the gap $\Delta$ is a structural property of the dual slack operator at the unregularized optimum: by continuity of the spectrum, $\Delta(K_{\mu_T^\star}) \to \Delta(K_{\mu^\star}) > 0$ as $T \to 0$ in the gapped regime, so that $\lambda_{\min}+\Delta \to \Delta(K_{\mu^\star}) > 0$ and $T/(\lambda_{\min}+\Delta) \to 0$. The exponential suppression of the excited modes is therefore driven by the structural gap.

The excited-space contribution is therefore exponentially small in $(\lambda_{\min}+\Delta)/T$ and quickly becomes negligible compared to the ground-space term $T\,d_0$. The dominant cost of reaching precision $\varepsilon$ is then dictated by $d_0$ rather than by $d$:
\begin{itemize}
\item \emph{Non-degenerate ground state} ($d_0=1$): the schedule $T\sim\mathcal{O}(\varepsilon)$ suffices, independently of $d$.
\item \emph{Degenerate ground space} ($d_0>1$): the schedule $T\sim\mathcal{O}(\varepsilon/d_0)$ suffices. The penalty grows with the macroscopic degeneracy $d_0$, but still not with the ambient dimension $d$.
\end{itemize}
Compared to the dimension-only bound in~\eqref{eq:dim-only-bound}, the spectral-gap bound is tighter whenever $d_0\ll d$ and $\Delta\gtrsim T$. In particular, if $\Delta^{-1}$ and $d_0$ scale at most polynomially in $\ln d$, then $T$ scales polynomially in $\ln d$ rather than in $d$, an exponential improvement that classical IPMs cannot leverage because the logarithmic barrier treats every dimension equally.

This sharper guarantee comes with a structural trade-off between the approximation bound and the algorithmic cost in the low-temperature regime. To state it precisely, it is useful to distinguish two separate roles played by the spectral parameters of $K_{\mu_T^\star}$ along the optimization trajectory:
\begin{itemize}
    \item The \emph{approximation bound} in~\eqref{eq:duality_gap_spectral_bound} is governed by the spectral gap $\Delta$, which in the gapped regime is a fixed property of the unregularized optimum and remains bounded away from zero as $T\to 0$. The approximation analysis therefore remains valid in the low-temperature limit.
    \item The \emph{per-call quantum runtime} of the algorithms presented in Section~\ref{sec:quantum_algorithms}, in contrast, scales inverse-polynomially in the minimum eigenvalue $\lambda_{\min}$, which is forced to vanish by exact complementary slackness at $\mu^\star$: as $\varepsilon\to 0$ and hence $T\to 0$, we have $\mu_T^\star\to\mu^\star$ and $\lambda_{\min}(K_{\mu_T^\star})\to 0$.
\end{itemize}
A precise quantitative description of this $\lambda_{\min}$-vs-$T$ coupling, and its effect on the end-to-end $\varepsilon$-complexity, is the open problem we return to in Section~\ref{sec:conclusion}.

\subsection{Gradient and Hessian of dual objective function}

\label{subsec:Gradient-and-Hessian}

In this section, we derive analytical expressions for the gradient
and the Hessian of the objective function $f_{T}(\mu)$ defined in~\eqref{eq:dual_obj_function}.
These expressions are essential in developing gradient-ascent algorithms
for optimizing $f_{T}(\mu)$ and analyzing the performance of such
algorithms. We also prove that the Hessian of $f_{T}(\mu)$ is negative semidefinite and has bounded spectral norm, thus guaranteeing convergence of
gradient ascent. The expressions for the
Hessian matrix elements can be viewed as being analogous to those
reported in~\cite{Patel2025a,liu2025sdp,liu2026sdp_fermi}.

\begin{proposition}
\label{prop:gradient-BE-gen-obj-func}For $T>0$, the $i$th partial
derivative of the objective function $f_{T}(\mu)$ in~\eqref{eq:dual_obj_function}
is given by
\begin{equation}\label{eq:gradient-BE-gen-obj-func}
\frac{\partial}{\partial\mu_{i}}f_{T}(\mu)=q_{i}-\Tr\!\left[X_{T}(\mu)Q_{i}\right],
\end{equation}
where the operator $X_{T}(\mu)$ is defined in~\eqref{eq:BE_operator}.
\end{proposition}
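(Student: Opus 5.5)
The plan is to differentiate $f_T(\mu) = \mu\cdot q + T\,\Tr[\ln(I - e^{-K_\mu/T})]$ term by term on the open domain $\{\mu : K_\mu > 0\}$, where $K_\mu = H - \mu\cdot Q$ as in~\eqref{eq:Kmu}. The linear term contributes $q_i$ directly. For the trace term, we will use the standard fact that for a scalar function $h$ that is $C^1$ on an open interval containing the spectrum of a Hermitian operator $A(s)$ depending smoothly on a parameter $s$,
\begin{equation}
\frac{d}{ds}\Tr[h(A(s))] = \Tr\!\left[h'(A(s))\,\frac{dA(s)}{ds}\right].
\end{equation}
This identity holds even when $A(s)$ and $dA/ds$ do not commute. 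It follows, for instance, by proving it for monomials $h(x)=x^n$ using cyclicity of the trace, and then extending by polynomial approximation of $h$ on a compact neighborhood of the spectrum (or via the Daleckii--Krein formula, whose diagonal divided differences reduce to $h'$ under the trace).

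We will apply this with $h(x) = T\ln(1 - e^{-x/T})$, which is smooth on $(0,\infty)$ and contains the spectrum of $K_\mu$ whenever $K_\mu > 0$. Differentiating the scalar function gives
\begin{equation}
h'(x) = \frac{e^{-x/T}}{1 - e^{-x/T}} = \frac{1}{e^{x/T} - 1},
\end{equation}
so $h'(K_\mu) = (e^{K_\mu/T} - I)^{-1} = X_T(\mu)$ by~\eqref{eq:BE_operator}. Since $\partial K_\mu/\partial \mu_i = -Q_i$, the derivative of the trace term is $-\Tr[X_T(\mu)Q_i]$. Adding the contribution $q_i$ from the linear term yields~\eqref{eq:gradient-BE-gen-obj-func}.

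The main step needing care is justifying the trace-derivative identity when $Q_i$ does not commute with $K_\mu$. We will handle it through the monomial argument: $\frac{d}{ds}\Tr[A^n] = \sum_k \Tr[A^k A' A^{n-1-k}] = n\Tr[A^{n-1}A']$ by cyclicity. Uniform polynomial approximation of $h$ and $h'$ on a compact interval inside $(0,\infty)$ then passes the identity to the limit, and this step is valid locally because the spectrum of $K_\mu$ varies continuously in $\mu$. We also note that the domain $\{\mu : K_\mu>0\}$ is open, which ensures the partial derivatives are well defined there.
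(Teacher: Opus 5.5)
Your proof is correct. It rests on the same basic fact the paper uses, namely $\frac{d}{ds}\Tr[h(A(s))]=\Tr[h'(A(s))\,A'(s)]$ even when $A'(s)$ does not commute with $A(s)$; the difference is where you apply it.

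You apply it once, to the composite scalar function $h(x)=T\ln(1-e^{-x/T})$ with $A=K_\mu$, so $h'(K_\mu)=X_T(\mu)$ appears immediately. The paper applies the chain rule only to the outer logarithm, with $A=I-e^{-K_\mu/T}$. It then differentiates the inner exponential with Duhamel's formula. Cyclicity of the trace and the commutation of $e^{-tK_\mu/T}$ with $(I-e^{-K_\mu/T})^{-1}$ collapse the $t$-integral to $-\frac{1}{T}\Tr[e^{-K_\mu/T}(I-e^{-K_\mu/T})^{-1}Q_i]$. In effect, that collapse re-derives the trace chain rule for the exponential.

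What your route buys:
\begin{itemize}
\item It is shorter and avoids Duhamel entirely.
\item It justifies the trace chain rule, which the paper only cites, via monomials plus $C^1$ polynomial approximation. The simultaneous approximation of $h$ and $h'$ can be realized by approximating $h'$ uniformly on a compact subinterval of $(0,\infty)$ and integrating.
\item It notes that $\{\mu : K_\mu>0\}$ is open, so the partial derivatives are well defined.
\end{itemize}

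What the paper's route buys is the Duhamel machinery reused in the Hessian computation of Proposition~\ref{prop:hessian-BE-gen-obj-func}. There the quantity differentiated, $\Tr[X_T(\mu)Q_j]$, is not a trace of a function of $K_\mu$ alone. So the trace chain rule no longer applies, and the $s$-integral genuinely survives.
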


\begin{proof}
See Appendix~\ref{sec:Proof-of-gradient-BE_dual}.
\end{proof}

\begin{proposition}
\label{prop:hessian-BE-gen-obj-func}For all $c\in\mathbb{N}$, $T>0$,
and $i,j\in\left[c\right]$, the elements of the Hessian matrix $\nabla^{2}f_{T}(\mu)$
for the objective function $f_{T}(\mu)$ in~\eqref{eq:dual_obj_function}
are given by
\begin{align}
 & \frac{\partial^{2}}{\partial\mu_{i}\partial\mu_{j}}f_{T}(\mu)\nonumber \\
 & =-\frac{1}{T}\int_{0}^{1}ds\,\Tr\!\left[X_{T}(\mu,s)Q_{i}X_{T}(\mu,1-s)Q_{j}\right],\label{eq:1st-hessian-exp}\\
 & =-\frac{1}{T}\operatorname{Re}\!\left\{\Tr\!\left[X_{T}(\mu)\Phi_{\mu}(Q_{i})\left(X_{T}(\mu)+I\right)Q_{j}\right]\right\},\label{eq:2nd-hessian-exp}
\end{align}
where the operator $X_{T}(\mu,s)$ is defined for all
$s\in\left[0,1\right]$ as
\begin{equation}
X_{T}(\mu,s)\coloneqq\frac{e^{\frac{s}{T}\left(H-\mu\cdot Q\right)}}{e^{\frac{1}{T}\left(H-\mu\cdot Q\right)}-I},\label{eq:BEop-t-depend}
\end{equation}
and the quantum channel $\Phi_{\mu}$ and the high-peak tent probability
density $\gamma(t)$ are respectively defined as
\begin{align}
\Phi_{\mu}(A) & \coloneqq\int_{-\infty}^{\infty}dt\,\gamma(t)\,e^{-i\left(H-\mu\cdot Q\right)t/T}Ae^{i\left(H-\mu\cdot Q\right)t/T},\label{eq:Phi-channel}\\
\gamma(t) & \coloneqq\frac{2}{\pi}\ln\left|\coth\!\left(\frac{\pi t}{2}\right)\right|.\label{eq:high-peak-tent-prob-dens}
\end{align}
\end{proposition}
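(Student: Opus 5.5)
We differentiate the gradient formula of Proposition~\ref{prop:gradient-BE-gen-obj-func} once more, which gives
\[
\partial_{\mu_j}\partial_{\mu_i} f_T(\mu) = -\Tr[Q_i\,\partial_{\mu_j}X_T(\mu)].
\]
Write $K_\mu = H-\mu\cdot Q$ and $X_T(\mu)=F(K_\mu/T)$ with $F(x)=(e^x-1)^{-1}$, so that $\partial_{\mu_j}K_\mu=-Q_j$. To differentiate the operator function, we use the resolvent-type identity $\partial (A^{-1}) = -A^{-1}(\partial A)A^{-1}$ with $A=e^{K_\mu/T}-I$. Combined with Duhamel's formula
\[
\partial e^{K/T} = \frac{1}{T}\int_0^1 ds\, e^{sK/T}(\partial K)e^{(1-s)K/T},
\]
this yields
\begin{equation*}
\partial_{\mu_j}X_T(\mu) = \frac{1}{T}\int_0^1 ds\, (e^{K/T}-I)^{-1}e^{sK/T}Q_j e^{(1-s)K/T}(e^{K/T}-I)^{-1}.
\end{equation*}
All functions of $K_\mu$ commute, so the left and right factors combine into $X_T(\mu,s)$ and $X_T(\mu,1-s)$ as defined in~\eqref{eq:BEop-t-depend}. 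Inserting this into the trace and using cyclicity gives~\eqref{eq:1st-hessian-exp}. Well-definedness needs $K_\mu>0$, which is exactly the domain of $f_T$ (Remark~\ref{rem:zero_T_limit}), and differentiation under the integral is justified since everything is finite-dimensional and smooth there.

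For~\eqref{eq:2nd-hessian-exp}, we work in the eigenbasis $\{|k\rangle\}$ of $K_\mu$ with eigenvalues $\lambda_k$, and write $x_k=F(\lambda_k/T)$. Expression~\eqref{eq:1st-hessian-exp} becomes $-\frac1T\sum_{k,l}(Q_i)_{kl}(Q_j)_{lk}\,w(\lambda_k,\lambda_l)$ with the weight
\[
w(\lambda_k,\lambda_l) = \int_0^1 ds\, \frac{e^{s\lambda_k/T}e^{(1-s)\lambda_l/T}}{(e^{\lambda_k/T}-1)(e^{\lambda_l/T}-1)}
= \frac{T(x_l-x_k)}{\lambda_k-\lambda_l}.
\]
This divided difference comes from $\int_0^1 e^{s a+(1-s)b}ds=(e^a-e^b)/(a-b)$, together with the identity $\frac{e^a-e^b}{(e^a-1)(e^b-1)} = F(b)-F(a)$. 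In the same basis, the right-hand side of~\eqref{eq:2nd-hessian-exp} involves $\Phi_\mu(Q_i)_{kl}=\hat\gamma\big((\lambda_k-\lambda_l)/T\big)(Q_i)_{kl}$, where $\hat\gamma(\omega)=\int\gamma(t)e^{-i\omega t}dt$. So the trace equals $\sum_{k,l}x_k\hat\gamma(\omega_{kl})(Q_i)_{kl}(x_l+1)(Q_j)_{lk}$ with $\omega_{kl}=(\lambda_k-\lambda_l)/T$.

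It therefore suffices to establish the scalar identity
\[
x_k(x_l+1)\,\hat\gamma(\omega) = \frac{T(x_l-x_k)}{\lambda_k-\lambda_l}, \qquad \omega=(\lambda_k-\lambda_l)/T,
\]
possibly only after taking real parts and symmetrizing in $k\leftrightarrow l$. Symmetrization is legitimate because the Hessian is real symmetric and the summand pairs $(k,l),(l,k)$ are complex conjugates when $i=j$; in general we combine terms with the $\operatorname{Re}$. Using $F(b)-F(a)=F(a)(F(b)+1)(1-e^{b-a})$, which follows from $F(a)+1=e^aF(a)$, the right side equals $x_k(x_l+1)(e^{\omega}-1)/\omega$. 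Hence we need $\hat\gamma(\omega)=(e^\omega-1)/\omega$ up to the odd (imaginary) part. Indeed, the real part of $(e^\omega-1)/\omega$ after symmetrization is $\sinh(\omega)/\omega$ times an even factor, and the target is $\hat\gamma(\omega)=\tanh(\omega/2)/(\omega/2)$ symmetrized appropriately against $x_k(x_l+1)$. I expect the main obstacle to be this bookkeeping: identifying exactly which combination of $x_k(x_l+1)$ and $x_l(x_k+1)$ survives under $\operatorname{Re}$, and verifying the Fourier transform $\int\frac{2}{\pi}\ln|\coth(\pi t/2)|e^{-i\omega t}dt=\frac{\tanh(\omega/2)}{\omega/2}$. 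That transform is the known high-peak-tent identity used in~\cite{Patel2025a,liu2025sdp}, and I would cite it.

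With the transform in hand, the check is direct. Note that $x_k(x_l+1)e^{-\omega/2}$ is symmetric in $k,l$, since it equals $\big[(e^{\lambda_k/T}-1)(e^{\lambda_l/T}-1)\big]^{-1}e^{(\lambda_k+\lambda_l)/2T}$. Then
\[
x_k(x_l+1)\frac{\tanh(\omega/2)}{\omega/2} = \frac{2\,x_k(x_l+1)e^{-\omega/2}\sinh(\omega/2)}{\omega\cosh(\omega/2)},
\]
and this must be compared with $x_k(x_l+1)(e^\omega-1)/\omega = 2x_k(x_l+1)e^{\omega/2}\sinh(\omega/2)/\omega$. The two differ by the factor $e^{\omega}\cosh(\omega/2)$. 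So I anticipate that the correct pairing instead uses $\Phi_\mu$ with the opposite sign convention, or that the real-part symmetrization averages $e^{\pm\omega/2}$ into $\cosh$, which cancels it. Working this out carefully, including the time-direction convention in~\eqref{eq:Phi-channel}, is the step to verify. Finally, $\Phi_\mu$ is a channel because $\gamma$ is a probability density, and the $\operatorname{Re}$ is harmless because the first expression is manifestly real.
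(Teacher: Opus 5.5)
Your derivation of \eqref{eq:1st-hessian-exp} is correct and matches the paper's: the inverse-derivative identity plus Duhamel's formula, then commuting functions of $K_\mu$. The gap is in \eqref{eq:2nd-hessian-exp}, which you never actually establish.

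The computation where you stop contains slips that hide the mechanism. First, the identity should read $F(b)-F(a)=F(a)\,(F(b)+1)\,(e^{a-b}-1)$. Your version with $1-e^{b-a}$ is off by a factor $e^{b-a}$, although your conclusion $w(\lambda_k,\lambda_l)=x_k(x_l+1)(e^{\omega}-1)/\omega$ is right. Second, $\tanh(\omega/2)/(\omega/2)=2\sinh(\omega/2)/(\omega\cosh(\omega/2))$ carries no factor $e^{-\omega/2}$. So the pointwise discrepancy between $x_k(x_l+1)\hat\gamma(\omega)$ and $w(\lambda_k,\lambda_l)$ is $e^{\omega/2}\cosh(\omega/2)=(1+e^{\omega})/2$, not $e^{\omega}\cosh(\omega/2)$. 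Third, your suggestion of flipping the time direction in $\Phi_\mu$ is a red herring. Since $\gamma$ is even, $\hat\gamma$ is real and even, and the convention is irrelevant.

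The missing step is to compute what $\operatorname{Re}$ actually does. Write $X_T\equiv X_T(\mu)$. Because $Q_i,Q_j$ are Hermitian and $\hat\gamma$ is real and even, you can conjugate $\sum_{k,l}x_k(x_l+1)\hat\gamma(\omega_{kl})(Q_i)_{kl}(Q_j)_{lk}$ and relabel $k\leftrightarrow l$. This gives $\sum_{k,l}x_l(x_k+1)\hat\gamma(\omega_{kl})(Q_i)_{kl}(Q_j)_{lk}$, for every $i,j$ and not only $i=j$. Hence
\[
\operatorname{Re}\Tr\!\left[X_T\Phi_\mu(Q_i)(X_T+I)Q_j\right]=\sum_{k,l}\tfrac12\big[x_k(x_l+1)+x_l(x_k+1)\big]\hat\gamma(\omega_{kl})(Q_i)_{kl}(Q_j)_{lk}.
\]
Since $x_l(x_k+1)=e^{\omega_{kl}}x_k(x_l+1)$, the weight equals
\[
x_k(x_l+1)\,\frac{1+e^{\omega}}{2}\cdot\frac{2(e^{\omega}-1)}{\omega(e^{\omega}+1)}=w(\lambda_k,\lambda_l),
\]
which closes the argument. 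So the $\operatorname{Re}$ is not merely ``harmless''. For $i\neq j$ the trace without it generally has a nonzero imaginary part, and the symmetrization it performs is exactly what supplies the factor you found missing.

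The paper reaches the same conclusion at the operator level. It inserts the anticommutator form of the derivative of the exponential, $\partial_{\mu_i}e^{K_\mu/T}=-\frac{1}{2T}\{\Phi_\mu(Q_i),e^{K_\mu/T}\}$ (equivalent to your transform $\hat\gamma(\omega)=\tanh(\omega/2)/(\omega/2)$). It substitutes this into the intermediate expression $\Tr[X_T(\partial_{\mu_i}e^{K_\mu/T})X_TQ_j]$ and uses $e^{K_\mu/T}X_T=X_T+I$. The two resulting terms are complex conjugates of each other, which produces the $\operatorname{Re}$ directly. Your eigenbasis route is a valid alternative once the step above is supplied, but as written it does not prove \eqref{eq:2nd-hessian-exp}.
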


\begin{proof}
See Appendix~\ref{sec:Proof-of-Hessian-BE-gen-obj-func}.
\end{proof}

\begin{proposition}
\label{prop:hessian-NSD-spec-up-bnd}
For all $T>0$, the Hessian matrix
$\nabla^{2}f_{T}(\mu)$ is negative semidefinite; i.e., for all $v\in\mathbb{R}^{c}$,
\begin{equation}
v^{T}\nabla^{2}f_{T}(\mu)v\leq0,
\end{equation}
and its spectral norm is bounded as follows. Let $\mathcal{M}\subseteq\mathbb{R}^{c}$ be a parameter set representing the trajectory of the optimization algorithm, such that the dual slack operator is strictly feasible with a minimum eigenvalue bounded from below: 
\begin{equation}
\inf_{\mu\in\mathcal{M}}\lambda_{\min}(H-\mu\cdot Q)\geq\lambda_{\min}>0.    
\end{equation}
Then for all $\mu\in\mathcal{M}$,
\begin{equation}
    \left\Vert \nabla^{2}f_{T}(\mu)\right\Vert \leq L_{T}\coloneqq\frac{\bar{n}_{\lambda_{\min}}(\bar{n}_{\lambda_{\min}}+1)}{T}\sum_{i\in\left[c\right]}\left\Vert Q_{i}\right\Vert _{1}\left\Vert Q_{i}\right\Vert ,\label{eq:smoothness-parameter}
\end{equation}
where $\bar{n}_{\lambda_{\min}}$ denotes the worst-case Bose--Einstein occupation number:
\begin{equation}
\bar{n}_{\lambda_{\min}}\coloneqq (e^{\lambda_{\min}/T}-1)^{-1}.     
\end{equation}
\end{proposition}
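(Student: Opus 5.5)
The plan is to work throughout with the integral representation~\eqref{eq:1st-hessian-exp} of Proposition~\ref{prop:hessian-BE-gen-obj-func}, written in the eigenbasis of $K_\mu = H-\mu\cdot Q$. Strict feasibility $K_\mu>0$ is needed for $X_T(\mu)$, and hence $f_T$, to be defined, so I take it as given.

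\textbf{Negative semidefiniteness.} Fix $v\in\mathbb{R}^c$ and set $A\coloneqq v\cdot Q$, which is Hermitian. By bilinearity of~\eqref{eq:1st-hessian-exp},
\begin{equation*}
v^{T}\nabla^{2}f_{T}(\mu)v=-\frac{1}{T}\int_{0}^{1}ds\,\Tr\!\left[X_{T}(\mu,s)\,A\,X_{T}(\mu,1-s)\,A\right].
\end{equation*}
Write $K_\mu=\sum_k\lambda_k|k\rangle\!\langle k|$ with all $\lambda_k>0$. By~\eqref{eq:BEop-t-depend}, $X_T(\mu,s)$ is diagonal in this basis with entries
\begin{equation*}
x_k(s)\coloneqq \frac{e^{s\lambda_k/T}}{e^{\lambda_k/T}-1}>0 .
\end{equation*}
The trace then expands as
\begin{equation*}
\Tr\!\left[X_{T}(\mu,s)\,A\,X_{T}(\mu,1-s)\,A\right]=\sum_{k,l}x_k(s)\,x_l(1-s)\,|\langle k|A|l\rangle|^2 .
\end{equation*}
Every term is non-negative, so after integrating over $s$ we get $v^T\nabla^2 f_T(\mu)v\le 0$.

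\textbf{Spectral-norm bound.} Because $\nabla^2 f_T(\mu)$ is negative semidefinite, its spectral norm is at most the trace of $-\nabla^2 f_T(\mu)$. This gives
\begin{equation*}
\|\nabla^2 f_T(\mu)\|\le \sum_{i\in[c]}\frac{1}{T}\int_0^1 ds\,\Tr\!\left[X_T(\mu,s)\,Q_i\,X_T(\mu,1-s)\,Q_i\right].
\end{equation*}
For each term I will apply H\"older's inequality in the form $|\Tr[BCDE]|\le\|B\|\,\|C\|_1\,\|D\|\,\|E\|$. This bounds the $i$th integrand by
\begin{equation*}
\|X_T(\mu,s)\|\,\|X_T(\mu,1-s)\|\,\|Q_i\|_1\,\|Q_i\|.
\end{equation*}

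\textbf{Main step: a uniform bound on the operator norms.} It remains to bound $\|X_T(\mu,s)\|\,\|X_T(\mu,1-s)\|$ uniformly in $s\in[0,1]$ and $\mu\in\mathcal{M}$.

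1. I will show that for fixed $s\in[0,1]$ the scalar map $\lambda\mapsto e^{s\lambda/T}/(e^{\lambda/T}-1)$ is decreasing on $(0,\infty)$. Writing $a=\lambda/T$, the sign of the derivative is the sign of $s(e^{a}-1)-e^{a}$, and this is negative because $s\le 1$.

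2. Consequently $\|X_T(\mu,s)\|$ is attained at the smallest eigenvalue. Since $\lambda_{\min}(K_\mu)\ge\lambda_{\min}$ on $\mathcal{M}$, this gives $\|X_T(\mu,s)\|\le e^{s\lambda_{\min}/T}\,\bar n_{\lambda_{\min}}$.

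3. Multiplying the bounds for $s$ and $1-s$ gives $e^{\lambda_{\min}/T}\,\bar n_{\lambda_{\min}}^2$. By the identity~\eqref{eq:scalar-ident-BE-compl}, $e^{\lambda_{\min}/T}\,\bar n_{\lambda_{\min}}=\bar n_{\lambda_{\min}}+1$, so the product equals $\bar n_{\lambda_{\min}}(\bar n_{\lambda_{\min}}+1)$, independently of $s$.

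Integrating over $s\in[0,1]$ and summing over $i$ then yields exactly $L_T$ in~\eqref{eq:smoothness-parameter}. The only delicate point is this monotonicity in $\lambda$ combined with the exact cancellation in step 3, which is what makes the bound independent of $s$. The rest is routine.
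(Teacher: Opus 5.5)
Your proof is correct and follows the paper's argument essentially step for step: negative semidefiniteness from positivity of the trace integrand (you make it explicit in the eigenbasis of $K_\mu$, where the paper instead uses $W X_T(\mu,1-s) W \ge 0$ with $W=v\cdot Q$), then $\|\nabla^2 f_T(\mu)\| \le -\Tr[\nabla^2 f_T(\mu)]$, H\"older on the diagonal entries, and monotonicity of $\lambda\mapsto e^{s\lambda/T}/(e^{\lambda/T}-1)$, which gives the $s$-independent product $\bar n_{\lambda_{\min}}(\bar n_{\lambda_{\min}}+1)$. The only cosmetic difference is that the paper first bounds every entry $|[\nabla^2 f_T(\mu)]_{ij}|$, although it then uses only the diagonal ones, exactly as you do.
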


\begin{proof}
See Appendix~\ref{sec:hessian-NSD-spec-up-bnd}.
\end{proof}

\begin{remark}
\label{rem:dual-concave-smoothness-param}
As a consequence of Proposition~\ref{prop:hessian-NSD-spec-up-bnd}, the function $f_{T}(\mu)$ is concave in $\mu$ with smoothness parameter $L_{T}$, as defined in~\eqref{eq:smoothness-parameter}. This guarantees that gradient ascent with step size $\eta\in(0,1/L_{T}]$ converges to a point $\delta$-close to the global maximum of $f_{T}(\mu)$ in $\mathcal{O}(L_{T}\left\Vert \mu_{T}^{\star}\right\Vert^{2}/\delta)$ steps.
\end{remark}

Let us note that the smoothness parameter $L_{T}$ differs from its Fermi--Dirac counterpart~\cite[Eq.~(81)]{liu2026sdp_fermi} by the multiplicative prefactor $\bar{n}_{\lambda_{\min}}(\bar{n}_{\lambda_{\min}}+1)$. This term captures the unbounded nature of bosonic occupation numbers, depending on $\lambda_{\min}$, the margin of strict dual feasibility.
This reveals how the behavior of the algorithm changes depending on the distance to the positive semidefinite cone boundary:
\begin{itemize}
    \item At low temperatures ($T\ll\lambda_{\min}$), we have $\bar{n}_{\lambda_{\min}}\approx e^{-\lambda_{\min}/T}\to 0$, resulting in $L_{T}\approx e^{-\lambda_{\min}/T}/T$. This exponentially small smoothness parameter means that the gradient of $f_T(\mu)$ varies slowly with $\mu$: the dual objective is well-conditioned, the admissible step size $\eta\leq 1/L_T$ is correspondingly large, and the iteration count $\mathcal{O}(L_T\left\|\mu_T^\star\right\|^2/\delta)$ of Remark~\ref{rem:dual-concave-smoothness-param} is small. The favorable regime is therefore one of \emph{good conditioning}, provided the current dual iterate remains far from exact ground-state condensation.
    \item When $T \gg \lambda_{\min}$, a regime reached in our setting by driving $\lambda_{\min}\to 0$ along the trajectory to the unregularized optimum, we enter the heavily populated regime where $\bar{n}_{\lambda_{\min}}\approx T/\lambda_{\min}$, and consequently $L_{T}\sim \frac{T}{\lambda_{\min}^{2}}\sum_{i}\left\Vert Q_{i}\right\Vert _{1}\left\Vert Q_{i}\right\Vert$. As the optimization path approaches $\mu^\star$, the margin of dual feasibility vanishes ($\lambda_{\min} \to 0$) and the smoothness parameter diverges, forcing the algorithm to take proportionally smaller step sizes and increasing the iteration count.
\end{itemize}

The two regimes above are the algorithmic counterpart of the trade-off discussed at the end of Section~\ref{subsec:Spectral-gap-based-approximation}. There, the approximation bound stayed well-controlled as $T\to 0$ thanks to the structural gap $\Delta$, which remains bounded away from zero. The smoothness parameter $L_T$ has no analogous counterpart: it depends inverse-polynomially on $\lambda_{\min}$ alone, and so diverges as $\lambda_{\min}\to 0$. This divergence is the main obstacle to a clean end-to-end $\varepsilon$-complexity scaling polynomially in $\ln d$, and motivates the open problem of characterizing how $\lambda_{\min}$ scales with $T$ along the trajectory $\mu_T^\star\to\mu^\star$.

\subsection{Gradient ascent for optimizing Bose--Einstein thermal operators}
\label{subsec:classical-gradient-ascent}

The smoothness and concavity properties established in Section~\ref{subsec:Gradient-and-Hessian} guarantee that the dual problem~\eqref{eq:dual_def} can be solved by classical first-order optimization. We collect here the resulting deterministic gradient-ascent template, which serves both as a stand-alone classical algorithm (when the gradient $\nabla f_T(\mu)$ is computed exactly) and as the conceptual backbone for the hybrid quantum--classical algorithms developed in Section~\ref{sec:overall-complexity}.

Recall from Proposition~\ref{prop:gradient-BE-gen-obj-func} that the gradient of the dual objective takes the form
\begin{equation}
\frac{\partial}{\partial\mu_i}f_T(\mu) = q_i - \Tr[X_T(\mu)\,Q_i] \qquad \forall i\in[c],
\end{equation}
where $X_T(\mu)=(e^{K_\mu/T}-I)^{-1}$. The deterministic gradient-ascent update is then $\mu^{j}\leftarrow\mu^{j-1}+\eta\,\nabla f_T(\mu^{j-1})$, with the step size $\eta$ controlled by the smoothness constant $L_T$ of~\eqref{eq:smoothness-parameter}. The complete procedure is summarized in Algorithm~\ref{alg:classical-gradient-ascent}.

\begin{algorithm}
\caption{Gradient ascent for the Bose--Einstein dual}
\label{alg:classical-gradient-ascent}
\KwData{Target accuracy $\delta>0$; learning rate $\eta\in(0,1/L_T]$ with $L_T$ as in~\eqref{eq:smoothness-parameter}; total number of steps $J\geq L_T\left\|\mu_T^\star\right\|^2/\delta$.}
\KwResult{Estimate $\tilde f_T(\mu^J)$ of the optimal SDP value $E(\mathcal{Q},q)$.}
Choose a temperature $T>0$ small enough to guarantee an approximation error of at most $\delta/2$ via Proposition~\ref{prop:no-BE-ent-simple-approx-err-bnd}, Remark~\ref{rem:dimension-only-bound}, or Proposition~\ref{prop:approx-error-spectral-gap}.\;
Initialize $\mu^{0}\in\mathbb{R}^c$ (e.g., $\mu^{0}\leftarrow 0$).\;
\For{$j=1,\dots,J$}{
    For each $i\in[c]$, compute the gradient component $g_i^{(j-1)} \leftarrow q_i - \Tr[X_T(\mu^{j-1})\,Q_i]$.\;
    Update $\mu^{j}\leftarrow\mu^{j-1}+\eta\,g^{(j-1)}$.\;
}
\Return $\tilde f_T(\mu^J) = \mu^J\cdot q + \Tr[(H-\mu^J\cdot Q)X_T(\mu^J)] = \Tr[H\,X_T(\mu^J)] + \mu^J\cdot(q - \Tr[X_T(\mu^J)Q])$.\;
\end{algorithm}

By Remark~\ref{rem:dual-concave-smoothness-param}, the iteration count $J=\Theta(L_T\left\|\mu_T^\star\right\|^2/\delta)$ is sufficient to drive the suboptimality below $\delta/2$ in the dual; combined with the approximation-error budget of $\delta/2$ from the temperature schedule, the total error on the original SDP is at most $\delta$. We note that Algorithm~\ref{alg:classical-gradient-ascent} is a deterministic, classical procedure: it assumes that the thermal traces $\Tr[X_T(\mu)Q_i]$ are evaluated exactly, which is computationally feasible only when $d$ is small enough to compute the matrix exponential of $K_\mu$ classically. The hybrid quantum--classical version in Section~\ref{subsec:stochastic-GA} replaces these exact evaluations with the unbiased stochastic estimator of Algorithm~\ref{alg:BE-thermal-alg}.

The same dual-objective machinery also admits a second-order Newton scheme. Recalling the expression of the Hessian elements from Proposition~\ref{prop:hessian-BE-gen-obj-func}, the Newton update is therefore
\begin{equation}
\mu^j\leftarrow\mu^{j-1}-\eta\,[\nabla^2 f_T(\mu^{j-1})]^{-1}\nabla f_T(\mu^{j-1}),     
\end{equation}
and inherits the standard convergence guarantees of Newton's method on smooth concave objectives~\cite{Nesterov2006Cubic}. The classical Newton algorithm, thus, amounts to substituting the gradient update of Algorithm~\ref{alg:classical-gradient-ascent} with the Newton step, and its hybrid quantum--classical realization -- which employs the quantum estimator in Algorithm~\ref{alg:hessian-est-alg} -- is fully detailed in Algorithm~\ref{alg:q-stoch-newton}.

\section{Quantum algorithms for semidefinite programming with Bose--Einstein operators}
\label{sec:quantum_algorithms}

The classical gradient-ascent template of Algorithm~\ref{alg:classical-gradient-ascent} assumes exact access to the thermal traces $\Tr[X_T(\mu)Q_i]$ that define the gradient of the dual objective $f_T(\mu)$. For SDPs of practical interest, however, the Hilbert-space dimension $d$ is large, and these traces cannot be evaluated exactly on a classical computer. In this section, we develop hybrid quantum--classical algorithms that estimate the gradient and Hessian of $f_T(\mu)$ to a target additive precision $\varepsilon$, using only elementary quantum algorithmic primitives.

This section is organized as follows. Section~\ref{subsec:Quantum-algorithm-gradient} presents the quantum algorithm for estimating a single thermal trace $\Tr[X_T(\mu)Q_i]$ (Algorithm~\ref{alg:BE-thermal-alg}). The construction has three ingredients: (i) a geometric-series expansion of $X_T(\mu)=(e^{K_\mu/T}-I)^{-1}$ that exploits the strict positivity of the dual slack operator $K_\mu$; (ii) the observation that the characteristic function of a Cauchy distribution equals the desired exponential form, so that sampling a random evolution time $t$ converts the operator $e^{-mK_\mu/T}$ into the expectation of a unitary $e^{-itK_\mu}$ that a quantum computer can implement; and (iii) the Hadamard test, which extracts the real part of $\Tr[\rho_k e^{-itK_\mu}]$ under the standard linear-combination-of-states model for the input matrices. We prove the correctness of the resulting estimator and bound the per-call quantum gate runtime in closed form. Section~\ref{subsec:Quantum-algorithm-Hessian} extends the same ideas to a double geometric series for the elements of the Hessian matrix~\eqref{eq:1st-hessian-exp}, replacing the Hadamard test with a Hadamard test supplemented by a controlled-SWAP gate (Algorithm~\ref{alg:hessian-est-alg}). The end-to-end runtime guarantees for the full hybrid optimization, combining these per-call estimators with first- and second-order classical iteration counts, are deferred to Section~\ref{sec:overall-complexity}.

\subsection{Quantum algorithm for implementing Bose--Einstein thermal operators}
\label{subsec:Quantum-algorithm-gradient}

Evaluating the gradient of the dual objective function $f_T(\mu)$ in~\eqref{eq:gradient-BE-gen-obj-func} requires estimating the operator trace $\Tr[X_T(\mu) Q_i]$. 
Here, we present a quantum algorithm that achieves this using only basic quantum routines, specifically the Hadamard test and Hamiltonian simulation with the evolution time sampled randomly from a Cauchy distribution. By exploiting the fact that the operator $K_\mu = H - \mu \cdot Q$ is strictly positive definite, and hence has a strictly positive spectrum, we can expand the Bose--Einstein thermal operator $X_T(\mu)$ as a geometric series in unnormalized Boltzmann weights:
\begin{equation}
    X_T(\mu) = \left( e^{\frac{K_\mu}{T}} - I \right)^{-1} = \sum_{m=1}^{\infty} e^{-m K_\mu/T}. \label{eq:BE_geometric_series}
\end{equation}
By linearity, estimating the required trace thus reduces to evaluating a sum of individual components:
\begin{equation}
    \Tr[X_T(\mu) Q_i] = \sum_{m=1}^{\infty} \Tr\!\left[ Q_i e^{-m K_\mu/T} \right]. \label{eq:BE_trace_series}
\end{equation}

Because the eigenvalues of $K_\mu$ are strictly bounded from below by the minimum eigenvalue $\lambda_{\min} > 0$, the magnitude of the terms in the series decays exponentially as $\mathcal{O}(e^{-m \lambda_{\min} / T})$. Thus, to estimate the term $\Tr[X_T(\mu) Q_i]$, and in turn the gradient, to within a target precision $\varepsilon$, the infinite sum can be truncated at a finite order $M = \mathcal{O}(T \ln(1/\varepsilon) / \lambda_{\min})$ (see Appendix~\ref{app:complexity-analysis} for the derivation of the truncation order $M$), thus restricting the sum and having
\begin{equation}
    \Tr[X_T(\mu) Q_i] \approx \sum_{m=1}^{M} \Tr\!\left[ Q_i e^{-m K_\mu/T} \right]. \label{eq:BE_trace_series_truncated}
\end{equation}

To evaluate each term $\Tr[Q_i e^{-m K_\mu / T}]$ in~\eqref{eq:BE_trace_series_truncated} on a quantum computer, we assume that the matrices $H$ and $Q_i$ are given as a linear combination of states, also called quantum state model in the SDP literature~\cite{vanApeldoorn2019SDP}. Thus, we assume that
\begin{align}
    Q_{i} &= \sum_{k} \alpha_{i,k} \rho_{k}, \label{eq:state-model-2}\\
    K_\mu &= \sum_{k} h_{k} \rho_{k}, \label{eq:state-model-1}
\end{align}
where $h_k, \alpha_{i,k} \in \mathbb{R}$ and $\rho_k$ is a density matrix for all $k$ ($\rho_k \geq 0$, $\Tr[\rho_k] = 1$). 
Under this model, the terms in~\eqref{eq:BE_trace_series_truncated} can be decomposed by linearity as:
\begin{equation}
    \Tr\!\left[ Q_i e^{-m K_\mu/T} \right] = \sum_{k} \alpha_{i,k} \Tr\!\left[\rho_k e^{-m K_\mu/T} \right]. \label{eq:states_trace_linearity}
\end{equation}

Now, to transform the non-unitary weight $e^{-m K_\mu/T}$ in~\eqref{eq:states_trace_linearity} into a unitary operation implementable on a quantum computer, we exploit the fact that the characteristic function of a Cauchy distribution coincides with the non-unitary weight term. 
Let $t$ be a random variable drawn from a Cauchy distribution with scale parameter $\tau = m/T$, whose probability density function is 
\begin{equation}\label{eq:Cauchy_prob}
    p(t; \tau) \coloneqq \frac{\tau}{\pi(t^2 + \tau^2)}.
\end{equation}
For every strictly positive definite operator $K_\mu > 0$, the expected value of the unitary evolution operator $e^{-i t K_\mu}$ over this classical distribution evaluates exactly to the desired non-unitary thermal weight:
\begin{equation}
    \mathbb{E}_{t \sim p(t; \tau)} \left[ e^{-i t K_\mu} \right] = \int_{-\infty}^{\infty} dt \ p(t; \tau) e^{-i t K_\mu} = e^{-\tau K_\mu}. \label{eq:cauchy_integral}
\end{equation}

\begin{figure}[t]
    \begin{minipage}{\columnwidth}
    \centering
    \hspace{1cm}
    \scalebox{1.5}{
    \Qcircuit @C=1.2em @R=1.5em {
        \lstick{|0\rangle} & \gate{H} & \ctrl{1} & \gate{H} & \meter \\
        \lstick{\rho_k} & \qw & \gate{e^{-i t K_\mu}} & \qw & \qw 
        }
    }
    \end{minipage}
    \caption{The Hadamard test quantum circuit used in Algorithm~\ref{alg:BE-thermal-alg} to estimate the component $\operatorname{Re}\!\left\{\Tr \left[ \rho_k e^{-i t K_\mu}\right]\right\}$. The evolution time $t$ is sampled classically from a Cauchy distribution.}
    \label{fig:BE_circuit}
\end{figure}

Thus, the desired trace term $\Tr[X_T(\mu) Q_i]$ can be estimated as
\begin{equation}
    \Tr[X_T(\mu) Q_i] \approx \sum_{m=1}^{M} \sum_{k} \alpha_{i,k} \mathbb{E}_{t \sim p(t; \tau)} \Tr\!\left[\rho_k e^{-i t K_\mu} \right],
\end{equation}
where each term $\Tr\!\left[\rho_k e^{-i t K_\mu}\right]$ can be estimated by the Hadamard test quantum circuit in Figure~\ref{fig:BE_circuit}.

The quantum algorithm for estimating the gradient component $\Tr[X_T(\mu) Q_i]$ proceeds as described in Algorithm~\ref{alg:BE-thermal-alg}.

\begin{algorithm}
\caption{Quantum algorithm for estimating $\Tr[X_T(\mu) Q_i]$}
\label{alg:BE-thermal-alg}
\KwData{Minimum eigenvalue bound $\lambda_{\min}$, target precision $\varepsilon$, state coefficients $\{\alpha_{i,k}\}$, Hadamard test circuit in Figure~\ref{fig:BE_circuit}}
\KwResult{Estimate of $\Tr[X_T(\mu) Q_i]$ to precision $\mathcal{O}(\varepsilon)$}
\textbf{Series Truncation:} Determine the cutoff order $M = \widetilde{\mathcal{O}}(T/\lambda_{\min})$ based on the target precision $\varepsilon$.\;
\textbf{Sampling:} For each integer $m \in \{1, \dots, M\}$, determine the required number of Joint Monte Carlo shots $N_{\operatorname{shots}}^{(m)} = \mathcal{O}(M^2 \left\|\alpha_i\right\|_1^2 / \varepsilon^2)$. Draw $N_{\operatorname{shots}}^{(m)}$ independent sample pairs $(t, k)$, where the evolution time $t$ is drawn from the Cauchy distribution $p(t; m/T)$ and the state index $k$ is drawn with probability $p_k = |\alpha_{i,k}| / \left\|\alpha_i\right\|_1$.\;
\textbf{Hadamard test:} For each sampled pair $(t, k)$, run the Hadamard test in Figure~\ref{fig:BE_circuit} with the controlled version of the unitary $e^{-i t K_\mu}$ to estimate the component $\operatorname{Re}\{\Tr[\rho_k e^{-i t K_\mu}]\}$.\;
\textbf{Classical Aggregation:} For each $m$, weight the measurement outcomes by the estimator factor $\left\|\alpha_i\right\|_1 \operatorname{sgn}(\alpha_{i,k})$, average over the $N_{\operatorname{shots}}^{(m)}$ samples, and sum the final averaged results over all $m \in \{1, \dots, M\}$.
\end{algorithm}

\begin{proposition}
\label{prop:BE-algorithm-correctness}
Let $K_\mu > 0$. In the limit of infinite Joint Monte Carlo shots ($N_{\operatorname{shots}}^{(m)} \to \infty$ for all $m$) and infinite series depth ($M \to \infty$), the expected value of the output of Algorithm~\ref{alg:BE-thermal-alg} converges exactly to the trace $\Tr[X_T(\mu) Q_i]$. 
\end{proposition}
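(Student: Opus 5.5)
The plan is to compute the expected output of Algorithm~\ref{alg:BE-thermal-alg} directly, peeling the randomness in layers: Hadamard-test outcomes, then the sampled state index $k$, then the Cauchy time $t$. After that we pass to $M\to\infty$ using the geometric series~\eqref{eq:BE_geometric_series}.

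First, recall that the Hadamard test of Figure~\ref{fig:BE_circuit}, run with input $\rho_k$ and controlled unitary $e^{-itK_\mu}$, gives an outcome $\pm1$ whose mean is $\operatorname{Re}\{\Tr[\rho_k e^{-itK_\mu}]\}$. Conditioned on $(t,k)$, the weighted single-shot estimator $\|\alpha_i\|_1\operatorname{sgn}(\alpha_{i,k})\cdot(\pm1)$ therefore has mean $\|\alpha_i\|_1\operatorname{sgn}(\alpha_{i,k})\operatorname{Re}\{\Tr[\rho_k e^{-itK_\mu}]\}$. Averaging over $k$ with $p_k=|\alpha_{i,k}|/\|\alpha_i\|_1$ gives $\sum_k\alpha_{i,k}\operatorname{Re}\{\Tr[\rho_k e^{-itK_\mu}]\}=\operatorname{Re}\{\Tr[Q_ie^{-itK_\mu}]\}$ by~\eqref{eq:state-model-2}. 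Since $t$ and $k$ are drawn independently, the order of these conditional expectations is immaterial. Every single-shot estimator is bounded by $\|\alpha_i\|_1$, so the law of large numbers shows that the sample average for fixed $m$ converges (almost surely, hence also in expectation) to this mean as $N_{\operatorname{shots}}^{(m)}\to\infty$. In fact the empirical mean is already unbiased for any finite shot count.

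Next, take the expectation over $t\sim p(t;m/T)$. I will check~\eqref{eq:cauchy_integral} at the scalar level: for each eigenvalue $\lambda>0$ of $K_\mu$, the Cauchy characteristic function gives $\int p(t;\tau)e^{-it\lambda}\,dt=e^{-\tau|\lambda|}=e^{-\tau\lambda}$. The operator identity then follows from the spectral decomposition, because the integral is absolutely convergent and bounded in norm by $1$. This is where the hypothesis $K_\mu>0$ enters, since for a negative eigenvalue one would get $e^{-\tau|\lambda|}$ instead. Because $e^{-mK_\mu/T}$ is Hermitian and $Q_i$ is Hermitian, $\Tr[Q_ie^{-mK_\mu/T}]$ is real. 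Real part and expectation commute, since the integrand is bounded by $\|\alpha_i\|_1$ and dominated convergence applies. The $m$-th block therefore has expectation $\Tr[Q_ie^{-mK_\mu/T}]$, and the full output has expectation $\sum_{m=1}^M\Tr[Q_ie^{-mK_\mu/T}]$.

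Finally, let $M\to\infty$. Since $K_\mu\geq\lambda_{\min}I$ with $\lambda_{\min}>0$, we have $\|e^{-mK_\mu/T}\|\le e^{-m\lambda_{\min}/T}$. Hence $|\Tr[Q_ie^{-mK_\mu/T}]|\le\|Q_i\|_1e^{-m\lambda_{\min}/T}$, which is summable. The partial sums therefore converge, and by~\eqref{eq:BE_geometric_series}–\eqref{eq:BE_trace_series} the limit is $\Tr[X_T(\mu)Q_i]$. The geometric series itself is justified eigenvalue-wise: $\sum_{m\ge1}e^{-m\lambda/T}=(e^{\lambda/T}-1)^{-1}$ for $\lambda>0$.

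The only step requiring care is the interchange of expectation, real part, and infinite sums. In particular, the Cauchy distribution has no mean, yet the integrand $e^{-itK_\mu}$ is bounded, so expectations of the estimator exist. I will handle this by noting that every random quantity is uniformly bounded by $\|\alpha_i\|_1$, while the deterministic $m$-sum converges absolutely. The remaining steps are routine.
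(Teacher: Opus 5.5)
Your proposal is correct and follows essentially the same three-step argument as the paper. The steps are the Hadamard-test mean $\operatorname{Re}\{\Tr[\rho_k e^{-itK_\mu}]\}$, the joint average over $k$ and the Cauchy time via the characteristic function $e^{-\tau\lambda}$ (using $K_\mu>0$ and the realness of the trace), and the geometric-series reconstruction as $M\to\infty$; your extra justifications (the uniform bound $\|\alpha_i\|_1$ on the single-shot estimator, and absolute summability via $\|Q_i\|_1 e^{-m\lambda_{\min}/T}$) just make the interchanges of expectations and limits more explicit than the paper does.
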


\begin{proof}
See Appendix~\ref{sec:proof-BE-algorithm}. 
\end{proof}

In Appendix~\ref{app:complexity-analysis}, we analyze the computational complexity required to estimate the trace $\Tr[X_T(\mu) Q_i]$ to a target additive error $\varepsilon$. By synthesizing the sampling of the truncated geometric series, the Cauchy evolution times, and the sampled states into a single Joint Monte Carlo estimator, we establish the total quantum runtime, defined as the total number of two-qubit gates executed across all circuit repetitions. The complexity depends on the spectral properties of the dual slack operator and the 1-norms of the observables.

\begin{proposition}[Gradient Estimator Complexity]
\label{prop:gradient-complexity}
Fix a temperature $T>0$, and let $K_\mu > 0$ with minimum eigenvalue $\lambda_{\min} > 0$. To estimate the element $\Tr[X_T(\mu) Q_i]$ of the gradient defined in~\eqref{eq:gradient-BE-gen-obj-func} to an additive precision $\varepsilon$, the total quantum runtime of Algorithm~\ref{alg:BE-thermal-alg} scales asymptotically as:
\begin{equation}
    \text{Total Runtime} = \widetilde{\mathcal{O}}\!\left( \left\|\alpha_i\right\|_1^3 \left\|h\right\|_1 \frac{T^4}{\lambda_{\min}^5 \varepsilon^3} \right), \label{eq:main_complexity}
\end{equation}
where $T>0$ is the temperature parameter of the Bose--Einstein regularization in~\eqref{eq:sdp_primal_regularized}, $\left\|\alpha_i\right\|_1$ and $\left\|h\right\|_1$ are the 1-norms of the coefficient vectors for $Q_i$ and $K_\mu$ in the linear-combination-of-states decompositions~\eqref{eq:state-model-2} and~\eqref{eq:state-model-1} respectively, and the $\widetilde{\mathcal{O}}$ notation suppresses logarithmic factors.
\end{proposition}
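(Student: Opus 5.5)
The plan is to split the total additive error $\varepsilon$ into three budgets, each of order $\varepsilon/3$: (i) the truncation bias from cutting the geometric series~\eqref{eq:BE_trace_series} at order $M$; (ii) the bias from truncating the heavy-tailed Cauchy evolution times at a cutoff $t_{\max}$; and (iii) the statistical error of the Joint Monte Carlo average. I will then multiply the total number of shots by the worst-case gate cost of one controlled Hamiltonian simulation $e^{-itK_\mu}$ with $|t|\le t_{\max}$. Throughout, I work in the regime $T\gtrsim\lambda_{\min}$ that the paper emphasizes. In this regime $(1-e^{-\lambda_{\min}/T})^{-1}=\mathcal{O}(T/\lambda_{\min})$, and I will suppress logarithms.

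\textbf{Truncation order.} Since $\|e^{-mK_\mu/T}\|\le e^{-m\lambda_{\min}/T}$ and $|\Tr[Q_i A]|\le\|Q_i\|_1\|A\|\le\|\alpha_i\|_1\|A\|$, the tail satisfies
\[
\Bigl|\sum_{m>M}\Tr[Q_ie^{-mK_\mu/T}]\Bigr|\le\|\alpha_i\|_1\frac{e^{-(M+1)\lambda_{\min}/T}}{1-e^{-\lambda_{\min}/T}}.
\]
Requiring this to be at most $\varepsilon/3$ gives $M=\mathcal{O}\bigl((T/\lambda_{\min})\ln(\|\alpha_i\|_1T/(\lambda_{\min}\varepsilon))\bigr)=\widetilde{\mathcal{O}}(T/\lambda_{\min})$, which is the cutoff used in Algorithm~\ref{alg:BE-thermal-alg}.

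\textbf{Sampling.} For each $m$, one joint sample $(t,k)$ with a $\pm1$ Hadamard-test outcome, weighted by $\|\alpha_i\|_1\operatorname{sgn}(\alpha_{i,k})$, is a bounded random variable of magnitude at most $\|\alpha_i\|_1$. Its mean is $\Tr[Q_ie^{-mK_\mu/T}]$ by~\eqref{eq:cauchy_integral} and Proposition~\ref{prop:BE-algorithm-correctness}. Allotting error $\varepsilon/(3M)$ to each $m$ and applying Hoeffding's inequality, with a union bound over the $M$ values of $m$ costing only a log, gives $N^{(m)}_{\rm shots}=\widetilde{\mathcal{O}}(M^2\|\alpha_i\|_1^2/\varepsilon^2)$. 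The total shot count is therefore $\widetilde{\mathcal{O}}(M^3\|\alpha_i\|_1^2/\varepsilon^2)=\widetilde{\mathcal{O}}(T^3\|\alpha_i\|_1^2/(\lambda_{\min}^3\varepsilon^2))$.

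\textbf{Evolution-time cutoff (main obstacle).} The Cauchy law has no finite mean, so the expected simulation time is infinite. I will therefore resample, or treat as failure, any $|t|>t_{\max}$. The induced bias for a given $m$ is at most $\|\alpha_i\|_1\Pr[|t|>t_{\max}]\le 2\|\alpha_i\|_1\tau/(\pi t_{\max})$ with $\tau=m/T\le M/T$. Demanding this to be at most $\varepsilon/(3M)$ for every $m$ yields $t_{\max}=\mathcal{O}(M^2\|\alpha_i\|_1/(T\varepsilon))$. The delicate points are to check that the renormalized truncated distribution does not spoil the identity~\eqref{eq:cauchy_integral} beyond this bias, and to fix the simulation model. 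Under the linear-combination-of-states model~\eqref{eq:state-model-1}, I will invoke a Hamiltonian-simulation primitive (e.g.\ sample-based or density-matrix-exponentiation-style simulation) with two-qubit gate cost $\widetilde{\mathcal{O}}(\|h\|_1|t|)$, and then use the worst case $|t|\le t_{\max}$. This gives a per-shot cost of $\widetilde{\mathcal{O}}(\|h\|_1M^2\|\alpha_i\|_1/(T\varepsilon))=\widetilde{\mathcal{O}}(\|h\|_1\|\alpha_i\|_1T/(\lambda_{\min}^2\varepsilon))$. If only a primitive with quadratic time dependence is available, this step would have to be redone, and I expect it to be where the care lies. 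The simulation precision also needs its own budget, but that contributes only logarithmic factors.

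\textbf{Combining.} Multiplying the total shot count by the per-shot cost gives
\[
\widetilde{\mathcal{O}}\!\left(\frac{T^3\|\alpha_i\|_1^2}{\lambda_{\min}^3\varepsilon^2}\cdot\frac{\|h\|_1\|\alpha_i\|_1T}{\lambda_{\min}^2\varepsilon}\right)=\widetilde{\mathcal{O}}\!\left(\|\alpha_i\|_1^3\|h\|_1\frac{T^4}{\lambda_{\min}^5\varepsilon^3}\right),
\]
which is~\eqref{eq:main_complexity}. The three error contributions sum to at most $\varepsilon$, with the stated failure probability absorbed into the logarithmic factors.
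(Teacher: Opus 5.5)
Your proposal is correct and follows the paper's proof essentially step for step: the same $\varepsilon/3$ split among series truncation, Cauchy-time truncation and sampling error, the same $M=\widetilde{\mathcal{O}}(T/\lambda_{\min})$, the same per-term shot count $\mathcal{O}(M^2\|\alpha_i\|_1^2/\varepsilon^2)$, and a gate cost linear in $\|h\|_1 t_{\max}$, which together give $\mathcal{O}(M^5\|\alpha_i\|_1^3\|h\|_1/(T\varepsilon^3))$. The only difference is that you use one worst-case cutoff $t_{\max}=\mathcal{O}(M^2\|\alpha_i\|_1/(T\varepsilon))$ instead of the paper's $m$-dependent $t_{\max}^{(m)}=\mathcal{O}(mM\|\alpha_i\|_1/(T\varepsilon))$ summed via $\sum_m m=\Theta(M^2)$, which changes only constants; your explicit restriction to $T\gtrsim\lambda_{\min}$ is the same assumption the paper makes implicitly when it substitutes $M=\widetilde{\mathcal{O}}(T/\lambda_{\min})$.
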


\begin{proof}
See Appendix~\ref{app:complexity-analysis}.
\end{proof}

\subsection{Quantum algorithm for estimating Hessian matrix elements}
\label{subsec:Quantum-algorithm-Hessian}

Here we provide a quantum algorithm for estimating the elements of the Hessian matrix $H_{ij} = \frac{\partial^{2}}{\partial\mu_{i}\partial\mu_{j}}f_{T}(\mu)$ defined in Proposition~\ref{prop:hessian-BE-gen-obj-func}.

As in the quantum algorithm for estimating the gradient elements, by utilizing the strict positivity of the dual slack operator $K_\mu$, we can evaluate the Hessian elements expression in~\eqref{eq:1st-hessian-exp} via a double geometric series. Expanding the parameter-dependent operators, $X_T(\mu, s)$ and  $X_T(\mu, 1-s)$ defined in~\eqref{eq:BEop-t-depend}, using the geometric identity $(e^{K_\mu/T}-I)^{-1} = \sum_{m=1}^{\infty} e^{-m K_\mu / T}$ yields:
\begin{align}
    X_T(\mu, s) &= e^{\frac{s}{T} K_\mu} \sum_{m_1=1}^{\infty} e^{-m_1 \frac{K_\mu}{T}} \\
    & = \sum_{m_1=1}^{\infty} e^{-\tau_1(m_1, s) K_\mu},
\end{align}
where $\tau_1(m_1, s) = \frac{m_1-s}{T}$, and
\begin{align}
    X_T(\mu, 1-s) &= e^{\frac{1-s}{T} K_\mu} \sum_{m_2=1}^{\infty} e^{-m_2 \frac{K_\mu}{T}} \\
    & = \sum_{m_2=1}^{\infty} e^{-\tau_2(m_2, s) K_\mu},
\end{align}
where $\tau_2(m_2, s) = \frac{m_2-1+s}{T}$.
Under the quantum state model where $Q_i = \sum_k \alpha_{i,k}\rho_k$ and $Q_j = \sum_l \alpha_{j,l}\sigma_l$, the Hessian element defined in~\eqref{eq:1st-hessian-exp} becomes:
\begin{align}
    \begin{split}
        H_{ij} = -\frac{1}{T} & \int_{0}^{1}ds  \sum_{m_1=1}^{\infty}  \sum_{m_2=1}^{\infty} \sum_{k,l} \alpha_{i,k} \alpha_{j,l} \times \\
    &  \Tr\!\left[e^{-\tau_1(m_1,s) K_\mu}\rho_k e^{-\tau_2(m_2,s) K_\mu}\sigma_l\right].
    \end{split}
\end{align}

By sampling $s$ uniformly from $[0,1]$ and converting the non-unitary weights $e^{-\tau_1 K_\mu}$ and $e^{-\tau_2 K_\mu}$ into Cauchy-distributed unitary time-evolutions $U_1 = e^{-i t_1 K_\mu}$ and $U_2 = e^{-i t_2 K_\mu}$ respectively, the required unitary trace is $\operatorname{Re}\!\left\{\Tr[U_1 \rho_k U_2 \sigma_l]\right\}$. This can be estimated by a Hadamard test supplemented with a controlled-SWAP gate, as depicted in Figure~\ref{fig:hessian_circuit}.

\begin{figure}[t]
    \begin{minipage}{\columnwidth}
    \centering
    \scalebox{1.5}{
    \Qcircuit @C=0.45em @R=1.em {
        \lstick{|0\rangle} & \gate{H} & \ctrl{1} & \ctrl{1} & \ctrl{2} & \gate{H} & \meter \\
        \lstick{\rho_k} & \qw & \qswap & \gate{e^{-i t_2 K_\mu}} & \qw & \qw & \qw \\
        \lstick{\sigma_l} & \qw & \qswap \qwx & \qw & \gate{e^{-i t_1 K_\mu}} & \qw & \qw 
        }
    }
    \end{minipage}
    \caption{The quantum circuit for estimating the unitary trace overlap $\operatorname{Re}\!\left\{\Tr[e^{-i t_1 K_\mu} \rho_k e^{-i t_2 K_\mu} \sigma_l]\right\}$ required for the Hessian element. The control qubit dictates a SWAP gate followed by independent Hamiltonian time-evolutions on the two data registers.}
    \label{fig:hessian_circuit}
\end{figure}

The full procedure is detailed in Algorithm~\ref{alg:hessian-est-alg}, and the details of the correctness of the algorithm are provided in Appendix~\ref{sec:proof-hessian-algorithm}.

\begin{algorithm}
\caption{Quantum algorithm for estimating the Hessian element $H_{ij}$}
\label{alg:hessian-est-alg}
\KwData{Minimum eigenvalue bound $\lambda_{\min}$, target precision $\varepsilon$, state coefficients $\{\alpha_{i,k}\}$ and $\{\alpha_{j,l}\}$, Hessian circuit in Figure~\ref{fig:hessian_circuit}}
\KwResult{Estimate of $H_{ij}$ to precision $\mathcal{O}(\varepsilon)$}
\textbf{Series Truncation:} Determine a cutoff order $M$ based on $\lambda_{\min}$ and $\varepsilon$ for the double sum over $m_1$ and $m_2$\;
\textbf{Joint Sampling:} For each pair $(m_1, m_2) \in \{1, \dots, M\}^2$, determine the required number of Joint Monte Carlo shots $N_{\operatorname{shots}}^{(m_1, m_2)}$. Draw $N_{\operatorname{shots}}^{(m_1, m_2)}$ independent tuples $(s, t_1, t_2, k, l)$ as follows:\;
\Indp
    Sample $s \sim \text{Uniform}(0, 1)$\;
    Compute scale parameters $\tau_1 = (m_1-s)/T$ and $\tau_2 = (m_2-1+s)/T$\;
    Sample evolution times $t_1 \sim p(t_1; \tau_1)$ and $t_2 \sim p(t_2; \tau_2)$ from Cauchy distributions as defined in~\eqref{eq:Cauchy_prob}\;
    Sample state indices $k$ and $l$ with probabilities $p_i(k) = |\alpha_{i,k}| / \left\|\alpha_i\right\|_1$ and $p_j(l) = |\alpha_{j,l}| / \left\|\alpha_j\right\|_1$\;
\Indm
\textbf{Quantum Evaluation:} For each sampled tuple $(s, t_1, t_2, k, l)$, execute the quantum circuit in Figure~\ref{fig:hessian_circuit}. Prepare the data registers in $\rho_k$ and $\sigma_l$, apply the controlled-SWAP and controlled-evolutions $e^{-i t_2 K_\mu}$ and $e^{-i t_1 K_\mu}$, and measure the control qubit in the $Z$-basis to obtain outcome $Z \in \{+1, -1\}$.\;
\textbf{Classical Aggregation:} For each shot, compute the unbiased estimator $Y = \left\|\alpha_{i}\right\|_1 \left\|\alpha_{j}\right\|_1 \operatorname{sgn}(\alpha_{i,k}\alpha_{j,l}) Z$. Average $Y$ over the $N_{\operatorname{shots}}^{(m_1, m_2)}$ samples, sum the averages over all $m_1, m_2 \in \{1, \dots, M\}$, and multiply the final total by $-1/T$.
\end{algorithm}

In Appendix~\ref{app:complexity_hessian}, we analyze the computational complexity required to estimate the Hessian matrix element $H_{ij}$ to a target additive error $\varepsilon$. By constructing a double Joint Monte Carlo estimator over the truncated geometric series and Cauchy distributions, we establish the total quantum runtime. The complexity depends on the spectral properties of the dual slack operator and the 1-norms of the observables.

\begin{proposition}[Hessian Estimator Complexity]
\label{prop:hessian-complexity}
Let $K_\mu > 0$ with minimum eigenvalue $\lambda_{\min} > 0$. To estimate the Hessian element $H_{ij}$ to an additive precision $\varepsilon$, the total quantum runtime of Algorithm~\ref{alg:hessian-est-alg} scales asymptotically as:
\begin{equation}
    \text{Total Runtime} = \widetilde{\mathcal{O}}\left( \left\|\alpha_i\right\|_1^3 \left\|\alpha_j\right\|_1^3 \left\|h\right\|_1 \frac{T^8}{\lambda_{\min}^9 \varepsilon^3} \right),
\end{equation}
where $\left\|\alpha_i\right\|_1$, $\left\|\alpha_j\right\|_1$, and $\left\|h\right\|_1$ are the 1-norms of the coefficient vectors for $Q_i$, $Q_j$, and $K_\mu$, respectively, and the $\widetilde{\mathcal{O}}$ notation suppresses logarithmic factors. 
\end{proposition}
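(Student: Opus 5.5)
We will mirror the complexity analysis behind Proposition~\ref{prop:gradient-complexity}, now using the double geometric series and the joint estimator of Algorithm~\ref{alg:hessian-est-alg}. We split the target error $\varepsilon$ into a truncation budget and a statistical budget, each of size $\varepsilon/2$.

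\textbf{Truncation.} Each summand $\Tr[e^{-\tau_1 K_\mu}\rho_k e^{-\tau_2 K_\mu}\sigma_l]$ has modulus at most $e^{-(\tau_1+\tau_2)\lambda_{\min}} = e^{-(m_1+m_2-1)\lambda_{\min}/T}$. This holds uniformly in $s\in[0,1]$, because $\tau_1+\tau_2=(m_1+m_2-1)/T$ does not depend on $s$.
\begin{itemize}
\item Summing over pairs with $\max(m_1,m_2)>M$ gives a tail of order $\frac{1}{T}\|\alpha_i\|_1\|\alpha_j\|_1 \bar n_{\lambda_{\min}}\, e^{-M\lambda_{\min}/T}/(1-e^{-\lambda_{\min}/T})$.
\item In the relevant regime $T\gtrsim\lambda_{\min}$ we have $\bar n_{\lambda_{\min}}\approx T/\lambda_{\min}$.
\item Requiring the tail to be at most $\varepsilon/2$ therefore fixes $M=\mathcal{O}\big((T/\lambda_{\min})\ln(\|\alpha_i\|_1\|\alpha_j\|_1 T/(\lambda_{\min}^2\varepsilon))\big)=\widetilde{\mathcal{O}}(T/\lambda_{\min})$.
\item This leaves $M^2=\widetilde{\mathcal{O}}(T^2/\lambda_{\min}^2)$ pairs $(m_1,m_2)$.
\end{itemize}

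\textbf{Correctness and variance.} Unbiasedness will follow from~\eqref{eq:cauchy_integral} applied independently to $t_1$ and $t_2$, together with importance sampling over $(k,l)$. We also need the Hadamard test with controlled SWAP to satisfy $\mathbb{E}[Z]=\operatorname{Re}\Tr[U_1\rho_kU_2\sigma_l]$. The per-shot estimator $Y$ is bounded by $\|\alpha_i\|_1\|\alpha_j\|_1$. After the $-1/T$ prefactor and summation over $M^2$ independent averages, Hoeffding's inequality with a union bound over the pairs requires $N^{(m_1,m_2)}_{\rm shots}=\widetilde{\mathcal{O}}\big(M^4\|\alpha_i\|_1^2\|\alpha_j\|_1^2/(T^2\varepsilon^2)\big)$ per pair.

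\textbf{Circuit cost.} Each shot requires controlled simulation of $e^{-it_1K_\mu}$ and $e^{-it_2K_\mu}$ in the linear-combination-of-states model. As in Appendix~\ref{app:complexity-analysis}, this costs $\widetilde{\mathcal{O}}(\|h\|_1|t|)$ gates via density-matrix exponentiation or an LCU/qDRIFT-type method. Cauchy times are heavy-tailed, so $\mathbb{E}|t|$ diverges. We therefore reuse the gradient-case device of truncating $|t|\le t_{\max}$ and paying a bias of order $\tau/t_{\max}$ from the Cauchy tail mass.
\begin{itemize}
\item Since $\tau_{1,2}\le M/T$, bounding that bias by $\varepsilon/2$ gives $t_{\max}=\mathcal{O}(M\cdot\text{(prefactors)}/(T\varepsilon'))$.
\item The expected clipped time is then $\mathcal{O}(\tau\ln(t_{\max}/\tau))=\widetilde{\mathcal{O}}(M/T)$. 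The worst case $t_{\max}$ is what contributes a $1/\varepsilon$ factor, giving the final $\varepsilon^{-3}$ and the third powers of the $\|\alpha\|_1$'s.
\end{itemize}

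\textbf{Assembly.} The total runtime is the number of pairs, times shots per pair, times per-shot gates:
\[
M^2\cdot\frac{M^4\|\alpha_i\|_1^2\|\alpha_j\|_1^2}{T^2\varepsilon^2}\cdot\frac{\|h\|_1 M\|\alpha_i\|_1\|\alpha_j\|_1}{T\varepsilon}\cdot(\text{residual }T\text{-factors}).
\]
We will then track the powers of $T/\lambda_{\min}$ so that they collapse to $T^8/\lambda_{\min}^9$. The expected bookkeeping is $M^7\sim T^7/\lambda_{\min}^7$, with the remaining $T/\lambda_{\min}^2$ coming from the $\bar n_{\lambda_{\min}}$ factors in the truncation and bias budgets.

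\textbf{Main obstacle.} The hardest step is this exponent bookkeeping, specifically controlling the heavy-tailed Cauchy times jointly with the $1/T$ prefactor. The two time truncations interact through the product structure: the bias of one clipped factor is multiplied by a factor bounded by $\bar n$ from the other series. I would first bound each clipped factor's deviation in operator norm by $\mathcal{O}(\tau/t_{\max})$ and telescope the product, then verify that the exponents match the stated $T^8/\lambda_{\min}^9$ exactly.
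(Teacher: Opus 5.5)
Your skeleton is the same as the paper's: truncate the double series at $M=\widetilde{\mathcal{O}}(T/\lambda_{\min})$, clip the Cauchy times, use Hoeffding per pair, charge gates $\propto\|h\|_1 t_{\max}$, and multiply. The gap is in the step that is supposed to produce the stated exponent. It is asserted rather than derived, and the route you sketch does not get there.

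Your own displayed product is $M^7\|\alpha_i\|_1^3\|\alpha_j\|_1^3\|h\|_1/(T^3\varepsilon^3)$. With $M\sim T/\lambda_{\min}$ this is $T^4/\lambda_{\min}^7$, not ``$M^7\sim T^7/\lambda_{\min}^7$''; the $T^{-3}$ was silently dropped. The $\bar n_{\lambda_{\min}}$ factors cannot supply what is missing:
\begin{itemize}
\item In the truncation budget, $\bar n$ appears only inside the logarithm that defines $M$.
\item In the telescoped clipping bias, $\bar n$ contributes at most one factor $\sum_m\|e^{-\tau_m K_\mu}\|\lesssim\bar n+1\approx M$.
\end{itemize}
The missing powers of $M$ come from the Cauchy-bias budget. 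The paper requires the bias of each of the $M^2$ pair terms to be $\mathcal{O}(\varepsilon/M^2)$. Together with $\tau_{1,2}\le M/T$, this forces $t_{\max}=\mathcal{O}(\|\alpha_i\|_1\|\alpha_j\|_1M^3/(T\varepsilon))$. Your sketch has only a single power of $M$, because you budget the bias against $\varepsilon/2$ as a whole. The paper then multiplies $M^2\cdot M^4\|\alpha_i\|_1^2\|\alpha_j\|_1^2/\varepsilon^2\cdot\|h\|_1\|\alpha_i\|_1\|\alpha_j\|_1M^3/(T\varepsilon)$, which is $M^9$ up to these factors divided by $T\varepsilon^3$. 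It then substitutes $M=\widetilde{\mathcal{O}}(T/\lambda_{\min})$.

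There is a second problem with your plan to ``verify that the exponents match $T^8/\lambda_{\min}^9$ exactly''. The paper reaches $T^8$ only because every budget in its proof is set against $\varepsilon$ for the unscaled quantity $\int_0^1 ds\sum_{m_1,m_2}w_{m_1,m_2}(s)$. This covers the series tail, the per-term bias and the per-term statistical error. The $-1/T$ prefactor applied at the end of Algorithm~\ref{alg:hessian-est-alg} is never charged.

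You correctly charged that prefactor in your shot count. If you charge it consistently, each per-pair budget becomes $T\varepsilon/M^2$. Then $N_{\mathrm{shots}}\propto M^4/(T^2\varepsilon^2)$ and $t_{\max}\propto M^3/(T^2\varepsilon)$. The total is $M^9\|\alpha_i\|_1^3\|\alpha_j\|_1^3\|h\|_1/(T^4\varepsilon^3)=\widetilde{\mathcal{O}}(\|\alpha_i\|_1^3\|\alpha_j\|_1^3\|h\|_1\,T^5/(\lambda_{\min}^9\varepsilon^3))$. This differs from the stated bound by a factor $T^3$. It is weaker precisely in the relevant regime $T<1$.

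So your accounting, carried through honestly, does not yield $T^8/\lambda_{\min}^9$ for precision $\varepsilon$ on $H_{ij}$ itself. You must either adopt the paper's convention (precision $\varepsilon$ on $-T H_{ij}$) or state the bound you actually obtain.

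Finally, commit to one per-shot cost model. The $\varepsilon^{-3}$ and the cubic powers of the $\|\alpha\|_1$'s come from charging every shot the worst-case $t_{\max}$, as the paper does. Using your expected clipped time $\widetilde{\mathcal{O}}(M/T)$ instead would give a different, expected-runtime bound.
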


This polynomial scaling guarantees that second-order continuous optimization methods, such as Newton's method or natural gradient descent, can be executed within this framework, provided the state is not driven too close to exact ground-state condensation ($\lambda_{\min} \to 0$).

\subsection{Hybrid quantum--classical algorithms for Bose--Einstein free energy minimization}
\label{sec:overall-complexity}

We are now in a position to combine the classical optimization template of Section~\ref{subsec:classical-gradient-ascent} with the quantum gradient and Hessian estimators of Sections~\ref{subsec:Quantum-algorithm-gradient} and~\ref{subsec:Quantum-algorithm-Hessian}, and to analyze the resulting end-to-end runtime needed to maximize the dual objective $f_T(\mu)$ to a target precision $\delta>0$, i.e., to produce an iterate $\mu^{J}$ satisfying $\mathbb{E}[f_T(\mu^\star) - f_T(\mu^{J})] \leq \delta$.

The quantum routines in Algorithms~\ref{alg:BE-thermal-alg} and~\ref{alg:hessian-est-alg} are Monte Carlo estimators: each call returns an unbiased estimate of a thermal trace, with a variance controlled by the Hadamard-test sampling, the Cauchy-distributed evolution time, and the state-model sampling. Consequently, classical gradient ascent and Newton's method, when fed by these estimators, become stochastic optimization algorithms~\cite{Robbins1951stochastic}. We therefore use the convergence theory for stochastic first- and second-order methods on $L$-smooth concave functions developed in the classical optimization literature~\cite{nemirovski2009robust,bubeck2015convex}.

For both first- and second-order methods, the total runtime decomposes into three factors:
\begin{enumerate}
    \item The number of classical iterations, $J$, dictated by the convergence theorem for the chosen optimization method on an $L_T$-smooth concave objective with unbiased noisy derivative oracles;
    \item The number of derivative components estimated per iteration ($c$ gradient components, plus $c^2$ Hessian elements for Newton's method);
    \item The per-call quantum runtime of Algorithm~\ref{alg:BE-thermal-alg} or Algorithm~\ref{alg:hessian-est-alg}, bounded in Proposition~\ref{prop:gradient-complexity} and Proposition~\ref{prop:hessian-complexity}, respectively.
\end{enumerate}
The next two subsections carry out the analysis of the total runtime for stochastic gradient ascent and stochastic Newton's method, respectively.

\subsubsection{First-order method: Stochastic gradient ascent}
\label{subsec:stochastic-GA}

The hybrid quantum--classical version of Algorithm~\ref{alg:classical-gradient-ascent} replaces the exact gradient component $q_{i}-\Tr[X_T(\mu)Q_i]$ with the unbiased stochastic estimate produced by Algorithm~\ref{alg:BE-thermal-alg}. The resulting procedure is summarized in Algorithm~\ref{alg:q-stoch-gradient-ascent}.

\begin{algorithm}
\caption{Hybrid quantum--classical stochastic gradient ascent for the bosonic free-energy dual}
\label{alg:q-stoch-gradient-ascent}
\KwData{Target accuracy $\delta>0$, learning rate $\eta=1/L_{T}$ with $L_{T}$ as in~\eqref{eq:smoothness-parameter}, total number of steps $J$, gradient estimator from Algorithm~\ref{alg:BE-thermal-alg}.}
\KwResult{Estimate $\tilde{f}_{T}(\mu^{J})$ of the optimal value $E(\mathcal{Q},q)$.}
Set the temperature $T>0$ small enough to guarantee an approximation error of at most $\delta/2$ via either Proposition~\ref{prop:no-BE-ent-simple-approx-err-bnd}, Remark~\ref{rem:dimension-only-bound}, or Proposition~\ref{prop:approx-error-spectral-gap}.\;
Initialize $\mu^{0}$ (for example, $\mu^{0}\leftarrow 0$) and set $J = \Theta(L_{T}\left\|\mu_{T}^{\star}\right\|^{2}/\delta)$.\;
\For{$j=1,\dots,J$}{
    \For{$i=1,\dots,c$}{
        Estimate $\Tr[X_{T}(\mu^{j-1})Q_{i}]$ to constant additive precision $\varepsilon_{j} = \Theta(\sqrt{L_T\delta/c})$ using Algorithm~\ref{alg:BE-thermal-alg}, obtaining $\widetilde{Q}_{i}^{(j-1)}$\;
        Set $g_{i}^{(j-1)}\leftarrow q_{i}-\widetilde{Q}_{i}^{(j-1)}$\;
    }
    Update $\mu^{j}\leftarrow\mu^{j-1}+\eta\,g^{(j-1)}$\;
}
Estimate $\Tr[HX_{T}(\mu^{J})]$ and each $\Tr[X_T(\mu^J)Q_i]$ ($i=1,\dots,c$) using Algorithm~\ref{alg:BE-thermal-alg} to combined additive precision $\delta/4$, obtaining $\widetilde{H}^{(J)}$ and $\widetilde{Q}^{(J)} \coloneqq (\widetilde{Q}_1^{(J)},\dots,\widetilde{Q}_c^{(J)})$\;
\Return $\tilde{f}_{T}(\mu^{J}) = \mu^{J}\cdot q+\widetilde{H}^{(J)}-\mu^{J}\cdot\widetilde{Q}^{(J)}$
\end{algorithm}

The output is an additive estimate of the optimal value $E(\mathcal{Q},q)$ of the original SDP. To bound its expected accuracy, we decompose the total error into three sources by inserting and subtracting the regularized dual optimum:
\begin{equation}
\begin{split}
\left|\tilde{f}_{T}(\mu^{J})-E(\mathcal{Q},q)\right| &\leq \underbrace{\left|\tilde{f}_{T}(\mu^{J})-f_{T}(\mu^{J})\right|}_{\text{entropy correction}} \\
& \quad + \underbrace{\left|f_{T}(\mu^{J})-F_{T}(\mathcal{Q},q)\right|}_{\text{dual suboptimality}} \\
& \quad + \underbrace{\left|F_{T}(\mathcal{Q},q)-E(\mathcal{Q},q)\right|}_{\text{approximation error}}. 
\end{split}
\label{eq:tot-error-decomp-stoch-GA}
\end{equation}
We allocate the budget $\delta/4$ to each of the three controllable contributions and a further $\delta/4$ to the final-state estimation error of $\Tr[H X_T(\mu^J)]$.
\begin{itemize}
    \item The \emph{entropy correction} is exactly bounded by $T\,S_{\mathrm{BE}}(X_T(\mu^J))$, which is controlled by the temperature schedule chosen in the first step of the algorithm.
    \item The \emph{dual suboptimality} is bounded by the standard convergence guarantees for stochastic gradient ascent on $L$-smooth concave objectives~\cite{bubeck2015convex, nemirovski2009robust}. For a fixed step size $\eta=1/L_T$ and an unbiased gradient estimator with expected variance bounded by $\sigma^2$, the expected suboptimality after $J$ steps is bounded by
    \begin{equation}
    \mathbb{E}[f_T(\mu^\star)-f_T(\mu^J)] \leq \mathcal{O}\!\left(\frac{L_T\left\|\mu_T^\star\right\|^2}{J} + \frac{\sigma^2}{L_T}\right),
    \end{equation}
    where the first term is the deterministic contraction rate of gradient ascent and the second is the persistent error floor caused by the stochastic noise. Because the algorithm estimates all $c$ gradient components independently to precision $\varepsilon_j$, the total variance is $\sigma^2 = c\varepsilon_j^2$. By choosing $J = \Theta(L_T\left\|\mu_T^\star\right\|^2/\delta)$ and a constant per-step precision $\varepsilon_j = \Theta(\sqrt{L_T \delta / c})$, both terms are strictly bounded by $\delta/8$, guaranteeing the required $\delta/4$ budget.
    \item The \emph{approximation error} is the gap between the regularized and the original SDP, controlled by the temperature schedule via Proposition~\ref{prop:no-BE-ent-simple-approx-err-bnd}, Remark~\ref{rem:dimension-only-bound}, or Proposition~\ref{prop:approx-error-spectral-gap}, depending on the available structural information.
\end{itemize}

The total runtime of Algorithm~\ref{alg:q-stoch-gradient-ascent} is therefore the product of three factors enumerated in Section~\ref{sec:overall-complexity}: (i) the iteration count $J=\Theta(L_{T}\left\|\mu_{T}^{\star}\right\|^{2}/\delta)$; (ii) the $c$ gradient components estimated per iteration; and (iii) the per-component quantum runtime
\begin{equation}
\widetilde{\mathcal{O}}\!\left(\left\|\alpha_{i}\right\|_{1}^{3}\,\left\|h\right\|_{1}\,\frac{T^{4}}{\lambda_{\min}^{5}\varepsilon_{j}^{3}}\right)
\end{equation}
established in Proposition~\ref{prop:gradient-complexity}. Multiplying the per-component runtime above by the $c$ gradient components per iteration and by the iteration count $J = \Theta(L_T\left\|\mu_T^\star\right\|^2/\delta)$, and substituting $\varepsilon_j = \Theta(\sqrt{L_T\delta/c})$, yields the end-to-end gate-count scaling
\begin{equation}
\widetilde{\mathcal{O}}\!\left(\frac{c^{5/2}\,\left\|\mu_T^\star\right\|^2\,\left\|\alpha\right\|_1^3\,\left\|h\right\|_1\,T^4}{\lambda_{\min}^5\,L_T^{1/2}\,\delta^{5/2}}\right).
\label{eq:end-to-end-SGA-runtime}
\end{equation}
This is polynomial in the number of constraints, $c$, in the input-model 1-norms $\left\|\alpha\right\|_1$ and $\left\|h\right\|_1$, in $T/\lambda_{\min}$, and in $1/\delta$. When the temperature $T$ is chosen via the spectral-gap bound (Proposition~\ref{prop:approx-error-spectral-gap}) and $\Delta^{-1}$ and $d_{0}$ are polynomial in $\ln d$, then $T$ scales polynomially in $\ln d$ and the overall runtime is polynomial in $\ln d$ (i.e., in the number of qubits), in $c$, and in $1/\delta$, modulo the dependence on the input-model 1-norms.

\subsubsection{Second-order method: Stochastic Newton's method}
\label{subsec:stochastic-Newton}

The smoothness parameter $L_T$ in~\eqref{eq:smoothness-parameter} diverges as the dual iterate approaches the boundary of the PSD cone ($\lambda_{\min}\to 0$). This ill-conditioning forces stochastic gradient ascent to take ever-smaller step sizes and inflates the iteration count $J\propto L_T$. A standard remedy in classical convex optimization is to precondition the update by the inverse Hessian of the dual objective, yielding a Newton step; the resulting method exhibits local quadratic convergence~\cite{nesterov2018Lectures, Nesterov2006Cubic}. Each Newton step replaces the gradient update $\mu^{j}\leftarrow\mu^{j-1}+\eta\,g^{(j-1)}$ in Algorithm~\ref{alg:q-stoch-gradient-ascent} by
\begin{equation}
\mu^{j}\leftarrow\mu^{j-1}-\eta\,[\nabla^{2}f_{T}(\mu^{j-1})]^{-1}\,g^{(j-1)},
\end{equation}
where the gradient is again estimated by Algorithm~\ref{alg:BE-thermal-alg} and each element of the Hessian is estimated by Algorithm~\ref{alg:hessian-est-alg}. Because both estimators are unbiased Monte Carlo samples, the resulting algorithm is a stochastic Newton's method~\cite{Roosta-Khorasani2019sub}.

\begin{algorithm}
\caption{Hybrid quantum--classical stochastic Newton's method for the bosonic free-energy dual}
\label{alg:q-stoch-newton}
\KwData{Target accuracy $\delta>0$, learning rate $\eta>0$, total number of steps $J$, gradient and Hessian estimators from Algorithms~\ref{alg:BE-thermal-alg} and~\ref{alg:hessian-est-alg}.}
\KwResult{Estimate $\tilde{f}_{T}(\mu^{J})$ of the optimal value $E(\mathcal{Q},q)$.}
Set the temperature $T>0$ small enough to guarantee an approximation error of at most $\delta/2$ via Proposition~\ref{prop:approx-error-spectral-gap} (or Remark~\ref{rem:dimension-only-bound}).\;
Initialize $\mu^{0}\in\mathbb{R}^{c}$ (for example, $\mu^{0}\leftarrow 0$) and set the total number of steps $J$ according to standard convergence theorems for stochastic Newton's method (see, e.g.,~\cite{Nesterov2006Cubic, Roosta-Khorasani2019sub}).\;
\For{$j=1,\dots,J$}{
    \For{$i,k=1,\dots,c$}{
        Estimate $[\nabla^{2}f_{T}(\mu^{j-1})]_{i,k}$ to additive precision $\varepsilon_{j}$ using Algorithm~\ref{alg:hessian-est-alg}, obtaining the entry $[\widehat{\nabla^{2}f_T}(\mu^{j-1})]_{i,k}$\;
    }
    \For{$i=1,\dots,c$}{
        Estimate $\Tr[X_{T}(\mu^{j-1})Q_{i}]$ to additive precision $\varepsilon_{j}$ using Algorithm~\ref{alg:BE-thermal-alg}, obtaining $\widetilde{Q}_{i}^{(j-1)}$\;
        Set $g_{i}^{(j-1)}\leftarrow q_{i}-\widetilde{Q}_{i}^{(j-1)}$\;
    }
    Solve the linear system $\widehat{\nabla^{2}f_T}(\mu^{j-1})\,\Lambda^{(j-1)}=g^{(j-1)}$\;
    Update $\mu^{j}\leftarrow\mu^{j-1}-\eta\,\Lambda^{(j-1)}$\;
}
Estimate $\Tr[HX_{T}(\mu^{J})]$ and each $\Tr[X_T(\mu^J)Q_i]$ ($i=1,\dots,c$) using Algorithm~\ref{alg:BE-thermal-alg} to combined additive precision $\delta/4$, obtaining $\widetilde{H}^{(J)}$ and $\widetilde{Q}^{(J)} \coloneqq (\widetilde{Q}_1^{(J)},\dots,\widetilde{Q}_c^{(J)})$\;
\Return $\tilde{f}_{T}(\mu^{J}) = \mu^{J}\cdot q+\widetilde{H}^{(J)}-\mu^{J}\cdot\widetilde{Q}^{(J)}$
\end{algorithm}

A full end-to-end runtime analysis of stochastic Newton's method on the Bose--Einstein regularized dual would require a careful variance-budget analysis to handle the propagation of error through the inversion of the noisy Hessian, which is beyond the scope of this paper; we therefore restrict ourselves here to a simple accounting of the per-iteration cost, deferring a complete end-to-end statement to future work. The total runtime of Algorithm~\ref{alg:q-stoch-newton} is the product of three factors:
\begin{enumerate}
    \item The \emph{number of Newton iterations}, $J$, which scales as $\mathcal{O}(\log\log(1/\delta))$ in the local quadratic-convergence regime, plus a polynomial number of damped steps to enter that regime~\cite{Nesterov2006Cubic}.
    \item The \emph{number of derivative components per iteration}: $c$ gradient components estimated by Algorithm~\ref{alg:BE-thermal-alg}, plus $c^2$ Hessian elements estimated by Algorithm~\ref{alg:hessian-est-alg}.
    \item The \emph{per-element quantum runtime}, given by Proposition~\ref{prop:gradient-complexity} for gradient components and, for Hessian elements, by Proposition~\ref{prop:hessian-complexity},
    \begin{equation}
    \widetilde{\mathcal{O}}\!\left(\left\|\alpha_{i}\right\|_{1}^{3}\,\left\|\alpha_{j}\right\|_{1}^{3}\,\left\|h\right\|_{1}\,\frac{T^{8}}{\lambda_{\min}^{9}\,\varepsilon^{3}}\right).
    \end{equation}
\end{enumerate}

This scaling behavior reveals a trade-off between first- and second-order methods. The advantage of stochastic Newton is that the iteration count is essentially independent of $L_T$ (and hence of $\lambda_{\min}$), which is the dominant source of slow convergence for gradient ascent near the unregularized optimum. The disadvantage is that each iteration is significantly more expensive: $c^2$ Hessian calls per step, each scaling as $T^8/\lambda_{\min}^9$, versus $c$ gradient calls scaling as $T^4/\lambda_{\min}^5$. Which method performs better in practice depends on the specific SDP, on the constants hidden in the asymptotic bounds, and on the spectral structure of $K_{\mu}$ along the optimization trajectory. A detailed end-to-end comparison remains a subject for future investigation.

\subsection{Comparison with other SDP solvers}
\label{sec:comparison}

We position the Bose--Einstein framework within the broader landscape of SDP solvers along three axes: the dependence of the duality gap on the Hilbert-space dimension $d$; the quantum primitives required; and the role of an a priori upper bound $R$ on the trace of the primal variable. The relevant comparison points are classical interior-point methods (IPMs)~\cite{Boyd2004convex,vandenberghe1996semidefinite}, the matrix multiplicative weights update method (MMW)~\cite{Tsuda2005matrix,Dhillon2007Matrix}, the quantum MMW SDP solver~\cite{vanApeldoorn2019SDP} (which we will refer to as the quantum MMW solver), and the two prior thermodynamic frameworks~\cite{liu2025sdp,liu2026sdp_fermi}.

\subsubsection{Classical IPMs and dimension dependence of  duality gap}

Classical IPMs evaluated on the central path produce a duality gap of $T\cdot d$, where $T$ is the barrier penalty parameter (often denoted $\mu$ or $1/t$ in the IPM literature), so the barrier must scale as $\mathcal{O}(\varepsilon/d)$ to reach precision $\varepsilon$. This is the scaling exploited by mature SDP solvers such as SDPA~\cite{Yamashita2010SDPA}, MOSEK~\cite{MOSEK}, and the CVX/CVXPY modeling frameworks~\cite{CVX2014,Diamond2016CVXPY}. Our dimension-only bound in Remark~\ref{rem:dimension-only-bound} recovers exactly this scaling in the worst case, confirming the consistency of the two pictures. The structural advantage of the Bose--Einstein framework lies in Proposition~\ref{prop:approx-error-spectral-gap}: whenever the dual slack operator $K_{\mu_T^\star}$ has ground-space degeneracy $d_0$ and spectral gap $\Delta>0$, the duality gap is bounded by 
\begin{equation}
T\,d_0 + (d-d_0)(\lambda_{\min}+\Delta)/(e^{(\lambda_{\min}+\Delta)/T}-1),    
\end{equation}
whose excited-state contribution decays exponentially in $\Delta/T$. In the regime $\Delta\gg T$ and $d_0\ll d$, the temperature is governed by $d_0$ and $\Delta$ instead of $d$; if $d_0$ and $\Delta^{-1}$ scale polynomially in $\ln d$, the temperature scales polynomially in $\ln d$, a substantial asymptotic advantage that the dimension-insensitive logarithmic barrier of classical IPMs cannot leverage.

\subsubsection{Classical MMW}

First-order methods based on matrix multiplicative weights and matrix exponentiated gradient updates~\cite{Tsuda2005matrix,Dhillon2007Matrix} operate natively on the trace-normalized cone and so do not directly handle the unbounded constraint $X\geq 0$ without an additional trace bound. The Bose--Einstein framework supplies precisely such a generalization: entropic regularization on the unbounded cone is a special case of a Bregman divergence (Section~\ref{subsec:BE-convexity-additivity-geometry}), and the trace-normalized Boltzmann setting is recovered when an additional constraint $\Tr[X]=1$ is imposed. We expect that algorithmic ideas from the MMW literature~\cite{Garber2016Sublinear} can be carried over to the Bose--Einstein framework using $D_{\mathrm{BE}}$ as the prox-function, although a detailed analysis remains a subject for future work.

\subsubsection{Quantum MMW solver}

The quantum MMW solver of~\cite{vanApeldoorn2019SDP}, building on the earlier quantum SDP algorithms of~\cite{BrandaoSvore2017,vanApeldoorn2017quantumSDP,Brandao2019QuantumSDP} and complemented more recently by quantum interior-point variants~\cite{KerenidisPrakash2020QIPM,AugustinoNannicini2023QIPM}, is based on a quantum implementation of MMW with block-encoded inputs and Gibbs-state preparation as primitives. For SDPs with $n$-dimensional matrix variables and $m$ constraints, it achieves a query complexity of $\widetilde{\mathcal{O}}((\sqrt{m} + \sqrt{n}\gamma)\alpha\gamma^4)$, where $\gamma = Rr/\varepsilon$. Here, $R$ and $r$ are the dual and primal feasibility radii, respectively, and $\alpha$ is the input-model normalization~\cite[Theorem 17]{vanApeldoorn2019SDP}.

To compare the bound above with our bound for the stochastic gradient ascent algorithm shown in~\eqref{eq:end-to-end-SGA-runtime}, one must account for the difference in cost metrics: the MMW solver measures oracle queries, whereas we report total quantum gate runtime. However, even when converting MMW queries to gate counts (which may introduce only a $\mathrm{polylog}(n)$ overhead for structured inputs), the MMW solver retains an explicit $\sqrt{n}$ dependence. In contrast, our framework's gate count has no explicit polynomial dependence on the dimension $d \equiv n$; rather, $d$ may only enter implicitly through the input-model 1-norms ($\left\|\alpha_i\right\|_1, \left\|h\right\|_1$) and the temperature $T$. 
This leads to two distinct comparative regimes. In the worst case (arbitrary inputs, no spectral gap, $\lambda_{\min}\to 0$), both frameworks scale polynomially in $d$, and the MMW solver is likely more robust because its complexity does not rely on the spectral structure of the dual slack operator. However, in the favorable regime where the inputs are structured (with $\left\|\alpha_i\right\|_1, \left\|h\right\|_1 = \mathrm{polylog}(d)$), the slack operator is gapped (with $d_0, \Delta^{-1} = \mathrm{polylog}(d)$), and $\lambda_{\min}$ is uniformly bounded away from zero, our framework's end-to-end gate count is $\mathrm{polylog}(d) \cdot c^{5/2} \cdot \mathrm{poly}(1/\delta)$, while the quantum MMW solver retains its $\sqrt{n}=\sqrt{d}$ factor. Our framework therefore achieves a polynomial-in-$\log d$ runtime -- polynomial in the qubit count -- against the $\sqrt{n}$ scaling of the quantum MMW solver, at the price of a worse polynomial dependence on the number of constraints $c \equiv m$ (specifically, $c^{5/2}$ versus $\sqrt{m}$).

Beyond the dimension comparison, the two frameworks differ in two further ways. The first concerns the quantum primitives required: the quantum MMW solver assumes block-encoded access to the input matrices and quantum Gibbs-state preparation, while Algorithms~\ref{alg:BE-thermal-alg} and~\ref{alg:hessian-est-alg} of the present work instead use only Hamiltonian simulation, the Hadamard test, and classical Cauchy sampling. The second difference concerns the dependence on an a priori upper bound on the primal trace: the quantum MMW solver's runtime scales as $R^4$, where $R$ is such a bound, and when $R$ is not known it must be located by binary search at an additional multiplicative cost, whereas the Bose--Einstein framework requires no such a priori trace bound: the variable $X\geq 0$ is unbounded in the optimization domain, and the runtime is controlled instead by the spectral parameters $\lambda_{\min},\Delta,d_0$ of the dual slack operator at the regularized optimum, sidestepping the dependence on $R$ entirely. We note here that in our framework, as $T\to 0$, complementary slackness forces $\lambda_{\min}\to 0$, so that obtaining a fully comparable $\varepsilon$-complexity with the $1/\varepsilon^4$ scaling of the quantum MMW solver would require characterizing the rate at which $\lambda_{\min}$ vanishes along the optimization trajectory -- an open problem to which we return in Section~\ref{sec:conclusion}.

\subsubsection{Prior thermodynamic frameworks}

The Bose--Einstein framework completes the thermodynamic trilogy of SDP solvers, complementing the Boltzmann framework~\cite{liu2025sdp} (which handles trace-one density operators) and the Fermi--Dirac framework~\cite{liu2026sdp_fermi} (which handles measurement operators bounded by the identity). The three frameworks share a common skeleton: a physically motivated entropy, a free-energy regularization, an unconstrained concave dual, gradient and Hessian formulas as thermal expectation values, and hybrid quantum--classical algorithms based on Hadamard tests and Hamiltonian simulation. They differ in the operator domain, the form of the optimal primal variable (a quantum Boltzmann state, a Fermi--Dirac thermal measurement, or a Bose--Einstein thermal operator), and the scaling of the smoothness parameter $L_T$. The Bose--Einstein quantum relative entropy of Section~\ref{sec:be-relative-entropy} is the canonical divergence on the unbounded cone, generalizing the von Neumann and Fermi--Dirac relative entropies (which are the Bregman divergences generated by $-S(\rho)$ on the set of density operators and by $-S_{\mathrm{FD}}(M)$ on $[0,I]$, respectively).

A specific point worth emphasizing is the relationship between the Bose--Einstein framework and the Boltzmann framework of~\cite{liu2025sdp}. Both target general SDPs over PSD operators, but the Boltzmann framework only handles trace-normalized variables; to apply it to a standard SDP, one must first guess an upper bound $R$ on $\Tr[X]$, rescale the problem so that the variable lives on the set of density operators, and then solve. If $R$ is not known, a binary search over $R$ adds a multiplicative cost analogous to that incurred by the quantum MMW solver. The Bose--Einstein framework, in contrast, regularizes on the unbounded cone; thus it does not require an explicit trace bound, although compactness assumptions on the feasible set are still necessary for the uniform entropy-based approximation bounds. The trade-off for this independence is that the relevant complexity parameter is no longer the trace bound but the spectral structure ($\lambda_{\min}$, $\Delta$, $d_0$) of the dual slack operator at the regularized optimum, which translates into a runtime that is favorable when this spectral structure is benign and unfavorable when it is not.

\subsubsection{Summary}

To summarize, the Bose--Einstein framework may be viewed as a complementary approach to both the quantum MMW solver~\cite{vanApeldoorn2019SDP} and the prior Boltzmann thermodynamic framework~\cite{liu2025sdp}, rather than a direct competitor to either. While these frameworks all address general SDPs, they trade off different complexity parameters: our framework dispenses with the binary search over the trace bound $R$ required by the other two, replacing it with a dependence on the spectral structure of the dual slack operator. It also uses arguably simpler quantum primitives, that is Hadamard test plus Hamiltonian simulation rather than block-encoded Gibbs-state preparation. For SDPs with structured input matrices, a small ground-space degeneracy, and a non-vanishing spectral gap at the regularized optimum, the Bose--Einstein framework may be the natural choice; for general worst-case SDPs without these structural assumptions, the quantum MMW solver may be more efficient.

\section{Bose--Einstein quantum relative entropy}
\label{sec:be-relative-entropy}

In Section~\ref{sec:general_sdp}, we demonstrated that regularizing a semidefinite program with the Bose--Einstein entropy naturally maps the optimal primal variable to a thermal operator of independent bosonic modes. In Appendix~\ref{app:proof_lemma_rel_entropy}, we show that this proof relies on a specific divergence: the Bose--Einstein quantum relative entropy, $D_{\mathrm{BE}}(X\|Y)$.

In the standard quantum information literature, the distinguishability of two quantum states is commonly measured by the Umegaki relative entropy \cite{Umegaki1962}, defined as $D(X\|Y) \coloneqq \Tr[X\ln X - X\ln Y]$. However, this standard divergence is mathematically tailored for normalized density matrices ($\Tr[\rho]=1$). When applied generally to the unbounded positive semidefinite cone ($X \geq 0$), it can evaluate to a strictly negative number, violating the axiom of positivity of a distance metric. While one can patch the standard divergence to restore positivity, typically by adopting the generalized quantum relative entropy $D(X\|Y) = \Tr[X \ln X - X \ln Y - X + Y]$ derived as the matrix Bregman divergence of $\Tr[X \ln X - X]$~\cite[Section 2.2]{Tsuda2005matrix}, we introduce the Bose--Einstein relative entropy as a general-purpose divergence natively designed for unbounded positive operators. As a canonical Bregman divergence generated by the strictly convex trace function $-S_{\mathrm{BE}}(X)$, its mathematical structure, specifically the inclusion of the $(X+I)$ terms, ensures it remains non-negative, well-defined, and geometrically consistent across the entire positive semidefinite cone. This makes it a natural mathematical choice not only for our thermodynamic framework, but for general information-theoretic and continuous optimization tasks involving unnormalized operators.

In this section, we define the Bose--Einstein relative entropy and establish its fundamental properties, including faithfulness, unitary invariance, and spectral expansion. We also analyze its convexity properties and discuss how its geometry naturally accommodates continuous optimization algorithms over the unbounded positive semidefinite cone.

\subsection{Definition and properties}
\label{subsec:BE-properties}

\begin{definition}[Bose--Einstein quantum relative entropy]
\label{def:BE-rel-entropy}
For positive semidefinite operators $X, Y \geq 0$, the Bose--Einstein quantum relative entropy is defined as
\begin{equation}
    D_{\mathrm{BE}}(X\|Y) \coloneqq -S_{\mathrm{BE}}(X) + \Tr[(X+I)\ln(Y+I) - X\ln Y], \label{eq:BE_rel_entropy_main}
\end{equation}
where $S_{\mathrm{BE}}(X) = \Tr[(X+I)\ln(X+I) - X\ln X]$. Henceforth, we refer to it simply as the Bose--Einstein relative entropy whenever the context is unambiguous.
\end{definition}

\begin{remark}[Decomposition into Umegaki relative entropies]
\label{rem:BE-umegaki-decomposition}
A direct regrouping of the terms in~\eqref{eq:BE_rel_entropy_main} expresses $D_{\mathrm{BE}}$ as a difference of two Umegaki-type relative entropies:
\begin{equation}\label{eq:BE-umegaki-decomp}
    D_{\mathrm{BE}}(X\|Y) = D(X\|Y) - D(X+I\,\|\,Y+I),
\end{equation}
where $D(A\|B) \coloneqq \Tr[A\ln A - A\ln B]$ is the natural extension of the Umegaki relative entropy to unnormalized $A, B \geq 0$. Although either term on the right can be negative on unnormalized inputs -- as noted at the beginning of this section -- their difference is always non-negative (due to the Bregman-divergence structure of $-S_{\mathrm{BE}}$, as shown in the following Proposition~\ref{prop:BE-faithfulness}). The shift $X \mapsto X+I$ encodes the bosonic occupation-fluctuation factor $n(n+1)$, and it is the counterpart of the Fermi--Dirac decomposition $D_{\mathrm{FD}}(M\|N) = D(M\|N) + D(I-M\,\|\,I-N)$ (see~\cite[Definition 27]{liu2026sdp_fermi}), in which $I-M$ plays the role of the ``hole'' density.
\end{remark}

$D_{\mathrm{BE}}$ satisfies the foundational requirements of a divergence as described in the following propositions.

\begin{proposition}[Faithfulness and Support]
\label{prop:BE-faithfulness}
For all $X, Y \geq 0$, the Bose--Einstein relative entropy is non-negative: $D_{\mathrm{BE}}(X\|Y) \geq 0$, with equality if and only if $X=Y$. Furthermore, $D_{\mathrm{BE}}(X\|Y)$ is finite if and only if $\operatorname{supp}(X) \subseteq \operatorname{supp}(Y)$.
\end{proposition}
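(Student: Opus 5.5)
The plan is to exploit the Bregman-divergence structure by expanding $D_{\mathrm{BE}}(X\|Y)$ in the joint eigenbases of $X$ and $Y$ and reducing to a scalar statement. Write $X=\sum_j x_j|\phi_j\rangle\!\langle\phi_j|$ and $Y=\sum_k y_k|\psi_k\rangle\!\langle\psi_k|$, and set $P_{jk}\coloneqq|\langle\phi_j|\psi_k\rangle|^2$. The matrix $P$ is doubly stochastic. Since $\sum_k P_{jk}=1$, each trace term in Definition~\ref{def:BE-rel-entropy} can be written as a double sum. For example,
\begin{equation}
\Tr[(X+I)\ln(Y+I)]=\sum_{j,k}P_{jk}(x_j+1)\ln(y_k+1),
\end{equation}
and similarly $\Tr[X\ln Y]=\sum_{j,k}P_{jk}\,x_j\ln y_k$ with the convention $0\ln 0=0$. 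Combining these gives
\begin{equation}
D_{\mathrm{BE}}(X\|Y)=\sum_{j,k}P_{jk}\,d_{\mathrm{BE}}(x_j\|y_k),
\end{equation}
where $d_{\mathrm{BE}}(x\|y)\coloneqq -g(x)+(x+1)\ln(y+1)-x\ln y$ is the scalar Bregman divergence generated by $-g$. A direct check shows $-g(y)-g'(y)(x-y)$ with $g'(y)=\ln(1+1/y)$ equals the last two terms.

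\textbf{Non-negativity.} Because $g$ is strictly concave on $[0,\infty)$, as established in Appendix~\ref{app:concavity_BEentropy}, we have $d_{\mathrm{BE}}(x\|y)\geq 0$ for $y>0$, with equality iff $x=y$. For $y=0$, the term $-x\ln y$ equals $+\infty$ if $x>0$, and equals $0$ if $x=0$, so $d_{\mathrm{BE}}(0\|0)=0$. Hence every term in the double sum is non-negative, which gives $D_{\mathrm{BE}}\geq 0$.

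\textbf{Equality case.} If $D_{\mathrm{BE}}(X\|Y)=0$, then $x_j=y_k$ whenever $P_{jk}>0$. Consequently $X|\psi_k\rangle=\sum_j x_j|\phi_j\rangle\langle\phi_j|\psi_k\rangle=y_k|\psi_k\rangle$ for every $k$, so $X=Y$. The converse direction is immediate.

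\textbf{Finiteness and support.} Every term $d_{\mathrm{BE}}(x_j\|y_k)$ with $y_k>0$ is finite, and every term with $x_j=0$ is finite. So $D_{\mathrm{BE}}$ is infinite exactly when some pair has $x_j>0$, $y_k=0$ and $P_{jk}>0$. This happens exactly when some eigenvector of $X$ in its support overlaps $\ker Y$, i.e., exactly when $\operatorname{supp}(X)\not\subseteq\operatorname{supp}(Y)$.

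The main obstacle is handling the term $-X\ln Y$ rigorously when $Y$ is singular. I would define it as $-\sum_{j,k}P_{jk}x_j\ln y_k$ with the conventions $0\cdot(-\ln 0)=0$ and $x\cdot(-\ln 0)=+\infty$ for $x>0$. I would then note that this agrees with the limit $\epsilon\to0^+$ of $Y+\epsilon I$, by monotone convergence of the scalar terms. Once this convention is fixed, the remaining steps are routine scalar calculus.
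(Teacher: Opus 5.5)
Your proof is correct, and it takes a more hands-on route than the paper's. The paper writes $D_{\mathrm{BE}}$ in Bregman form (Remark~\ref{rem:bregman-formulation}) and invokes Klein's inequality for $f(x)=x\ln x-(x+1)\ln(x+1)$ as a black box, including its equality clause. For the support claim, the paper only notes that if some vector of $\ker Y$ is an eigenvector of $X$ with positive eigenvalue, then the $-X\ln Y$ term diverges.

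You instead prove the needed instance of Klein's inequality yourself. Expanding both operators in their eigenbases with the doubly stochastic overlap matrix $P_{jk}$ is exactly the paper's later Proposition~\ref{prop:BE-spectral}. This reduces non-negativity and the equality case to the scalar tangent-line inequality for the concave $g$. Your recovery of $X=Y$ from $x_j=y_k$ whenever $P_{jk}>0$ is clean.

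What this buys you is control at the boundary of the cone. Klein's inequality in its usual form needs $f$ differentiable on the spectrum of $Y$, but $f'(0)=-\infty$. So the cited argument literally covers only $Y>0$. Your explicit convention for $-X\ln Y$, consistent with the limit of $Y+\epsilon I$, handles singular $Y$, including the equality case there.

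The same holds for finiteness. Your criterion ($x_j>0$, $y_k=0$, $P_{jk}>0$ for some pair) says exactly that $\operatorname{supp}(X)$ is not orthogonal to $\ker Y$, i.e., $\operatorname{supp}(X)\not\subseteq\operatorname{supp}(Y)$. So you get both directions of the finiteness claim. The paper's common-eigenvector argument gives only a sufficient condition for divergence, and only in a special configuration. The paper's version is shorter; yours is self-contained and rigorous on the whole cone.

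A few small repairs are needed.
\begin{itemize}
\item The identity you call a direct check has the wrong sign. It is $g(y)+g'(y)(x-y)$, the tangent line of $g$ at $y$, that equals $(x+1)\ln(y+1)-x\ln y$. Hence $d_{\mathrm{BE}}(x\|y)=g(y)+g'(y)(x-y)-g(x)\geq 0$. Your definition of $d_{\mathrm{BE}}$ itself is right.
\item The appendix establishes strict concavity of $g$ only on $(0,\infty)$. The case $x=0$ is immediate anyway, since $d_{\mathrm{BE}}(0\|y)=\ln(1+y)>0$ for $y>0$.
\item The terms $d_{\mathrm{BE}}(x_j\|y_k+\epsilon)$ need not be monotone in $\epsilon$, since their $y$-derivative is $(y-x)/(y(y+1))$. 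Because the sum is finite, termwise convergence in $[0,\infty]$ is all you need.
\end{itemize}
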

\begin{proof}
See Appendix~\ref{app:proof_BE_faithfulness}. The non-negativity follows directly from its structure as a Bregman divergence (see Remark~\ref{rem:bregman-formulation}) via Klein's inequality.  The support condition arises from the singular nature of the $-X \ln Y$ term that occurs when the support condition does not hold.
\end{proof}

\begin{proposition}[Unitary Invariance]
\label{prop:BE-isometric}
For every unitary $U$ (i.e., $U^\dagger U = UU^\dagger = I$), the divergence is invariant:
\begin{equation}
    D_{\mathrm{BE}}(UXU^\dagger \| UYU^\dagger) = D_{\mathrm{BE}}(X\|Y).
\end{equation}
\end{proposition}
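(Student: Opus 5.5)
The plan is to reduce everything to the standard fact that functional calculus commutes with unitary conjugation. For a Hermitian operator $A$ with spectral decomposition $A=\sum_j a_j |\phi_j\rangle\!\langle\phi_j|$ and a real function $\varphi$ defined on its spectrum, we have
\begin{equation}
\varphi(UAU^\dagger)=\sum_j \varphi(a_j)\, U|\phi_j\rangle\!\langle\phi_j|U^\dagger = U\varphi(A)U^\dagger,
\end{equation}
because $\{U|\phi_j\rangle\}_j$ is an orthonormal eigenbasis of $UAU^\dagger$ with the same eigenvalues. We will apply this with $\varphi=\ln$ to the operators $X$, $Y$, $X+I$ and $Y+I$. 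For the shifted operators we first note that $UXU^\dagger+I=U(X+I)U^\dagger$, since $UU^\dagger=I$.

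Next we treat each term of Definition~\ref{def:BE-rel-entropy} in turn. The entropy term $S_{\mathrm{BE}}(UXU^\dagger)$ depends only on the eigenvalues of its argument, as noted around~\eqref{eq:BE-entropy-sum-bosonic-entropies}, and conjugation preserves the spectrum. Hence $S_{\mathrm{BE}}(UXU^\dagger)=S_{\mathrm{BE}}(X)$.

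For the cross terms we compute directly:
\begin{equation}
\Tr[(UXU^\dagger+I)\ln(UYU^\dagger+I)]=\Tr[U(X+I)U^\dagger\, U\ln(Y+I)U^\dagger]=\Tr[(X+I)\ln(Y+I)].
\end{equation}
The last equality uses $U^\dagger U=I$ and cyclicity of the trace. The term $\Tr[UXU^\dagger \ln(UYU^\dagger)]=\Tr[X\ln Y]$ follows in the same way.

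The only delicate step is the support issue when $Y$ is singular, where $\ln Y$ is defined only on $\operatorname{supp}(Y)$ and the term may be $+\infty$. I would handle this using Proposition~\ref{prop:BE-faithfulness}. Since $\operatorname{supp}(UZU^\dagger)=U\operatorname{supp}(Z)$, the condition $\operatorname{supp}(X)\subseteq\operatorname{supp}(Y)$ holds if and only if it holds for the conjugated pair. So both sides are finite together, and when they are infinite they are equal as extended reals.

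In the finite case the standard convention restricts $\ln Y$ to the support of $Y$ (equivalently, uses $0\ln 0=0$ in the spectral sums). Conjugation by $U$ maps this restricted logarithm correctly, so the computation above goes through unchanged. Summing the three invariant pieces then gives the claimed identity.
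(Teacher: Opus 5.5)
Your proposal is correct and follows essentially the same route as the paper: functional calculus commutes with unitary conjugation (including $UYU^\dagger+I=U(Y+I)U^\dagger$), and cyclicity of the trace is applied term by term. Your additional handling of singular $Y$, via $\operatorname{supp}(UZU^\dagger)=U\operatorname{supp}(Z)$ so that both sides are finite or infinite together, is a careful touch that the paper's proof leaves implicit.
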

\begin{proof}
See Appendix~\ref{app:proof_BE_isometric}. 
\end{proof}

Unitary invariance guarantees that the metric depends only on the intrinsic mode occupations and their overlaps, independent of the chosen physical representation (e.g., spatial modes versus momentum modes).

Furthermore, this divergence decomposes into the classical scalar relative entropy, weighted by the quantum transition probabilities between the eigenbases of the two operators.

\begin{proposition}[Spectral Expansion]
\label{prop:BE-spectral}
Let $X$ and $Y$ have spectral decompositions $X = \sum_i x_i |\psi_i\rangle\!\langle\psi_i|$ and $Y = \sum_j y_j |\phi_j\rangle\!\langle\phi_j|$. The Bose--Einstein relative entropy defined in~\eqref{eq:BE_rel_entropy_main} expands as
\begin{equation}
    D_{\mathrm{BE}}(X\|Y) = \sum_{i,j} d_{\mathrm{BE}}(x_i \| y_j) \left| \langle \psi_i | \phi_j \rangle \right|^2, \label{eq:BE-spectral-expansion}
\end{equation}
where $d_{\mathrm{BE}}(x\|y)$ is the classical scalar Bose--Einstein divergence, defined as
\begin{equation}
    d_{\mathrm{BE}}(x\|y) \coloneqq x \ln \!\left(\frac{x}{y}\right) + (x+1) \ln\!\left( \frac{y+1}{x+1}\right).\label{eq:scalar_BE_rel_entr}
\end{equation}
\end{proposition}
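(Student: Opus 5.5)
The plan is to expand each of the four trace terms in~\eqref{eq:BE_rel_entropy_main} separately in the two eigenbases, and then use completeness to put everything into a single double sum. Write $P_i = |\psi_i\rangle\!\langle\psi_i|$ and $R_j = |\phi_j\rangle\!\langle\phi_j|$. Then $\sum_i P_i = \sum_j R_j = I$, and $\Tr[P_i R_j] = |\langle\psi_i|\phi_j\rangle|^2 \eqqcolon w_{ij}$. This matrix $w$ is doubly stochastic: $\sum_j w_{ij} = \sum_i w_{ij} = 1$.

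By functional calculus, $X\ln X = \sum_i x_i\ln x_i\,P_i$ and $(X+I)\ln(X+I) = \sum_i (x_i+1)\ln(x_i+1)\,P_i$. For the cross terms we also need the operators built from $Y$:
\begin{equation}
\ln(Y+I) = \sum_j \ln(y_j+1)\,R_j, \qquad \ln Y = \sum_j \ln y_j\,R_j .
\end{equation}
From these:

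\begin{itemize}
\item $\Tr[(X+I)\ln(Y+I)] = \sum_{i,j}(x_i+1)\ln(y_j+1)\,w_{ij}$, since the identity also decomposes as $\sum_i P_i$;
\item $\Tr[X\ln Y] = \sum_{i,j} x_i\ln y_j\,w_{ij}$.
\end{itemize}

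For the entropy term, insert $1=\sum_j w_{ij}$ for each $i$:
\begin{equation}
S_{\mathrm{BE}}(X) = \sum_{i,j}\bigl[(x_i+1)\ln(x_i+1) - x_i\ln x_i\bigr]w_{ij} .
\end{equation}
Collecting terms with a common weight $w_{ij}$ gives exactly $d_{\mathrm{BE}}(x_i\|y_j)$ as in~\eqref{eq:scalar_BE_rel_entr}, which proves~\eqref{eq:BE-spectral-expansion}.

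The main obstacle is handling zero eigenvalues, because of $0\ln 0$ and $\ln y_j$ with $y_j=0$. I would adopt the conventions $0\ln 0 = 0$ and $0\ln(0/y)=0$, and set $x\ln(x/0)=+\infty$ for $x>0$. Under these conventions, $X\ln X$ is well defined and the $(\cdot)+I$ terms are always finite.

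For $\Tr[X\ln Y]$, I would argue in two cases, in line with the support condition of Proposition~\ref{prop:BE-faithfulness}:
\begin{itemize}
\item If $\operatorname{supp}(X)\subseteq\operatorname{supp}(Y)$, every pair with $y_j=0$ and $x_i>0$ has $w_{ij}=0$. Those terms therefore drop out, and both sides are finite and equal. Here $\ln Y$ is interpreted on the support of $Y$.
\item Otherwise, some pair has $x_i>0$, $y_j=0$ and $w_{ij}>0$. Then both sides equal $+\infty$. The remaining terms are finite, so no $\infty-\infty$ ambiguity arises.
\end{itemize}

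If a cleaner route is needed, I would instead prove the identity for $Y+\epsilon I$ and take $\epsilon\to 0^+$ monotonically.
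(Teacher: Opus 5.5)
Your proposal is correct and is essentially the paper's proof: you expand each trace term in the two eigenbases, insert the completeness relation $\sum_j|\langle\psi_i|\phi_j\rangle|^2=1$ into the entropy term, and regroup the coefficients of $|\langle\psi_i|\phi_j\rangle|^2$ into $d_{\mathrm{BE}}(x_i\|y_j)$. Your handling of zero eigenvalues is a useful addition that the paper leaves implicit: the conventions $0\ln 0=0$ and $0\cdot(+\infty)=0$, together with the split into the cases $\operatorname{supp}(X)\subseteq\operatorname{supp}(Y)$ and $\operatorname{supp}(X)\not\subseteq\operatorname{supp}(Y)$.
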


\begin{proof}
See Appendix~\ref{app:proof_BE_spectral}.
\end{proof}

\subsection{Convexity, additivity, and algorithmic geometry}
\label{subsec:BE-convexity-additivity-geometry}

The Bose--Einstein relative entropy defined in~\eqref{eq:BE_rel_entropy_main} belongs to the class of matrix Bregman divergences~\cite{Dhillon2007Matrix}.

\begin{remark}[Bregman Divergence Formulation]
\label{rem:bregman-formulation}
The Bose--Einstein relative entropy coincides with the matrix Bregman divergence generated by the negative bosonic entropy, $F(X) = -S_{\mathrm{BE}}(X)$, where $S_{\mathrm{BE}}(X)$ is defined in~\eqref{eq:BE-entropy-def}. Using the gradient $\nabla F(Y) = \ln Y - \ln(Y+I)$, the divergence takes the canonical Bregman form:
\begin{equation}
    D_{\mathrm{BE}}(X\|Y) = F(X) - F(Y) - \Tr[\nabla F(Y)(X-Y)].
\end{equation}
\end{remark}

This mathematical classification has consequences for both its physical behavior and its utility in continuous optimization. In standard projected gradient descent over the positive semidefinite cone, the variable update relies on minimizing the Euclidean distance $\|X - Y\|_2^2$. This isotropic metric often pushes the algorithm outside the feasible region, requiring computationally expensive projection subroutines to restore positivity~\cite{Boyd2004convex,bubeck2015convex}. 

To overcome this computational bottleneck, the optimization can transition into the Quantum Mirror Descent framework by substituting the Euclidean proximity penalty with the Bose--Einstein relative entropy~\cite{bubeck2015convex}. By using $D_{\mathrm{BE}}(X\|X_k)$, where $X_k$ denotes the current iterate of the optimization, as the distance-generating function, the optimization operates in a non-Euclidean geometry that naturally mirrors the feasible space.

This algorithmic geometry underpins the thermodynamic framework developed in Section~\ref{sec:general_sdp}. In optimization theory, it is a well-established equivalence that regularizing a primal objective with a strictly convex function is mathematically identical to using its generated Bregman divergence as the proximity penalty in Mirror Descent~\cite{Beck2003Mirror, bubeck2015convex}. 
By regularizing the primal semidefinite program with the Bose--Einstein entropy, our framework inherently adopts this non-Euclidean geometry. Consequently, solving the regularized dual problem naturally executes Quantum Mirror Descent: the updates natively respect the boundary constraints, mapping directly to the valid Bose--Einstein thermal operators $X_T(\mu)$ derived earlier~\cite{Nesterov2009PrimalDual}. This circumvents the need for hard spectral projections, structurally adapting the algorithm to the exact physical geometry of the unbounded optimization space and reducing the computational overhead per iteration.

While $D_{\mathrm{BE}}(X\|Y)$ is a strictly convex Bregman divergence, this strict convexity applies only to its first argument; it does not satisfy the stronger property of joint convexity in both arguments. This feature has consequences for the data-processing inequality, a fundamental property of distinguishability measures in quantum information theory~\cite{Wilde2017quantum}. Because monotonicity under quantum channels requires joint convexity, the failure of joint convexity precludes the generalized data-processing inequality.

\begin{proposition}[Convexity and Data Processing]
\label{prop:BE-convexity-main}
$D_{\mathrm{BE}}(X\|Y)$ is strictly convex in its first argument $X$. However, it is not jointly convex in $(X,Y)$. Consequently, it does not satisfy the generalized data-processing inequality under arbitrary Completely Positive Trace-Preserving (CPTP) maps.
\end{proposition}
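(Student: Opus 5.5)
The plan has three parts: strict convexity in the first argument, a counterexample to joint convexity, and a counterexample to data processing. The last part is the delicate one. The implication ``no joint convexity $\Rightarrow$ no data processing'' is not a theorem in general. So I will not derive the final claim from the second one; I will exhibit a concrete CPTP map that increases $D_{\mathrm{BE}}$.

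\textbf{Strict convexity in $X$.} For fixed $Y$, Definition~\ref{def:BE-rel-entropy} gives
\begin{equation}
D_{\mathrm{BE}}(X\|Y) = -S_{\mathrm{BE}}(X) + \Tr[X(\ln(Y+I)-\ln Y)] + \Tr[\ln(Y+I)].
\end{equation}
The second term is linear in $X$ and the third is constant. Strict convexity in $X$ therefore follows from the strict concavity of $S_{\mathrm{BE}}$ established in Appendix~\ref{app:concavity_BEentropy}. Equivalently, Remark~\ref{rem:bregman-formulation} shows that a Bregman divergence inherits the strict convexity of its generator in the first argument.

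\textbf{Failure of joint convexity.} A scalar (commuting) counterexample suffices, since by Proposition~\ref{prop:BE-spectral} diagonal operators reduce to $d_{\mathrm{BE}}$ in~\eqref{eq:scalar_BE_rel_entr}. I will compute the Hessian of $d_{\mathrm{BE}}(x\|y)$. The entries are
\begin{equation}
\partial_x^2 d_{\mathrm{BE}} = \frac{1}{x}-\frac{1}{x+1}, \qquad \partial_x\partial_y d_{\mathrm{BE}} = -\frac{1}{y}+\frac{1}{y+1},
\end{equation}
and
\begin{equation}
\partial_y^2 d_{\mathrm{BE}} = \frac{x}{y^2}-\frac{x+1}{(y+1)^2}.
\end{equation}
On the diagonal $x=y$ the determinant vanishes, because $1/(x(x+1))$ times the corresponding $yy$-entry equals the square of the cross term. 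So I will look off the diagonal. For $x$ small and $y$ small but nonzero, $\partial_y^2 d_{\mathrm{BE}} \approx x/y^2 - 1$, which is negative whenever $x<y^2$. This gives a negative diagonal Hessian entry, hence non-convexity along the $y$-direction. Concretely, fix $x=0$: then $d_{\mathrm{BE}}(0\|y)=\ln(1+y)$, which is strictly concave in $y$. Taking $X=0$ and $Y_0,Y_1$ distinct multiples of $I$, midpoint convexity fails, so $D_{\mathrm{BE}}$ is not even convex in its second argument, let alone jointly convex.

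\textbf{Failure of data processing.} I will use the same concavity in $y$. Take $X=0\oplus0$ and $Y=y_1\oplus y_2$ as diagonal $2\times2$ matrices with $y_1\neq y_2$. Apply the completely dephasing-and-mixing channel $\mathcal{N}(Z)=\tfrac{\Tr[Z]}{2}I$, which is CPTP. Then
\begin{equation}
D_{\mathrm{BE}}(\mathcal{N}(X)\|\mathcal{N}(Y)) = 2\ln\!\left(1+\tfrac{y_1+y_2}{2}\right) > \ln(1+y_1)+\ln(1+y_2) = D_{\mathrm{BE}}(X\|Y),
\end{equation}
where the strict inequality is strict concavity of $\ln(1+y)$. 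So the divergence strictly increases under a CPTP map, and the data-processing inequality fails.

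I expect no real obstacle. The only care needed is to ensure the counterexample satisfies the support condition of Proposition~\ref{prop:BE-faithfulness} so that all quantities are finite. With $X=0$ this holds trivially, and the $0\ln 0$ convention gives $d_{\mathrm{BE}}(0\|y)=\ln(1+y)$.
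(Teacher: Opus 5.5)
Your proof is correct. The first part matches the paper, but the other two take a different route.

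\emph{Joint convexity.} The paper computes the full Hessian of $d_{\mathrm{BE}}$ and finds $\det\mathcal{H}=-(x-y)^2/[x(x+1)y^2(y+1)^2]<0$ for $x\neq y$. So the Hessian is indefinite at every interior off-diagonal point. You instead restrict to the line $x=0$, where $d_{\mathrm{BE}}(0\|y)=\ln(1+y)$ is strictly concave. Your argument is shorter and proves more, since convexity already fails in the second argument alone. The paper's computation shows instead that the failure is generic in the interior rather than tied to the boundary point $X=0$. The exact entry $\partial_y^2 d_{\mathrm{BE}}=(2xy+x-y^2)/[y^2(y+1)^2]$ would let you move your example into the interior as well.

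\emph{Data processing.} Here the difference is substantive. The paper derives the failure of the DPI from the failure of joint convexity, by applying the DPI to the partial trace of $pX_1\oplus(1-p)X_2$. However, direct-sum additivity only gives $D_{\mathrm{BE}}(pX_1\|pY_1)+D_{\mathrm{BE}}((1-p)X_2\|(1-p)Y_2)$ on the right. $D_{\mathrm{BE}}$ is not positively homogeneous: for example, $D_{\mathrm{BE}}(0\|pY)=\Tr[\ln(I+pY)]\neq p\,\Tr[\ln(I+Y)]$. In fact, the integral representation of Lemma~\ref{lem:BE-integral-representation} gives $d_{\mathrm{BE}}(px\|py)\geq p\,d_{\mathrm{BE}}(x\|y)$ for $p\in[0,1]$. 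The DPI for that partial trace would therefore only imply an inequality weaker than joint convexity. Your remark that ``not jointly convex'' does not automatically give ``no DPI'' is exactly right. Your explicit violation under the CPTP replacement channel $Z\mapsto\frac{\Tr[Z]}{2}I$, namely $2\ln\bigl(1+\frac{y_1+y_2}{2}\bigr)>\ln(1+y_1)+\ln(1+y_2)$, proves the final claim directly and avoids that gap.
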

\begin{proof}
See Appendix~\ref{app:proof_BE_convexity}. We prove the failure of joint convexity by demonstrating that the Hessian of the underlying scalar function $d_{\mathrm{BE}}(x\|y)$ defined in~\eqref{eq:scalar_BE_rel_entr} is indefinite. As detailed in the appendix, because monotonicity under the partial trace requires joint convexity, the absence of joint convexity causes the violation of the generalized data-processing inequality.
\end{proof}

The lack of joint convexity and the corresponding failure of the data-processing inequality highlight the distinction between normalized density matrices and unbounded positive operators. The Umegaki relative entropy satisfies the data-processing inequality because quantum channels monotonically contract the distinguishability of strict probability distributions ($\Tr[\rho]=1$). In our framework, however, $X$ and $Y$ represent unnormalized, macroscopic quantities, whose distinguishability is not guaranteed to contract under arbitrary quantum operations: physically, macroscopic occupation numbers can grow or shrink under physical channels such as amplifiers or attenuators, so contraction in general is not required. A restricted form of monotonicity does hold, however, under the affine maps that model bosonic Gaussian channels; we develop this in Section~\ref{subsec:BE-affine-monotonicity}.

Correspondingly, the divergence is not invariant under the standard tensor product ($D_{\mathrm{BE}}(X \otimes Z \| Y \otimes Z) \neq D_{\mathrm{BE}}(X\|Y)$). Instead, the natural geometric operation for combining independent variables -- whether they represent disjoint block-diagonal matrices in a semidefinite program or uncoupled bosonic harmonic oscillators -- is the direct sum. Under this operation, the Bose--Einstein relative entropy decomposes additively.

\begin{proposition}[Additivity under Direct Sums]
\label{prop:BE-additivity}
For independent uncoupled systems, the divergence decomposes additively:
\begin{equation}
    D_{\mathrm{BE}}(X \oplus Z \| Y \oplus W) = D_{\mathrm{BE}}(X\|Y) + D_{\mathrm{BE}}(Z\|W).
\end{equation}
\end{proposition}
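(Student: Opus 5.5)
The plan is to reduce the statement to the fact that primary matrix functions respect block-diagonal structure, and then use additivity of the trace over blocks. Throughout, $X,Y$ act on a space of dimension $d_1$, and $Z,W$ act on a space of dimension $d_2$. The identity on the direct-sum space is $I_{d_1+d_2}=I_{d_1}\oplus I_{d_2}$.

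The first step is to record that for any continuous scalar function $h$ defined on the relevant spectra, we have $h(A\oplus B)=h(A)\oplus h(B)$. To see this, diagonalize $A=U D_A U^\dagger$ and $B=V D_B V^\dagger$. Then $A\oplus B=(U\oplus V)(D_A\oplus D_B)(U\oplus V)^\dagger$, and $h$ acts entrywise on the diagonal $D_A\oplus D_B$. Applying this with $h(x)=x\ln x$, $h(x)=(x+1)\ln(x+1)$, $h(x)=\ln(x+1)$ and $h(x)=\ln x$ gives
\begin{equation}
\ln\!\bigl((Y\oplus W)+I\bigr)=\ln(Y+I)\oplus\ln(W+I),
\end{equation}
together with the analogous identities for the other functions. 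Products of block-diagonal operators are again block-diagonal, so $(X\oplus Z)\ln(Y\oplus W)=X\ln Y\oplus Z\ln W$, and similarly for the $(X+I)\ln(Y+I)$ term.

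The second step is to insert these identities into Definition~\ref{def:BE-rel-entropy} and use $\Tr[A\oplus B]=\Tr[A]+\Tr[B]$. This splits both $-S_{\mathrm{BE}}(X\oplus Z)$ and the cross term $\Tr[(X\oplus Z+I)\ln(Y\oplus W+I)-(X\oplus Z)\ln(Y\oplus W)]$ into an $(X,Y)$ part plus a $(Z,W)$ part. Regrouping these parts gives exactly $D_{\mathrm{BE}}(X\|Y)+D_{\mathrm{BE}}(Z\|W)$.

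Alternatively, one can argue spectrally via Proposition~\ref{prop:BE-spectral}. The eigenvectors of $X\oplus Z$ and of $Y\oplus W$ can be chosen supported in a single block each. Cross-block overlaps $\langle\psi_i|\phi_j\rangle$ then vanish, and the double sum separates into the two expansions.

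The only delicate point is where the divergence is infinite, namely when $\ln Y$ or $\ln W$ involve the logarithm of zero. I would adopt the standard conventions, $0\ln 0=0$ and $\ln$ taken on the support, with the value $+\infty$ when the support condition of Proposition~\ref{prop:BE-faithfulness} fails. The support of $X\oplus Z$ is contained in that of $Y\oplus W$ if and only if both blockwise support inclusions hold. Hence both sides are infinite simultaneously, and the identity holds in the extended reals, since neither term can be $-\infty$ by non-negativity. This bookkeeping is the main (mild) obstacle; the algebra itself is routine.
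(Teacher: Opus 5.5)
Your proposal is correct and follows essentially the same route as the paper's proof: you evaluate the matrix functions block-wise, use that products of block-diagonal operators are block-diagonal, and use additivity of the trace over direct sums. Your extra bookkeeping for the infinite case goes slightly beyond the paper, which leaves it implicit: the support condition holds for the direct sum if and only if it holds in each block, and both sides are non-negative.
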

\begin{proof}
See Appendix~\ref{app:proof_BE_additivity}. This follows directly from the block-diagonal structure of the matrix logarithm.
\end{proof}

\subsection{Monotonicity under affine maps and bosonic Gaussian channels}
\label{subsec:BE-affine-monotonicity}

The failure of general data processing established in Proposition~\ref{prop:BE-convexity-main} does not preclude a restricted monotonicity. We show here that $D_{\mathrm{BE}}$ contracts under a physically natural class of affine maps that includes the attenuator, amplifier, and additive-noise channels of bosonic Gaussian quantum information theory. The argument rests on the Bregman integral representation of the scalar divergence $d_{\mathrm{BE}}$ in~\eqref{eq:scalar_BE_rel_entr}.

\begin{lemma}[Integral representation]
\label{lem:BE-integral-representation}
For all $x, y > 0$,
\begin{equation}\label{eq:BE-integral-rep}
    d_{\mathrm{BE}}(x\|y) = (x-y)^2 \int_0^1 \frac{1-t}{z_t(z_t+1)}\,dt, 
\end{equation}
where
\begin{equation}
    z_t \coloneqq y + t(x-y).
\end{equation}
\end{lemma}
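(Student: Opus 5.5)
This is the integral (Taylor) remainder formula applied to the scalar generator of the Bregman divergence. Let $\varphi(z) \coloneqq -g(z) = z\ln z - (z+1)\ln(z+1)$ for $z>0$, the scalar version of $F=-S_{\mathrm{BE}}$ from Remark~\ref{rem:bregman-formulation}. Its derivatives are
\begin{equation}
\varphi'(z) = \ln z - \ln(z+1), \qquad \varphi''(z) = \frac{1}{z}-\frac{1}{z+1} = \frac{1}{z(z+1)}.
\end{equation}

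The first step is to verify that $d_{\mathrm{BE}}(x\|y)$ in~\eqref{eq:scalar_BE_rel_entr} equals the scalar Bregman divergence $\varphi(x)-\varphi(y)-\varphi'(y)(x-y)$. Expanding the right-hand side gives
\begin{equation}
x\ln x-(x+1)\ln(x+1)-y\ln y+(y+1)\ln(y+1)-(x-y)\ln y+(x-y)\ln(y+1).
\end{equation}
Here the $y\ln y$ terms cancel against $-(x-y)\ln y$, leaving $-x\ln y$. Likewise, $(y+1)\ln(y+1)+(x-y)\ln(y+1)$ collapses to $(x+1)\ln(y+1)$. The result is $x\ln(x/y)+(x+1)\ln\big((y+1)/(x+1)\big)$, which is exactly~\eqref{eq:scalar_BE_rel_entr}.

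The second step is Taylor's theorem with integral remainder along the segment $z_t = y+t(x-y)$, $t\in[0,1]$. Since $x,y>0$, every $z_t$ is positive, so $\varphi$ is smooth on the whole segment. Set $h(t)\coloneqq\varphi(z_t)$. Then $h'(t)=\varphi'(z_t)(x-y)$ and $h''(t)=\varphi''(z_t)(x-y)^2$. The identity
\begin{equation}
h(1)-h(0)-h'(0)=\int_0^1(1-t)\,h''(t)\,dt
\end{equation}
follows from integrating by parts once. Substituting $\varphi''(z_t)=1/(z_t(z_t+1))$ and factoring out $(x-y)^2$ gives~\eqref{eq:BE-integral-rep}.

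There is no real obstacle. The only points needing care are that the algebra of the first step cancels correctly, and that positivity of $x,y$ keeps $z_t$ away from $0$, so the integrand is bounded and the integral is finite. The case $x=y$ holds trivially, since both sides are zero.
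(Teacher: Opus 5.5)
Your proposal is correct and follows essentially the same route as the paper: you identify $d_{\mathrm{BE}}(x\|y)$ as the scalar Bregman divergence generated by $z\ln z-(z+1)\ln(z+1)$, whose second derivative is $1/(z(z+1))$, and then apply the standard Taylor integral-remainder form. You also spell out the algebraic verification and the integration by parts that the paper only asserts as ``a direct calculation'' and ``the standard integral form.''
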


\begin{proof}
See Appendix~\ref{app:proof_BE_affine_monotonicity}. This is the standard Bregman integral form applied to the generating function $f(x) = x \ln x - (x+1)\ln(x+1)$, whose second derivative is $f''(x) = 1/[x(x+1)]$.
\end{proof}

\begin{theorem}[Affine monotonicity]
\label{thm:BE-affine-monotonicity}
Let $a, b \geq 0$ satisfy
\begin{equation}\label{eq:BE-affine-condition}
    2b + 1 \geq a.
\end{equation}
Then for all $X, Y \geq 0$,
\begin{equation}\label{eq:BE-affine-monotonicity-bnd}
    D_{\mathrm{BE}}(X\|Y) \geq D_{\mathrm{BE}}(aX + bI\,\|\,aY + bI).
\end{equation}
\end{theorem}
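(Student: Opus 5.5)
The plan is to reduce the operator inequality to a scalar one via the spectral expansion in Proposition~\ref{prop:BE-spectral}, and then prove the scalar inequality pointwise under the integral sign of Lemma~\ref{lem:BE-integral-representation}. The key structural observation is that the affine map $Z\mapsto aZ+bI$ does not change eigenvectors. If $X=\sum_i x_i|\psi_i\rangle\!\langle\psi_i|$ and $Y=\sum_j y_j|\phi_j\rangle\!\langle\phi_j|$, then $aX+bI=\sum_i (ax_i+b)|\psi_i\rangle\!\langle\psi_i|$ and $aY+bI=\sum_j (ay_j+b)|\phi_j\rangle\!\langle\phi_j|$, with exactly the same transition weights $|\langle\psi_i|\phi_j\rangle|^2\geq 0$. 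Hence by~\eqref{eq:BE-spectral-expansion}
\begin{equation*}
D_{\mathrm{BE}}(X\|Y)-D_{\mathrm{BE}}(aX+bI\|aY+bI)=\sum_{i,j}\left[d_{\mathrm{BE}}(x_i\|y_j)-d_{\mathrm{BE}}(ax_i+b\|ay_j+b)\right]|\langle\psi_i|\phi_j\rangle|^2 .
\end{equation*}
It therefore suffices to show that $d_{\mathrm{BE}}(ax+b\|ay+b)\leq d_{\mathrm{BE}}(x\|y)$ for all scalars $x,y\geq 0$.

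For $x,y>0$ (and $a>0$), I would apply Lemma~\ref{lem:BE-integral-representation} to both sides. Under the affine map, the interpolation point becomes $z'_t=(ay+b)+t\big((ax+b)-(ay+b)\big)=az_t+b$, and the prefactor becomes $a^2(x-y)^2$. The inequality then follows from the pointwise bound
\begin{equation*}
\frac{a^2}{(az+b)(az+b+1)}\leq\frac{1}{z(z+1)}\qquad\text{for all } z>0,
\end{equation*}
integrated against the nonnegative weight $(1-t)(x-y)^2$. Cross-multiplying, this bound is equivalent to
\begin{equation*}
(az+b)(az+b+1)-a^2z(z+1)=az(2b+1-a)+b(b+1)\geq 0 ,
\end{equation*}
which holds precisely because of the hypothesis~\eqref{eq:BE-affine-condition} together with $a,b\geq 0$. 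This is where the condition $2b+1\geq a$ enters.

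The remaining work, and the step I expect to need the most care, is the boundary cases, where eigenvalues vanish and the integral representation is not directly stated.
\begin{itemize}
\item If $a=0$, the right-hand side is $D_{\mathrm{BE}}(bI\|bI)=0$, so the claim reduces to nonnegativity (Proposition~\ref{prop:BE-faithfulness}).
\item If $y_j=0<x_i$ with nonzero overlap, then $d_{\mathrm{BE}}(x_i\|0)=+\infty$, so the left side is infinite and the inequality is trivial. This is consistent with the support condition of Proposition~\ref{prop:BE-faithfulness}.
\item If $x=y=0$, both scalar terms vanish, since $d_{\mathrm{BE}}(b\|b)=0$.
\item If $x=0<y$, then $d_{\mathrm{BE}}(0\|y)=\ln(y+1)$ is finite and $d_{\mathrm{BE}}$ is continuous in $x$ at $0^+$, using the convention $0\ln 0=0$. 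The scalar inequality therefore extends from $x>0$ by taking the limit $x\to 0^+$.
\end{itemize}
I would also check that the spectral expansion is applied with these same conventions whenever some $y_j=0$ but $b>0$. In that case the right-hand operators are strictly positive, so the right-hand side is finite and no further issue arises.
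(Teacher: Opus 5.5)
Your proposal is correct and follows the paper's proof essentially step for step: the integral representation of Lemma~\ref{lem:BE-integral-representation}, the pointwise comparison via $(az+b)(az+b+1)-a^2z(z+1)=b(b+1)+az(2b+1-a)\geq 0$, and the lift through the spectral expansion, using that $Z\mapsto aZ+bI$ preserves eigenvectors. Your separate treatment of $a=0$ and of vanishing eigenvalues is a genuine improvement. The paper applies the lemma, which is stated only for $x,y>0$, to arbitrary $X,Y\geq 0$ without addressing those boundary cases.
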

\begin{proof}
See Appendix~\ref{app:proof_BE_affine_monotonicity}. The scalar inequality $d_{\mathrm{BE}}(x\|y) \geq d_{\mathrm{BE}}(ax+b\|ay+b)$ follows from comparing the integrands in~\eqref{eq:BE-integral-rep} pointwise in $t$, and lifts to the operator level via the spectral expansion of Proposition~\ref{prop:BE-spectral}, since the affine map $Z \mapsto aZ + bI$ preserves the eigenbases of $X$ and $Y$ and shifts their eigenvalues by $z \mapsto az + b$.
\end{proof}

Three specializations of Theorem~\ref{thm:BE-affine-monotonicity} correspond to the canonical single-mode bosonic Gaussian channels:
\begin{itemize}
    \item \emph{Attenuator} ($a = \eta$, $b = (1-\eta)N$, with $\eta \in [0,1]$ and $N \geq 0$): models a lossy channel with thermal environment of mean photon number $N$.
    \item \emph{Amplifier} ($a = G$, $b = (G-1)(N+1)$, with $G \geq 1$ and $N \geq 0$): models a phase-insensitive amplifier with thermal noise.
    \item \emph{Additive noise} ($a = 1$, $b = N$, with $N \geq 0$): models a classical additive-noise channel.
\end{itemize}
In each case the algebraic condition~\eqref{eq:BE-affine-condition} is satisfied; the explicit verifications of $2b+1-a \geq 0$ are carried out in Appendix~\ref{app:proof_BE_affine_monotonicity}.

The Bose--Einstein relative entropy thus exhibits a dichotomy: it fails the general data-processing inequality (Proposition~\ref{prop:BE-convexity-main}) but contracts under the affine maps that govern the lossy, amplifying, and noisy transformations of bosonic occupation numbers. This identifies $D_{\mathrm{BE}}$ as the natural divergence for tracking the distinguishability of bosonic-mode populations across the bosonic Gaussian channels of continuous-variable quantum information.

\subsection{Bose--Einstein Fisher information matrix}
\label{subsec:BE-fisher-info}

In information geometry, the distance between infinitesimally close parameterized distributions is governed by the Fisher information matrix~\cite{Amari2000methods}. Using the Bose--Einstein relative entropy, we can formally define the Bose--Einstein Fisher information matrix for the family of thermal operators $X_T(\mu)$ parameterized by the dual vector $\mu \in \mathbb{R}^c$:
\begin{equation}
    \left[I(\mu)\right]_{i,j} \coloneqq \left.\frac{\partial^{2}}{\partial\varepsilon_{i}\partial\varepsilon_{j}}D_{\mathrm{BE}}(X_{T}(\mu)\|X_{T}(\mu+\varepsilon))\right|_{\varepsilon=0}.
\end{equation}
This metric is connected to the algorithmic geometry discussed in the previous subsection, establishing the geometric curvature of the optimization space.

\begin{proposition}[Bose--Einstein Fisher Information]
\label{prop:BE-fisher-info}
The elements of the Bose--Einstein Fisher information matrix evaluate exactly to:
\begin{align}
    \left[I(\mu)\right]_{i,j} &= \frac{1}{T^{2}}\int_{0}^{1}ds\,\Tr\!\left[X_{T}(\mu,s)Q_{i}X_{T}(\mu,1-s)Q_{j}\right] \\
    &= \frac{1}{T^{2}}\operatorname{Re}\!\left\{\Tr\!\left[X_{T}(\mu)\Phi_{\mu}(Q_{i})\left(X_{T}(\mu)+I\right)Q_{j}\right]\right\}. \label{eq:BE-alt-exp-Fisher-info}
\end{align}
Consequently, the Fisher information matrix is strictly proportional to the negative Hessian of the dual objective function: $I(\mu) = -\frac{1}{T} \nabla^2 f_T(\mu)$.
\end{proposition}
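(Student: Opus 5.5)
Since $D_{\mathrm{BE}}$ is the Bregman divergence generated by $F=-S_{\mathrm{BE}}$ (Remark~\ref{rem:bregman-formulation}), its second derivative in the second argument at coincidence is governed by the Hessian of the generator. Rather than differentiating twice in $\varepsilon$, we will compute the mixed form directly. Writing $Y_\varepsilon \coloneqq X_T(\mu+\varepsilon)$ and $X\coloneqq X_T(\mu)=Y_0$, the Bregman form gives
\begin{equation*}
D_{\mathrm{BE}}(X\|Y_\varepsilon)=F(X)-F(Y_\varepsilon)-\Tr[\nabla F(Y_\varepsilon)(X-Y_\varepsilon)],
\end{equation*}
with $\nabla F(Y)=\ln Y-\ln(Y+I)$. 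For the thermal family, $\ln X_T-\ln(X_T+I)=-K_\mu/T$, because $X_T(X_T+I)^{-1}=e^{-K_\mu/T}$ from~\eqref{eq:gen-BE-op-2}. Hence $\nabla F(Y_\varepsilon)=-(H-(\mu+\varepsilon)\cdot Q)/T$, which is \emph{affine} in $\varepsilon$. This linearity is the key simplification.

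\textbf{Step 1.} Differentiate once in $\varepsilon_j$. The term $-\partial_j F(Y_\varepsilon)=-\Tr[\nabla F(Y_\varepsilon)\partial_j Y_\varepsilon]$ cancels against the $+\Tr[\nabla F(Y_\varepsilon)\partial_jY_\varepsilon]$ produced by the last term. What remains is
\begin{equation*}
\partial_j D=-\Tr[\partial_j\nabla F(Y_\varepsilon)\,(X-Y_\varepsilon)]=-\tfrac{1}{T}\Tr[Q_j(X-Y_\varepsilon)].
\end{equation*}

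\textbf{Step 2.} Differentiate again in $\varepsilon_i$ and set $\varepsilon=0$:
\begin{equation*}
[I(\mu)]_{i,j}=\tfrac1T\,\partial_{\mu_i}\Tr[Q_jX_T(\mu)].
\end{equation*}
By Proposition~\ref{prop:gradient-BE-gen-obj-func}, $\Tr[Q_jX_T(\mu)]=q_j-\partial_{\mu_j}f_T(\mu)$. Therefore $[I(\mu)]_{i,j}=-\frac1T\partial_{\mu_i}\partial_{\mu_j}f_T(\mu)$, which is exactly $I(\mu)=-\frac1T\nabla^2f_T(\mu)$.

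\textbf{Step 3.} Substituting the two expressions of Proposition~\ref{prop:hessian-BE-gen-obj-func}, namely \eqref{eq:1st-hessian-exp} and \eqref{eq:2nd-hessian-exp}, into $-\frac1T\nabla^2f_T$ yields both displayed formulas with prefactor $1/T^2$. The resulting matrix is symmetric, since the $s\mapsto1-s$ substitution and cyclicity of the trace swap $i$ and $j$.

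\textbf{Main obstacle.} The only delicate point is justifying the interchanges of differentiation, together with the smoothness of $\varepsilon\mapsto X_T(\mu+\varepsilon)$ near $0$. This requires $K_{\mu+\varepsilon}>0$ for small $\varepsilon$, which holds by continuity of eigenvalues since $K_\mu>0$; the map is then analytic via the functional calculus. The identity $\nabla F(X_T(\mu))=-K_\mu/T$ must also be verified carefully using commutation of $X_T$ with $K_\mu$. Everything else reduces to results already established.
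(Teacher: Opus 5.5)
Your proof is correct and rests on the same key fact as the paper's proof: $\ln(Y_\varepsilon+I)-\ln Y_\varepsilon = K_{\mu+\varepsilon}/T$ is affine in $\varepsilon$, so only one term of $D_{\mathrm{BE}}$ survives two differentiations. The difference is only bookkeeping. The paper substitutes this identity to write $D_{\mathrm{BE}}(X_T(\mu)\|X_T(\mu+\varepsilon))$ as an $\varepsilon$-independent term, plus a term linear in $\varepsilon$, minus $\Tr[\ln(I-e^{-K_{\mu+\varepsilon}/T})]=\frac{1}{T}\bigl(f_T(\mu+\varepsilon)-(\mu+\varepsilon)\cdot q\bigr)$, and reads off $-\frac{1}{T}\nabla^2 f_T$ directly; you instead differentiate once using the Bregman cancellation and then invoke the gradient formula of Proposition~\ref{prop:gradient-BE-gen-obj-func}.
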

\begin{proof}
See Appendix~\ref{sec:proof-be-fisher}.
\end{proof}

The presence of the term $X_T(\mu)(X_T(\mu)+I)$ in the Fisher information matrix, as opposed to the $M_T(\mu)(I-M_T(\mu))$ term found in the Fermi--Dirac Fisher metric~\cite[Proposition 34]{liu2026sdp_fermi}, is a direct manifestation of the different statistics of bosons and fermions.

\section{Conclusion}
\label{sec:conclusion}

We have established a thermodynamic framework for solving general SDPs by interpreting unbounded positive semidefinite operators as the expected occupation numbers of independent bosonic modes. By regularizing the standard SDP with the Bose--Einstein entropy, the problem maps to a bosonic free-energy minimization at strictly positive temperature, with optimal primal variable taking the explicit form of a Bose--Einstein thermal operator $X_{T}(\mu) = (e^{(H-\mu\cdot Q)/T}-I)^{-1}$. This completes a thermodynamic trilogy of SDP frameworks: the Boltzmann framework~\cite{liu2025sdp} for trace-normalized variables, the Fermi--Dirac framework~\cite{liu2026sdp_fermi} for measurement operators bounded by the identity, and the Bose--Einstein framework of the present paper for unbounded positive operators. The three frameworks share a common skeleton consisting of a physically motivated entropy, a free-energy regularization, an unconstrained concave dual, and gradient/Hessian formulas as thermal expectations, but differ in the operator domain they handle.

In this work we showed that the Bose--Einstein framework allows us to obtain the following results: (i) an approximation-error bound (Proposition~\ref{prop:approx-error-spectral-gap}) that depends only on the ground-space degeneracy $d_{0}$ and the spectral gap $\Delta$ of the dual slack operator, thereby decoupling the duality gap from the global Hilbert-space dimension $d$ in the gapped regime; (ii) the introduction of the Bose--Einstein quantum relative entropy $D_{\mathrm{BE}}(X\|Y)$ as a canonical Bregman divergence on the unbounded positive semidefinite cone, together with a restricted monotonicity (Theorem~\ref{thm:BE-affine-monotonicity}) under the affine maps modeling bosonic Gaussian channels; and (iii) hybrid quantum--classical algorithms (Algorithms~\ref{alg:BE-thermal-alg} and~\ref{alg:hessian-est-alg}) that estimate the gradient and Hessian of the dual objective using only Hamiltonian simulation, Hadamard tests, and classical Cauchy sampling. These quantum primitives avoid quantum Gibbs-state preparation typical of other quantum SDP solvers, at the cost of a polynomial dependence on the spectral parameter $T/\lambda_{\min}$.

Several natural directions arise as amenable paths of further investigation. An open question that remains is the relationship between the temperature $T$ and the smallest eigenvalue $\lambda_{\min}$ of the dual slack operator $K_{\mu_{T}^{\star}}$ along the optimization trajectory. Complementary slackness forces $\lambda_{\min}\to 0$ at the unregularized optimum, and our runtime bounds for both Algorithms~\ref{alg:BE-thermal-alg} and~\ref{alg:hessian-est-alg} depend inverse-polynomially on $\lambda_{\min}$. A precise characterization of how $\lambda_{\min}$ depends on $T$ in the small-$T$ limit, and on the structural parameters $d_{0}$ and $\Delta$ of the unregularized optimum, would close the gap to a clean end-to-end $\varepsilon$-complexity statement. We expect this characterization to depend on the specific class of SDP under consideration and to be amenable to a perturbative analysis around $\mu^{\star}$. 

On the theoretical side, the Bose--Einstein quantum relative entropy is a stand-alone information-theoretic primitive whose properties on the unbounded cone deserve further investigation; potential applications include continuity bounds for non-Gaussian states in continuous-variable quantum information, entanglement quantification for unnormalized operators, and matrix-Bregman optimization in classical statistics. 

On the algorithmic side, the Bose--Einstein Mirror Descent perspective discussed in Section~\ref{subsec:BE-convexity-additivity-geometry} suggests carrying over ideas from the classical MMW literature into our framework.  These include width reduction~\cite{Garber2016Sublinear}, where the polynomial dependence on a problem-specific quantity is reduced by carefully balancing primal and dual updates, and variance-reduction techniques~\cite{Allen-Zhu2017variance,Carmon2020Variance}, where past stochastic gradients are reused to lower the variance of the Monte Carlo estimators. Both ideas have the potential to improve the polynomial dependence on $T/\lambda_{\min}$ in our runtime bounds. 

On the application side, identifying classes of SDPs that fit natively into this unbounded-cone setting -- such as bosonic moment problems, quantum optical hypothesis testing, or the optimization of molecular vibrational modes in quantum chemistry -- would showcase the broad practical relevance of the framework.

Finally, numerically simulating the proposed hybrid quantum--classical gradient and Newton methods represents a natural next step to validate their performance empirically, following a similar methodology to the numerical investigations recently conducted for the analogous Boltzmann framework~\cite{Minervini_2026}.

Overall, the Bose--Einstein framework opens a new avenue for both theoretical and practical advances in quantum optimization, with rich connections to quantum information theory, statistical mechanics, and algorithm design.

\section*{Acknowledgments}

MM and MMW
acknowledge support from the Cornell School of Electrical and Computer Engineering. NL acknowledges funding from the
Science and Technology Commission of Shanghai Municipality (STCSM)
grant no.~24LZ1401200 (21JC1402900), NSFC grants no.~12471411 and
no.~12341104, the Shanghai Jiao Tong University 2030 Initiative, the Shanghai Pilot Program for Basic Research, 
and the Fundamental Research Funds for the Central Universities. 

\bibliography{references}

\appendix
\onecolumngrid

\section{Bose--Einstein entropy}
\label{app:be-entropy}

\subsection{Thermal-state interpretation}
\label{app:be_entropy_thermal_state}

In this appendix, we show that the Bose--Einstein entropy $S_{\mathrm{BE}}(X)$ defined in~\eqref{eq:BE-entropy-def} coincides with the von Neumann entropy of a non-interacting bosonic thermal state whose mean occupation numbers, in the eigenmodes of $X$, are the eigenvalues of $X$. This justifies the physical interpretation invoked throughout Section~\ref{sec:free_energy_min}.

\paragraph*{Step 1: Maximum-entropy distribution at fixed mean occupation.}
Consider the constrained optimization problem
\begin{equation}\label{eq:max-ent-problem}
\max_{\{p(n)\}_{n\geq0}}\,\left\{-\sum_{n=0}^{\infty} p(n)\ln p(n) \;:\;\sum_{n=0}^{\infty} p(n) = 1,\;\;\sum_{n=0}^{\infty} n\,p(n) = x\right\},
\end{equation}
where $\{p(n)\}_{n\geq0}$ ranges over discrete probability distributions on the bosonic occupation numbers $n\in\{0,1,2,\dots\}$ and $x\geq0$ is the prescribed mean occupation. Introducing Lagrange multipliers $\alpha$ and $\beta$ for the normalization and mean-occupation constraints, respectively, the Lagrangian is
\begin{equation}
\mathcal{L}(p,\alpha,\beta) = -\sum_{n=0}^{\infty}p(n)\ln p(n) - \alpha\!\left[\sum_{n=0}^{\infty}p(n) - 1\right] - \beta\!\left[\sum_{n=0}^{\infty}n\,p(n) - x\right].
\end{equation}
Setting $\partial\mathcal{L}/\partial p(n) = 0$ for every $n\geq0$ yields the stationarity condition $-\ln p(n) - 1 - \alpha - \beta n = 0$, so the optimal distribution has the exponential form
\begin{equation}
p(n) = C\,q^{n}, \qquad C \coloneqq e^{-(1+\alpha)},\quad q \coloneqq e^{-\beta}.
\end{equation}
The normalization constraint $\sum_{n=0}^{\infty}C\,q^n = C/(1-q) = 1$ gives $C = 1-q$, and the mean-occupation constraint $\sum_{n=0}^{\infty} n\,(1-q)\,q^n = q/(1-q) = x$ gives $q = x/(x+1)$ and hence $1-q = 1/(x+1)$. The optimal distribution is therefore the geometric distribution
\begin{equation}\label{eq:geom-dist}
p(n) = (1-q)\,q^{n}, \qquad q = \frac{x}{x+1},\quad 1-q = \frac{1}{x+1},
\end{equation}
on the bosonic occupation numbers. Strict concavity of the Shannon entropy in $p$ confirms that this stationary point is the unique global maximum.

\paragraph*{Step 2: Single-mode bosonic thermal state.}
The maximum-entropy distribution~\eqref{eq:geom-dist} can be realized as the diagonal of a quantum density operator on the bosonic Fock space $\{|n\rangle\}_{n=0}^{\infty}$:
\begin{equation}\label{eq:single-mode-thermal}
\rho_{x} \coloneqq \sum_{n=0}^{\infty} p(n)\,|n\rangle\!\langle n|.
\end{equation}
This is the single-mode bosonic thermal state with mean occupation $x$. Its von Neumann entropy equals the Shannon entropy of $p$:
\begin{align}
S(\rho_{x})
&= -\sum_{n=0}^{\infty} p(n)\ln p(n) \\
&= -\sum_{n=0}^{\infty} (1-q)\,q^{n}\bigl[\ln(1-q) + n\ln q\bigr] \\
&= -\ln(1-q) - x\,\ln q \\
&= \ln(x+1) + x\bigl[\ln(x+1) - \ln x\bigr] \\
&= (x+1)\ln(x+1) - x\ln x \;=\; g(x),
\end{align}
where in the third equality we used the normalization and mean-occupation identities from Step~1, and in the fourth we substituted $1-q=1/(x+1)$ and $q=x/(x+1)$. Hence the scalar bosonic entropy $g(x)$ defined in~\eqref{eq:bosonic-entropy-def} is exactly the von Neumann entropy of the single-mode bosonic thermal state with mean occupation $x$.

\paragraph*{Step 3: Multimode product thermal state.}
Let $X = \sum_{j=1}^{d} x_{j}\,|\phi_{j}\rangle\!\langle\phi_{j}|$ be the eigendecomposition of a positive semidefinite operator on a $d$-dimensional Hilbert space, with eigenvalues $x_{j} \geq 0$. Associating to each eigenmode $|\phi_{j}\rangle$ an independent bosonic mode, define the product state
\begin{equation}\label{eq:multimode-thermal}
\rho_{X} \coloneqq \bigotimes_{j=1}^{d} \rho_{x_{j}},
\end{equation}
where $\rho_{x_{j}}$ is the single-mode thermal state of~\eqref{eq:single-mode-thermal} with mean occupation $x_{j}$. By construction, $\rho_{X}$ is a non-interacting bosonic thermal state whose mean occupation in the $j$-th eigenmode of $X$ equals the eigenvalue $x_{j}$. Additivity of the von Neumann entropy over tensor products then gives
\begin{equation}\label{eq:multimode-entropy}
S(\rho_{X}) = \sum_{j=1}^{d} S(\rho_{x_{j}}) = \sum_{j=1}^{d} g(x_{j}).
\end{equation}

\paragraph*{Step 4: Matching $S_{\mathrm{BE}}(X)$.}
The Bose--Einstein entropy $S_{\mathrm{BE}}(X) = \Tr[(X+I)\ln(X+I) - X\ln X]$ is a trace function applied to $X$ and therefore depends only on the eigenvalues of $X$:
\begin{equation}\label{eq:S_BE-eigvalue-sum}
S_{\mathrm{BE}}(X) = \sum_{j=1}^{d} g(x_{j}).
\end{equation}
Combining~\eqref{eq:multimode-entropy} and~\eqref{eq:S_BE-eigvalue-sum} yields
\begin{equation}\label{eq:S_BE-equals-S-rho_X}
S_{\mathrm{BE}}(X) = S(\rho_{X}),
\end{equation}
establishing that $S_{\mathrm{BE}}(X)$ is the von Neumann entropy of the non-interacting bosonic thermal state whose mean occupations in the eigenmodes of $X$ are the eigenvalues of $X$. Equivalently, by the Gibbs variational principle, $S_{\mathrm{BE}}(X)$ is the maximum von Neumann entropy attainable by any bosonic state whose mean occupations in the eigenbasis of $X$ are $\{x_{j}\}_{j=1}^{d}$.

\subsection{Concavity}
\label{app:concavity_BEentropy}
Observe that the Bose--Einstein entropy, defined in~\eqref{eq:BE-entropy-def}, takes the form of a trace function $S_{\mathrm{BE}}(X) = \Tr[g(X)]$, where the scalar function is the scalar bosonic entropy, defined as
\begin{equation}
    g(x) \coloneqq (x+1)\ln(x+1) - x\ln x.
\end{equation}
Let us first prove the concavity of the scalar function $g(x)$. To do so, we evaluate its derivatives for $x > 0$. The first derivative is
\begin{align}
    g'(x) &= \frac{d}{dx} (x+1)\ln(x+1) - \frac{d}{dx} x\ln x \\
    &= \left( \ln(x+1) + 1 \right) - \left( \ln x + 1 \right) \\
    &= \ln(x+1) - \ln x.
\end{align}
The second derivative is then given by
\begin{equation}\label{step:second_der_g(x)}
    g''(x) = \frac{1}{x+1} - \frac{1}{x} = \frac{x - (x+1)}{x(x+1)} = -\frac{1}{x(x+1)}.
\end{equation}
Because $x(x+1) > 0$ for all $x > 0$, it follows that $g''(x) < 0$. Therefore, the scalar function $g(x)$ is strictly concave on the interval $(0, \infty)$ and concave on $[0, \infty)$.

By a foundational theorem of matrix analysis~\cite[Theorem~2.10]{Carlen2010trace}, if a real-valued scalar function $g(x)$ is concave on an interval $\mathcal{I}$, then the corresponding trace function $X \mapsto \Tr[g(X)]$ is concave on the set of Hermitian matrices with eigenvalues in $\mathcal{I}$. Consequently, the Bose--Einstein entropy for positive semidefinite operators $X \geq 0$ is concave.

\section{Duality theory and approximation bounds}
\label{app:duality-and-approximation}

\subsection{Proof of Proposition~\ref{prop:dual_general_sdp}}
\label{app:Proof_dual_gen_sdp}

In this appendix, we prove the duality relation for the standard, unregularized semidefinite optimization problem defined in~\eqref{eq:general_SDP}. Consider the primal problem:
\begin{align}
    E(\mathcal{Q}, q) & =\min_{X\geq0}\left\{ \Tr\!\left[HX\right]:\Tr\!\left[Q_{i}X\right]=q_{i}\,\forall i\in\left[c\right]\right\} \\
    & =\min_{X\geq0}\left\{ \Tr\!\left[HX\right]+\sup_{\mu\in\mathbb{R}^{c}}\sum_{i\in\left[c\right]}\mu_{i}\left(q_{i}-\Tr\!\left[Q_{i}X\right]\right)\right\} \label{step:lagrange_mult}\\
    & =\min_{X\geq0}\sup_{\mu\in\mathbb{R}^{c}}\left\{ \Tr\!\left[HX\right]+\sum_{i\in\left[c\right]}\mu_{i}\left(q_{i}-\Tr\!\left[Q_{i}X\right]\right)\right\} \\
    & =\min_{X\geq0}\sup_{\mu\in\mathbb{R}^{c}}\left\{ \mu\cdot q+\Tr\!\left[\left(H-\mu\cdot Q\right)X\right]\right\} \\
    & =\sup_{\mu\in\mathbb{R}^{c}}\min_{X\geq0}\left\{ \mu\cdot q+\Tr\!\left[\left(H-\mu\cdot Q\right)X\right]\right\} \label{step:minimax}\\
    & =\sup_{\mu\in\mathbb{R}^{c}}\left\{ \mu\cdot q+\min_{X\geq0}\Tr\!\left[\left(H-\mu\cdot Q\right)X\right]\right\} \label{eq:sup-and-min-over-X}\\
    & =\sup_{\mu\in\mathbb{R}^{c}}\left\{ \mu\cdot q:H-\mu\cdot Q\geq0\right\} \label{step:optimal_min}.
\end{align}
The equality in~\eqref{step:lagrange_mult} follows from introducing the Lagrange multiplier $\mu_{i}\in\mathbb{R}$ for the $i$th constraint. The equality in~\eqref{step:minimax} follows from strong duality for semidefinite programming. The feasible domains $\{X : X \geq 0\}$ and $\mathbb{R}^c$ are convex, and the objective function is linear in both $\mu$ and $X$. The assumed existence of a strictly feasible point (Assumption~\ref{ass:Slater}) ensures that the duality gap is zero, allowing the exchange of the minimum and supremum~\cite[Section 5]{Boyd2004convex}. The last equality in~\eqref{step:optimal_min} follows from applying Lemma~\ref{lem:unbounded-trace-min} below with the dual slack operator $A = K_\mu \coloneqq H - \mu \cdot Q$, which enforces the linear matrix inequality $K_\mu \geq 0$ to prevent the inner minimization from diverging to $-\infty$.

Furthermore, because the objective function $f(\mu) \coloneqq \mu\cdot q$ is linear in $\mu$ and the feasible region defined by the linear matrix inequality $K_\mu \geq 0$ forms a convex set, the dual optimization problem constitutes a convex optimization problem in dual form.

\begin{lemma}
\label{lem:unbounded-trace-min}
For a $d\times d$ Hermitian matrix $A$, where $d\in\mathbb{N}$, the following equality holds:
\begin{equation}
\min_{X\geq0}\Tr\!\left[AX\right]=\begin{cases}
0 & \text{if }A\geq0,\\
-\infty & \text{otherwise}.
\end{cases}\label{eq:unbounded-trace-lemma}
\end{equation}
\end{lemma}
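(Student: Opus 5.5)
The plan is to split into the two cases $A\geq0$ and $A\not\geq0$, and in each case give a matching upper and lower bound on $\inf_{X\geq0}\Tr[AX]$. Throughout I work with the spectral decomposition $A=\sum_{j=1}^d a_j|\psi_j\rangle\!\langle\psi_j|$, with real eigenvalues $a_j$ and an orthonormal eigenbasis $\{|\psi_j\rangle\}_j$.

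\emph{Case $A\geq0$.} For the lower bound, take any $X\geq0$. Then
\begin{equation}
\Tr[AX]=\Tr[A^{1/2}XA^{1/2}]\geq0,
\end{equation}
because $A^{1/2}XA^{1/2}\geq0$. An equivalent route is to write $\Tr[AX]=\sum_j a_j\langle\psi_j|X|\psi_j\rangle$, which is a sum of products of non-negative numbers. For the upper bound, the choice $X=0$ is feasible and gives $\Tr[AX]=0$. So the infimum equals $0$, and it is attained at $X=0$; this justifies writing $\min$.

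\emph{Case $A\not\geq0$.} Here some eigenvalue satisfies $a_k<0$. I take the ray of feasible points $X_t\coloneqq t\,|\psi_k\rangle\!\langle\psi_k|$ for $t\geq0$, each of which is positive semidefinite. Along this ray,
\begin{equation}
\Tr[AX_t]=t\,\langle\psi_k|A|\psi_k\rangle=t\,a_k,
\end{equation}
which tends to $-\infty$ as $t\to\infty$. Hence the infimum is $-\infty$. In this case the ``$\min$'' in the statement is read as an infimum with the value $-\infty$, which is the convention the paper uses in~\eqref{step:optimal_min}.

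No step is a real obstacle. The only point needing care is the trace-positivity fact $\Tr[AX]\geq0$ for $A,X\geq0$, and the square-root argument above settles it. I would finish by noting how the lemma is used with $A=K_\mu$: dual points $\mu$ with $K_\mu\not\geq0$ contribute $-\infty$ to the supremum and can therefore be dropped. This is exactly what produces the constraint $K_\mu\geq0$ in~\eqref{eq:dual_general_sdp}.
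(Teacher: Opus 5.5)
Your proposal is correct and follows essentially the same route as the paper: the case split, $X=0$ attaining the value $0$ when $A\geq 0$, and the ray $t\,|\psi_k\rangle\!\langle\psi_k|$ along an eigenvector with negative eigenvalue when $A\not\geq 0$. Your justification of $\Tr[AX]\geq 0$ via $A^{1/2}XA^{1/2}\geq 0$ fills in a step the paper only asserts, and your remark that the ``$\min$'' is to be read as an infimum equal to $-\infty$ in the second case is also apt.
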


\begin{proof}
Let $A=\left(A\right)_{+}-\left(A\right)_{-}$ be the Jordan decomposition of $A$ into its positive and negative parts~\cite{Bhatia1997matrix}. 

First, consider the case where $A\geq0$, which implies $\left(A\right)_{-}=0$. Since $X\geq0$, the product of two positive semidefinite matrices has a non-negative trace, yielding $\Tr\!\left[AX\right]\geq0$. Thus, the minimum of $\Tr\!\left[AX\right]$ over all $X\geq0$ is exactly $0$, which is achieved by choosing $X=0$.

Let us now consider the case where $A\not\geq 0$, meaning $\left(A\right)_{-}\neq0$. This implies that $A$ possesses at least one strictly negative eigenvalue, $-\lambda$ (where $\lambda>0$), with a corresponding normalized eigenvector $|v\rangle$. We can construct a positive semidefinite operator $X=c|v\rangle\!\langle v|$ for any real constant $c>0$. Evaluating the objective yields:
\begin{equation}
\Tr\!\left[AX\right]=c\langle v|A|v\rangle=-c\lambda.
\end{equation}
By taking the limit as $c\to\infty$, we find that $\Tr\!\left[AX\right]\to-\infty$. Thus, the minimum over all $X\geq0$ is unbounded from below.
\end{proof}

\subsection{Proof of Theorem~\ref{prop:dual_sdp_regularized}}
\label{sec:Proof-of-dual-be-reg}

In this appendix, we prove the dual representation of the Bose--Einstein regularized semidefinite program. Let $F_T(\mathcal{Q}, q)$ denote the optimal primal value. We proceed by expressing the problem using a minimax formulation:
\begin{align}
 F_T(\mathcal{Q}, q) & =\min_{X\geq0}\left\{ \Tr\!\left[HX\right]-T S_{\mathrm{BE}}(X): \Tr\!\left[Q_{i}X\right]=q_{i}\,\forall i\in\left[c\right] \right\} \\
 & = \min_{X\geq0}\left\{ \Tr\!\left[HX\right]-T S_{\mathrm{BE}}(X) +\sup_{\mu\in\mathbb{R}^{c}}\sum_{i\in\left[c\right]}\mu_{i}\left(q_{i}-\Tr\!\left[Q_{i}X\right]\right)\right\} \label{step:lagrange_mult_reg}\\
 & =\min_{X\geq0}\sup_{\mu\in\mathbb{R}^{c}}\left\{ \mu\cdot q+\Tr\!\left[\left(H-\mu\cdot Q\right)X\right]-T S_{\mathrm{BE}}(X) \right\} \\
 & = \sup_{\mu\in\mathbb{R}^{c}}\min_{X\geq0}\left\{ \mu\cdot q+\Tr\!\left[\left(H-\mu\cdot Q\right)X\right] -T S_{\mathrm{BE}}(X) \right\} \label{step:strong_duality_reg}\\
 & =\sup_{\mu\in\mathbb{R}^{c}}\left\{ \mu\cdot q+\min_{X\geq0}\left\{ \Tr\!\left[K_\mu X\right]-T S_{\mathrm{BE}}(X) \right\} \right\} \label{eq:inner_min_step}\\
 & = \sup_{\mu\in\mathbb{R}^{c}}\left\{ \mu\cdot q+T\Tr\!\left[\ln\!\left(I-e^{-\frac{1}{T}K_\mu}\right)\right]\right\}, \label{step:final_dual_form}
\end{align}
where $K_\mu \coloneqq H - \mu \cdot Q$ is the dual slack operator, also called grand canonical Hamiltonian. 
The equality in~\eqref{step:lagrange_mult_reg} follows by introducing the Lagrange multipliers $\mu_{i}\in\mathbb{R}$ for the constraints $\Tr[Q_{i}X]=q_{i}$. The equality in~\eqref{step:strong_duality_reg} follows from strong duality for convex optimization. The feasible domains $\{X : X \geq 0\}$ and $\mathbb{R}^c$ are convex, the objective function is linear in $\mu$, and it is strictly convex in $X$ due to the strict concavity of the Bose--Einstein entropy $S_{\mathrm{BE}}(X)$ (see Appendix~\ref{app:concavity_BEentropy}). Because the constraints are identical to those of the unregularized problem (see Appendix~\ref{app:Proof_dual_gen_sdp}), the assumed existence of a strictly feasible point (Assumption~\ref{ass:Slater}) ensures that the duality gap is zero, allowing the exchange of the minimum and supremum~\cite[Section 5]{Boyd2004convex}. Finally, the last equality in~\eqref{step:final_dual_form} follows directly from evaluating the inner minimization over $X$, as established in Lemma~\ref{lem:opt-BE-free-energy} below.

\begin{lemma}
\label{lem:opt-BE-free-energy}
Let $A > 0$ be a $d\times d$ strictly positive definite Hermitian matrix, and let $T>0$. Then,
\begin{equation}
\min_{X\geq 0} \left( \Tr\!\left[AX\right]-T S_{\mathrm{BE}}(X) \right) = T\Tr\!\left[\ln\!\left(I-e^{-\frac{A}{T}}\right)\right].\label{eq:BE-opt-prob}
\end{equation}
Furthermore, the unique operator achieving this minimum is the Bose--Einstein thermal operator $X=\left(e^{\frac{A}{T}}-I\right)^{-1}$.
\end{lemma}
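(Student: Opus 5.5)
We will prove the lemma with the Bose--Einstein relative entropy, so that minimization reduces to faithfulness. Set $Y \coloneqq (e^{A/T}-I)^{-1}$. Since $A>0$, $Y$ is strictly positive definite and commutes with $A$. The identity~\eqref{eq:gen-BE-op-2} gives $Y+I = (I-e^{-A/T})^{-1}$. From these two expressions we get
\begin{equation}
\ln(Y+I) - \ln Y = \frac{A}{T}, \qquad \ln(Y+I) = -\ln\!\left(I-e^{-A/T}\right).
\end{equation}
The first follows from $\ln\!\left[(Y+I)Y^{-1}\right] = \ln e^{A/T}$, using commutativity.

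The key step is to rewrite the objective for arbitrary $X\geq 0$. Using Definition~\ref{def:BE-rel-entropy},
\begin{align}
T\,D_{\mathrm{BE}}(X\|Y) &= -T S_{\mathrm{BE}}(X) + T\Tr\!\left[X\big(\ln(Y+I)-\ln Y\big)\right] + T\Tr[\ln(Y+I)] \\
&= \Tr[AX] - T S_{\mathrm{BE}}(X) - T\Tr\!\left[\ln\!\left(I-e^{-A/T}\right)\right].
\end{align}
Here we used the fact that $Y>0$, so $\ln Y$ is a bona fide operator and every term is finite. This gives the exact identity
\begin{equation}
\Tr[AX]-T S_{\mathrm{BE}}(X) = T\Tr\!\left[\ln\!\left(I-e^{-A/T}\right)\right] + T\,D_{\mathrm{BE}}(X\|Y).
\end{equation}

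Next we invoke Proposition~\ref{prop:BE-faithfulness}. It gives $D_{\mathrm{BE}}(X\|Y)\geq 0$, with equality iff $X=Y$. Since $T>0$, the objective is therefore bounded below by $T\Tr[\ln(I-e^{-A/T})]$, and this value is attained exactly at $X=Y$. This yields both the minimum value and the uniqueness of the minimizer.

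For $X$ with zero eigenvalues, the terms $X\ln X$ are interpreted with the convention $0\ln 0=0$. The identity above still holds because the support condition $\operatorname{supp}(X)\subseteq\operatorname{supp}(Y)$ is automatic for $Y>0$.

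The main obstacle is not the computation but ensuring that Proposition~\ref{prop:BE-faithfulness}, including its equality case, is available without circularity. Its proof uses only Klein's inequality applied to the strictly convex scalar function $-g$, so no circularity arises.

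As a backup, or as an independent check, we would use a first-order argument. The objective is strictly convex by Appendix~\ref{app:concavity_BEentropy}. Its gradient is
\begin{equation}
A - T\big(\ln(X+I)-\ln X\big),
\end{equation}
and this vanishes exactly at $X=Y$. The barrier behaviour of $\ln X$ at the boundary rules out minimizers with zero eigenvalues. Substituting $X=Y$ then recovers the stated value.
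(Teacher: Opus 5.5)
Your proposal is correct and is essentially the paper's second proof (Appendix~\ref{app:proof_lemma_rel_entropy}). You rewrite the objective as $T\,D_{\mathrm{BE}}(X\|Y)+T\Tr[\ln(I-e^{-A/T})]$ with $Y=(e^{A/T}-I)^{-1}$ and conclude by faithfulness, and your backup is the paper's first proof via first-order conditions; your remarks on the $0\ln 0$ convention, the automatic support condition for $Y>0$, and the boundary barrier ruling out singular minimizers tidy up points the paper leaves implicit.
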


We provide two independent proofs for Lemma~\ref{lem:opt-BE-free-energy}. The first relies on standard first-order optimality conditions, and the second leverages the non-negativity of the Bose--Einstein relative entropy, defined in~\eqref{eq:BE_rel_entropy_main}.

\subsubsection{First Proof of Lemma~\ref{lem:opt-BE-free-energy} (First-Order Conditions)}\label{app:proof_lemma_first_order}

\begin{proof}
Because $X \mapsto -S_{\mathrm{BE}}(X)$ is strictly convex (see Appendix~\ref{app:concavity_BEentropy} for the proof of the concavity of $S_{\mathrm{BE}}(X)$), the objective function $h(A,T,X) \coloneqq \Tr[AX]-T S_{\mathrm{BE}}(X)$ of the minimization problem in~\eqref{eq:BE-opt-prob} is strictly convex in $X$. Thus, the first-order stationarity condition is both necessary and sufficient for global optimality. The matrix derivative with respect to $X$ is:
\begin{align}\label{step:der_reg_obj_fun}
 \frac{\partial h}{\partial X} = A-T\frac{\partial S_{\mathrm{BE}}(X)}{\partial X} = A-T\left(\ln\!\left(X+I\right)-\ln X\right),
\end{align}
where we used the scalar derivative $g'(x) = \ln(x+1)-\ln x$, with $g(x)$ as defined in~\eqref{eq:bosonic-entropy-def}. Setting the gradient to zero yields the optimal operator:
\begin{align}
0 = A-T\left(\ln\!\left(X+I\right)-\ln X\right) \implies \frac{A}{T} = \ln\!\left(I+X^{-1}\right) \implies X = \left(e^{\frac{A}{T}}-I\right)^{-1}.
\end{align}
To evaluate the minimum value, we plug this optimal $X$ back into $h(A,T,X)$. Because the operators commute, we can evaluate this equivalently using the scalar eigenvalues $a>0$ of $A$. For a single eigenmode, the scalar objective $a x -T s(x)$ evaluated at the optimum $x = (e^{a/T}-1)^{-1}$ is:
\begin{align}
 & \frac{a}{e^{\frac{a}{T}}-1}-T\left(\frac{e^{\frac{a}{T}}}{e^{\frac{a}{T}}-1}\ln\!\left(\frac{e^{\frac{a}{T}}}{e^{\frac{a}{T}}-1}\right)-\frac{1}{e^{\frac{a}{T}}-1}\ln\!\left(\frac{1}{e^{\frac{a}{T}}-1}\right)\right)\nonumber \\
 & =\frac{a}{e^{\frac{a}{T}}-1}-T\left(\frac{e^{\frac{a}{T}}}{e^{\frac{a}{T}}-1}\left(\frac{a}{T}-\ln\!\left(e^{\frac{a}{T}}-1\right)\right)+\frac{1}{e^{\frac{a}{T}}-1}\ln\!\left(e^{\frac{a}{T}}-1\right)\right)\\
 & =\frac{a - a e^{\frac{a}{T}}}{e^{\frac{a}{T}}-1}+T\left(\frac{e^{\frac{a}{T}} - 1}{e^{\frac{a}{T}}-1}\right)\ln\!\left(e^{\frac{a}{T}}-1\right)\\
 & =-a+T\ln\!\left(e^{\frac{a}{T}}-1\right) \\
 & = T \ln \!\left( e ^{-\frac{a}{T}} \right)+T\ln\!\left(e^{\frac{a}{T}}-1\right)\\
 & = T\ln\!\left(\frac{e^{\frac{a}{T}}-1}{e^{\frac{a}{T}}}\right) \\
 & = T\ln\!\left(1-e^{-\frac{a}{T}}\right).
\end{align}
Summing this scalar result over all eigenvalues establishes the final matrix trace expression $T\Tr[\ln(I-e^{-A/T})]$, completing the proof.
\end{proof}

\subsubsection{Second Proof of Lemma~\ref{lem:opt-BE-free-energy} (Relative Entropy)}\label{app:proof_lemma_rel_entropy}

\begin{proof}
Recall the definition of the Bose--Einstein relative entropy for positive semidefinite operators $X, Y \geq 0$:
\begin{equation}\label{eq:BE_rel_entropy_appendix}
    D_{\mathrm{BE}}(X\|Y) \coloneqq -S_{\mathrm{BE}}(X) + \Tr[(X+I)\ln(Y+I) - X\ln Y].
\end{equation}
By Proposition~\ref{prop:BE-faithfulness}, $D_{\mathrm{BE}}(X\|Y) \geq 0$, with equality holding if and only if $X=Y$. Let us define the target operator $Y \coloneqq (e^{A/T}-I)^{-1}$. We can express the matrix exponential $e^{A/T}$ strictly in terms of $Y$ as $e^{A/T} = Y^{-1} + I = (I+Y)Y^{-1}$. Taking the matrix logarithm of both sides yields:
\begin{equation}
    \frac{A}{T} = \ln(I+Y) - \ln Y.
\end{equation}
Substituting this expression into the energy term $\Tr[AX]$, the objective function of the minimization in~\eqref{eq:BE-opt-prob} becomes:
\begin{align}
    \Tr[AX] - T S_{\mathrm{BE}}(X) &= T \Tr\!\left[X \left(\frac{A}{T}\right)\right] - T S_{\mathrm{BE}}(X) \\
    &= T \Tr[X \ln(I+Y) - X \ln Y] - T S_{\mathrm{BE}}(X). \label{eq:expanded_objective}
\end{align}
Now, observe that multiplying the relative entropy~\eqref{eq:BE_rel_entropy_appendix} by $T$ and rearranging terms gives:
\begin{align}
    T D_{\mathrm{BE}}(X\|Y) &= -T S_{\mathrm{BE}}(X) + T \Tr[X\ln(Y+I) + \ln(Y+I) - X\ln Y] \\
    &= \left( -T S_{\mathrm{BE}}(X) + T \Tr[X \ln(Y+I) - X \ln Y] \right) + T \Tr[\ln(Y+I)]. 
\end{align}
Recognizing the bracketed term as our expanded objective in~\eqref{eq:expanded_objective}, we can rewrite the objective function of the minimization in~\eqref{eq:BE-opt-prob} entirely in terms of the relative entropy:
\begin{equation}\label{eq:obj_fun_RelEntropy}
    \Tr[AX] - T S_{\mathrm{BE}}(X) = T D_{\mathrm{BE}}(X\|Y) - T \Tr[\ln(Y+I)].
\end{equation}
Because $D_{\mathrm{BE}}(X\|Y) \geq 0$, the global minimum is achieved by setting $X=Y$, which zeroes out the relative entropy term. It remains only to simplify the residual trace term. To do so, let us first evaluate
\begin{equation}
    Y + I = (e^{A/T}-I)^{-1} + I = \frac{I + e^{A/T} - I}{e^{A/T}-I} = \frac{e^{A/T}}{e^{A/T}-I} = (I - e^{-A/T})^{-1}.
\end{equation}
Therefore, $-T\Tr[\ln(Y+I)] = T\Tr[\ln(I - e^{-A/T})]$, completing the proof.
\end{proof}

\subsection{Proof of Proposition~\ref{prop:simple-approx-bnd} (simple bound on approximation error)}
\label{sec:proof-simple-approx-bnd}

We first prove the upper bound inequality:
\begin{equation}
E(\mathcal{Q},q) \geq F_{T}(\mathcal{Q},q). \label{eq:first-simple-ineq-en-to-free-en}
\end{equation}
Suppose that $X \geq 0$ is a feasible operator for the original optimization problem (i.e., it satisfies $\Tr[Q_{i}X]=q_{i}$ for all $i \in [c]$). Because the scalar function $s_{\mathrm{BE}}(x) = (x+1)\ln(x+1) - x\ln x \geq 0$ for all $x \geq 0$, the operator entropy defined in~\eqref{eq:BE-entropy-def} is strictly non-negative: $S_{\mathrm{BE}}(X) \geq 0$. Since $T>0$, we have:
\begin{align}
\Tr\!\left[HX\right] & \geq \Tr\!\left[HX\right] - T S_{\mathrm{BE}}(X) \\
 & \geq F_{T}(\mathcal{Q},q). \label{eq:proof_F_t_lower}
\end{align}
The inequality in~\eqref{eq:proof_F_t_lower} follows from the definition of $F_{T}(\mathcal{Q},q)$ in~\eqref{eq:sdp_primal_regularized} as the minimum over all feasible operators. Because this chain of inequalities holds for any feasible $X$, it must hold for the optimal $X^\star$, thus establishing $E(\mathcal{Q},q) \geq F_{T}(\mathcal{Q},q)$, where $E(\mathcal{Q},q)$ is the global minimum of $\Tr[HX]$ over the feasible set.

We now prove the lower bound inequality:
\begin{equation}
F_{T}(\mathcal{Q},q) \geq E(\mathcal{Q},q) - T S_{\max}, \label{eq:second-simple-ineq-en-to-free-en}
\end{equation}
which implies the desired result $F_T(\mathcal{Q},q) \geq E(\mathcal{Q},q) - \varepsilon$ when choosing a temperature $T \leq \frac{\varepsilon}{S_{\max}}$. Suppose $X \geq 0$ is a feasible operator. By the definition of $S_{\max}$ in~\eqref{eq:S_max_def}, its entropy is bounded by $S_{\max}$. Therefore:
\begin{align}
    \Tr\!\left[HX\right] - T S_{\mathrm{BE}}(X) & \geq \Tr\!\left[HX\right] - T S_{\max} \\
    & \geq E(\mathcal{Q},q) - T S_{\max}. \label{eq:proof_E_lower}
\end{align}
The inequality in~\eqref{eq:proof_E_lower} follows because $E(\mathcal{Q},q)$ is the global minimum of $\Tr[HX]$ over the feasible set. Taking the infimum of the left-hand side over all feasible $X$ yields $F_{T}(\mathcal{Q},q)$, thus concluding the proof.

\subsection{Proof of Proposition~\ref{prop:no-BE-ent-simple-approx-err-bnd}}
\label{sec:no-BE-ent-simple-approx-err-bnd}

We bound the quantity $\tilde{f}_{T}(\mu_{T}^{\star})$, defined in~\eqref{eq:f_tilde}, from both below and above. 

First, we establish the lower bound. Because $\mu_T^\star$ perfectly maximizes the regularized dual objective $f_T(\mu)$, the gradient of the objective with respect to $\mu$, defined in~\eqref{eq:gradient-BE-gen-obj-func}, must vanish at optimality:
\begin{equation}
    \nabla_\mu f_T(\mu_T^\star) = q - \Tr[Q X_T(\mu_T^\star)] = 0.
\end{equation}
Therefore, the optimal thermal operator $X_T(\mu_T^\star)$ exactly satisfies the primal equality constraints $\Tr[Q_i X_T(\mu_T^\star)] = q_i$ for all $i \in [c]$. Substituting this into the definition of $\tilde{f}_{T}(\mu_{T}^{\star})$, the dual penalty terms perfectly cancel, yielding exactly the unregularized physical energy:
\begin{equation}
    \tilde{f}_{T}(\mu_{T}^{\star}) = \Tr[H X_T(\mu_T^\star)]. \label{eq:f_tilde_trace_simplification}
\end{equation}
Since $X_T(\mu_T^\star)$ is an exactly feasible, positive semidefinite operator for the unregularized primal problem, its expected energy cannot be lower than the exact global minimum:
\begin{equation}
    \tilde{f}_{T}(\mu_{T}^{\star}) \geq E(\mathcal{Q},q).
\end{equation}
This establishes the lower bound.

Next, we establish the upper bound by relating the physical energy back to the regularized free energy. From the definition of the regularized primal problem, the optimal free energy is exactly:
\begin{equation}
    F_T(\mathcal{Q},q) = \Tr[H X_T(\mu_T^\star)] - T S_{\mathrm{BE}}(X_T(\mu_T^\star)).
\end{equation}
Rearranging this equality and substituting~\eqref{eq:f_tilde_trace_simplification} yields:
\begin{align}
    \tilde{f}_{T}(\mu_{T}^{\star}) &= F_T(\mathcal{Q},q) + T S_{\mathrm{BE}}(X_T(\mu_T^\star)) \\
    &\leq E(\mathcal{Q},q) + T S_{\mathrm{BE}}(X_T(\mu_T^\star)). \label{step:upper_bound_intermediate}
\end{align}
The inequality in~\eqref{step:upper_bound_intermediate} is a direct consequence of Proposition~\ref{prop:simple-approx-bnd}, which guarantees $F_T(\mathcal{Q},q) \leq E(\mathcal{Q},q)$. 

Because the optimal thermal operator $X_T(\mu_T^\star)$ lies within the feasible set, its entropy is strictly bounded by the maximum feasible entropy $S_{\max}$ defined in~\eqref{eq:S_max_def}. By applying the temperature condition $T \leq \frac{\varepsilon}{S_{\max}}$, we bound the entropy penalty:
\begin{equation}
    T S_{\mathrm{BE}}(X_{T}(\mu_T^\star)) \leq T S_{\max} \leq \varepsilon.
\end{equation}
Substituting this bound into~\eqref{step:upper_bound_intermediate} establishes the final upper bound $\tilde{f}_{T}(\mu_{T}^{\star}) \leq E(\mathcal{Q},q) + \varepsilon$, completing the proof.

\subsection{Proof of Proposition~\ref{prop:approx-error-spectral-gap} (spectral gap bound on approximation error)}
\label{sec:proof-approx-error-spectral-gap}

To bound the approximation error of the unregularized energy evaluated at the thermal operator, we must bound the quantity $\tilde{f}_{T}(\mu_{T}^{\star}) \coloneqq \mu_{T}^{\star} \cdot q + \Tr[(H-\mu_{T}^{\star}\cdot Q)X_T(\mu_T^\star)]$.

The lower bound, $\tilde{f}_{T}(\mu_{T}^{\star}) \geq E(\mathcal{Q},q)$, follows directly from the exact primal feasibility of the optimal thermal operator, as established in~\eqref{eq:f_tilde_trace_simplification} in the proof of Proposition~\ref{prop:no-BE-ent-simple-approx-err-bnd}.

To establish the upper bound, we utilize the weak duality of the exact, unregularized SDP. Recall the definition of the regularized dual objective function in~\eqref{eq:dual_obj_function}, which contains the logarithmic trace term $T\Tr[\ln(I-e^{-K_\mu/T})]$. For this term to evaluate to a finite value at optimality, the argument of the matrix logarithm must be strictly positive definite, meaning $I-e^{-K_{\mu_T^\star}/T} > 0$. This inequality holds if and only if all eigenvalues of $e^{-K_{\mu_T^\star}/T}$ are strictly less than $1$, which in turn dictates that the dual slack operator must be strictly positive definite: $K_{\mu_T^\star} \coloneqq H - \mu_T^\star \cdot Q > 0$. 
Because the feasibility condition for the exact, unregularized dual problem is simply $K_\mu \geq 0$ (as established in Proposition~\ref{prop:dual_general_sdp}), the optimal regularized dual vector $\mu_T^\star$ is guaranteed to be a strictly feasible point for the unregularized dual problem. By standard weak duality, any valid dual feasible vector provides a lower bound on the exact primal optimal value:
\begin{equation}
    E(\mathcal{Q},q) \geq \mu_T^\star \cdot q.
\end{equation}

We can therefore bound the difference between the physical energy of the thermal operator and the unregularized optimal value as follows:
\begin{align}
    \tilde{f}_{T}(\mu_{T}^{\star}) - E(\mathcal{Q},q) &\leq \tilde{f}_{T}(\mu_{T}^{\star}) - \mu_T^\star \cdot q \\
    &= \Tr[H X_T(\mu_T^\star)] - \Tr[(\mu_T^\star \cdot Q) X_T(\mu_T^\star)] \\
    &= \Tr[K_{\mu_T^\star} X_T(\mu_T^\star)]. \label{eq:proof_duality_gap_substitution}
\end{align}
Equation~\eqref{eq:proof_duality_gap_substitution} reveals that the energy error is bounded exactly by the regularized duality gap. 
Because $K_{\mu_T^\star}$ and $X_T(\mu_T^\star)$ share a common eigenbasis, we can evaluate this trace sum over their joint eigenvalues. Let $\lambda_j > 0$ be the eigenvalues of $K_{\mu_T^\star}$. The expected occupation number for each mode is $x_j = (e^{\lambda_j/T} - 1)^{-1}$. Therefore, the energy penalty for a single mode is $\lambda_j x_j = \lambda_j (e^{\lambda_j/T} - 1)^{-1}$. 

We partition the trace sum into the $d_0$ low-energy modes ($\lambda_j = \lambda_{\min}$) and the $d-d_0$ excited modes ($\lambda_j \geq \lambda_{\min} + \Delta$):
\begin{equation}
    \Tr[K_{\mu_T^\star} X_T(\mu_T^\star)] = \sum_{j \in \text{ground}} \frac{\lambda_j}{e^{\lambda_j/T} - 1} + \sum_{j \in \text{excited}} \frac{\lambda_j}{e^{\lambda_j/T} - 1}.
\end{equation}

For the low-energy modes, we utilize the regularized complementary slackness identity established in~\eqref{eq:reg_comp_slackness_eigen}, which states $\lambda_j x_j = T x_j \ln(1+1/x_j)$. Because $x \ln(1+1/x) \leq 1$ for all $x \geq 0$, the energy penalty for any mode is strictly bounded by $T$, regardless of how large its occupation $x_j$ becomes. Thus:
\begin{equation}
    \sum_{j \in \text{ground}} \frac{\lambda_j}{e^{\lambda_j/T} - 1} \leq \sum_{j \in \text{ground}} T = T d_0.
\end{equation}

For the excited modes, we observe that the function $g(\lambda) = \lambda (e^{\lambda/T} - 1)^{-1}$ is strictly monotonically decreasing for all $\lambda > 0$. Because these modes are separated from the ground space by the spectral gap $\Delta$, their energy is strictly bounded from below by $\lambda_{\min} + \Delta$. Thus, their contribution is bounded by evaluating the function at this minimum excited energy:
\begin{equation}
    \sum_{j \in \text{excited}} \frac{\lambda_j}{e^{\lambda_j/T} - 1} \leq (d-d_0) \frac{\lambda_{\min} + \Delta}{e^{(\lambda_{\min} + \Delta)/T} - 1}.
\end{equation}

Summing the bounds for the two subspaces yields the strict upper bound on the unregularized energy error:
\begin{equation}
    \tilde{f}_{T}(\mu_{T}^{\star}) - E(\mathcal{Q},q) \leq T \cdot d_0 + \left(d-d_0\right) \frac{\lambda_{\min} + \Delta}{e^{(\lambda_{\min} + \Delta)/T} - 1}, \label{eq:proof_upper_bound_final}
\end{equation}
which establishes~\eqref{eq:duality_gap_spectral_bound}.

Finally, to ensure the total error is lower than or equal to $\varepsilon$, we restrict each of the two terms in~\eqref{eq:proof_upper_bound_final} to be $\leq \varepsilon/2$. 
For the ground-state term:
\begin{equation}
    T d_0 \leq \frac{\varepsilon}{2} \implies T \leq \frac{\varepsilon}{2 d_0}.
\end{equation}
For the excited-state term:
\begin{align}
    \left(d-d_0\right) \frac{\lambda_{\min} + \Delta}{e^{(\lambda_{\min} + \Delta)/T} - 1} \leq \frac{\varepsilon}{2} &\iff e^{(\lambda_{\min} + \Delta)/T} - 1 \geq \frac{2(d-d_0)(\lambda_{\min} + \Delta)}{\varepsilon} \nonumber \\
    &\iff \frac{\lambda_{\min} + \Delta}{T} \geq \ln\!\left(1 + \frac{2(d-d_0)(\lambda_{\min} + \Delta)}{\varepsilon}\right) \nonumber \\
    &\iff T \leq \frac{\lambda_{\min} + \Delta}{\ln\!\left(1 + \frac{2(d-d_0)(\lambda_{\min} + \Delta)}{\varepsilon}\right)}.
\end{align}
Taking the minimum of these two temperature thresholds yields~\eqref{eq:temp_threshold_spectral}, which guarantees the desired precision and completes the proof.

\section{Gradient and Hessian of the dual objective}
\label{app:gradient-hessian}

\subsection{Proof of Proposition~\ref{prop:gradient-BE-gen-obj-func} (gradient
of dual objective function)}

\label{sec:Proof-of-gradient-BE_dual}
Using the definition of the objective function $f_{T}(\mu)$ in~\eqref{eq:dual_obj_function}, consider that
\begin{align}
 \frac{\partial}{\partial\mu_{i}}f_{T}(\mu)
 & =\frac{\partial}{\partial\mu_{i}}\left(\mu\cdot q + T\Tr\!\left[\ln\!\left(I - e^{-\frac{1}{T}\left(H-\mu\cdot Q\right)}\right)\right]\right)\\
 & =q_{i}+T\frac{\partial}{\partial\mu_{i}}\Tr\!\left[\ln\!\left(I - e^{-\frac{1}{T}\left(H-\mu\cdot Q\right)}\right)\right].\label{eq:grad-proof}
\end{align}
Now consider that
\begin{align}
 \frac{\partial}{\partial\mu_{i}}\Tr\!\left[\ln\!\left(I- e^{-\frac{1}{T}\left(H-\mu\cdot Q\right)}\right)\right]
 & = \Tr\!\left[\left(I-e^{-\frac{1}{T}\left(H-\mu\cdot Q\right)}\right)^{-1}\frac{\partial}{\partial\mu_{i}}\left(I-e^{-\frac{1}{T}\left(H-\mu\cdot Q\right)}\right)\right] \label{step:grad_composite_fun}\\
 & =- \Tr\!\left[\left(I-e^{-\frac{1}{T}\left(H-\mu\cdot Q\right)}\right)^{-1}\frac{\partial}{\partial\mu_{i}}e^{-\frac{1}{T}\left(H-\mu\cdot Q\right)}\right]\\
 & = - \Tr\!\left[\begin{array}{c}
\left(I-e^{-\frac{1}{T}\left(H-\mu\cdot Q\right)}\right)^{-1}\int_{0}^{1}dt\,e^{-\frac{t}{T}\left(H-\mu\cdot Q\right)}\times\\
\frac{\partial}{\partial\mu_{i}}\left[-\frac{1}{T}\left(H-\mu\cdot Q\right)\right]e^{-\frac{1-t}{T}\left(H-\mu\cdot Q\right)}
\end{array}\right] \label{step:Duhamel's_step}\\
& = -\int_{0}^{1}dt\,\Tr\!\left[\begin{array}{c}
e^{-\frac{1-t}{T}\left(H-\mu\cdot Q\right)}\left(I-e^{-\frac{1}{T}\left(H-\mu\cdot Q\right)}\right)^{-1}\times\\
e^{-\frac{t}{T}\left(H-\mu\cdot Q\right)}\frac{Q_{i}}{T}
\end{array}\right] \label{step:cyclic_trace}\\
 & = -\frac{1}{T}\int_{0}^{1}dt\,\Tr\!\left[\begin{array}{c}
e^{-\frac{1-t}{T}\left(H-\mu\cdot Q\right)}e^{-\frac{t}{T}\left(H-\mu\cdot Q\right)}\times\\
\left(I-e^{-\frac{1}{T}\left(H-\mu\cdot Q\right)}\right)^{-1}Q_{i}
\end{array}\right]\\
 & =- \frac{1}{T}\Tr\!\left[e^{-\frac{1}{T}\left(H-\mu\cdot Q\right)}\left(I-e^{-\frac{1}{T}\left(H-\mu\cdot Q\right)}\right)^{-1}Q_{i}\right]\label{step:commutativity}\\
 & =- \frac{1}{T}\Tr\!\left[\left(e^{\frac{1}{T}\left(H-\mu\cdot Q\right)}-I\right)^{-1}Q_{i}\right]\\
 & = - \frac{1}{T}\Tr\!\left[X_{T}(\mu)Q_{i}\right].\label{eq:final-line-grad-proof}
\end{align}
The equality in~\eqref{step:grad_composite_fun} follows because
\begin{equation}
\frac{\partial}{\partial x}\Tr[f(A(x))]=\Tr\!\left[f'(A(x))\frac{\partial}{\partial x}A(x)\right]
\end{equation}
for a differentiable function $f$ of a matrix $A(x)$. The equality
in~\eqref{step:Duhamel's_step} follows from Duhamel's formula for the derivative of matrix exponentials
(see, e.g., \cite[Proposition~47]{wilde2025quantumfisherinformationmatrices}):
\begin{equation}
\frac{\partial}{\partial x}e^{A(x)}=\int_{0}^{1}dt\,e^{tA(x)}\left(\frac{\partial}{\partial x}A(x)\right)e^{\left(1-t\right)A(x)}.
\end{equation}
The equality in~\eqref{step:cyclic_trace} follows from the cyclicity of trace and the fact that
\begin{equation}
\frac{\partial}{\partial\mu_{i}}\left[-\frac{1}{T}\left(H-\mu\cdot Q\right)\right]=\frac{Q_{i}}{T}.
\end{equation}
The equality in~\eqref{step:commutativity} follows because $\left(I-e^{-\frac{1}{T}\left(H-\mu\cdot Q\right)}\right)^{-1}$
commutes with $e^{-\frac{t}{T}\left(H-\mu\cdot Q\right)}$. The final
equality in~\eqref{eq:final-line-grad-proof} follows by applying the definition of the thermal operator $X_{T}(\mu)$ in~\eqref{eq:BE_operator}. Combining~\eqref{eq:final-line-grad-proof}
with~\eqref{eq:grad-proof}, we conclude that
\begin{equation}
\frac{\partial}{\partial\mu_{i}}f_T(\mu)=q_{i}-\Tr\!\left[X_{T}(\mu)Q_{i}\right].
\end{equation}

\subsection{Proof of Proposition~\ref{prop:hessian-BE-gen-obj-func} (Hessian
of dual objective function)}

\label{sec:Proof-of-Hessian-BE-gen-obj-func}

We first prove~\eqref{eq:1st-hessian-exp}. Consider that
\begin{align}
 \frac{\partial^{2}}{\partial\mu_{i}\partial\mu_{j}}f_{T}(\mu)
 & =\frac{\partial}{\partial\mu_{i}}\left(q_{j}-\Tr\!\left[X_{T}(\mu)Q_{j}\right]\right)\\
 & =-\frac{\partial}{\partial\mu_{i}}\Tr\!\left[X_{T}(\mu)Q_{j}\right]\\
 & =-\Tr\!\left[\frac{\partial}{\partial\mu_{i}}\left[\left(e^{\frac{1}{T}\left(H-\mu\cdot Q\right)}-I\right)^{-1}\right]Q_{j}\right]\\ 
 & =\Tr\!\left[
\left(e^{\frac{1}{T}\left(H-\mu\cdot Q\right)}-I\right)^{-1}\left(\frac{\partial}{\partial\mu_{i}}\left[e^{\frac{1}{T}\left(H-\mu\cdot Q\right)}-I\right]\right)
\left(e^{\frac{1}{T}\left(H-\mu\cdot Q\right)}-I\right)^{-1}Q_{j}
\right] \label{step:der_inverse_matrix} \\
& =\Tr\!\left[X_{T}(\mu)\left[\frac{\partial}{\partial\mu_{i}}e^{\frac{1}{T}\left(H-\mu\cdot Q\right)}\right]X_{T}(\mu)Q_{j}\right]\label{eq:hessian-mid-proof}\\
& =-\frac{1}{T}\int_{0}^{1}dt\,\Tr\!\left[
 X_{T}(\mu)e^{\frac{t}{T}\left(H-\mu\cdot Q\right)}Q_{i}
 e^{\frac{1-t}{T}\left(H-\mu\cdot Q\right)}X_{T}(\mu)Q_{j}
\right]\label{step:Duhamel's_step_2}\\
& =-\frac{1}{T}\int_{0}^{1}dt\,\Tr\!\left[X_{T}(\mu,t) Q_{i}X_{T}(\mu,1-t)Q_{j}\right],\label{eq:hessian-elements}
\end{align}
where in~\eqref{step:der_inverse_matrix} we used the fact that
\begin{equation}
    \frac{\partial}{\partial x}\left[A(x)^{-1}\right] =-A(x)^{-1}\left(\frac{\partial}{\partial x}A(x)\right)A(x)^{-1},
\end{equation}
in~\eqref{step:Duhamel's_step_2} we used Duhamel's formula for the derivative of matrix exponentials
\begin{equation}
    \frac{\partial}{\partial x}e^{A(x)} =\int_{0}^{1}ds\,e^{sA(x)}\left(\frac{\partial}{\partial x}A(x)\right)e^{\left(1-s\right)A(x)},
\end{equation}
and in the last equality~\eqref{eq:hessian-elements} we used the definition of $X_{T}(\mu,s)$ in~\eqref{eq:BEop-t-depend}.

We now prove~\eqref{eq:2nd-hessian-exp}. Recall from \cite[Lemmas~10 and 12]{patel2025quantumboltzmannmachinelearning}
that
\begin{equation}
\frac{\partial}{\partial x}e^{A(x)}=\frac{1}{2}\left\{ \Phi_{A(x)}\!\left(\frac{\partial}{\partial x}A(x)\right),e^{A(x)}\right\} ,\label{eq:deriv-exp-fourier-trans}
\end{equation}
where the quantum channel $\Phi_{A(x)}$ is defined as
\begin{equation}
\Phi_{A(x)}(Y)\coloneqq\int_{-\infty}^{\infty}dt\,\gamma(t)e^{-iA(x)t}Ye^{iA(x)t},
\end{equation}
and the high-peak tent probability density $\gamma(t)$ is defined
in~\eqref{eq:high-peak-tent-prob-dens} (see also~\cite{Hastings2007,Kim2012,Ejima2019,Kato2019,Anshu2021}).
Starting from~\eqref{eq:hessian-mid-proof} and applying~\eqref{eq:deriv-exp-fourier-trans},
while noting that
\begin{equation}
\frac{\partial}{\partial\mu_{i}}\left(\frac{1}{T}\left(H-\mu\cdot Q\right)\right)=-\frac{Q_{i}}{T},
\end{equation}
we find that

\begin{align}
 \frac{\partial^{2}}{\partial\mu_{i}\partial\mu_{j}}f_{T}(\mu)
 & =\Tr\!\left[X_{T}(\mu)\left[\frac{\partial}{\partial\mu_{i}}e^{\frac{1}{T}\left(H-\mu\cdot Q\right)}\right]X_{T}(\mu)Q_{j}\right]\\
 & =-\frac{1}{2T}\Tr\!\left[X_{T}(\mu)\left\{ \Phi_{\mu}\!\left(Q_{i}\right),e^{\frac{1}{T}\left(H-\mu\cdot Q\right)}\right\} X_{T}(\mu)Q_{j}\right]\\
 & =-\frac{1}{2T}\Tr\!\left[X_{T}(\mu)\Phi_{\mu}\!\left(Q_{i}\right)e^{\frac{1}{T}\left(H-\mu\cdot Q\right)}X_{T}(\mu)Q_{j}\right] 
 -\frac{1}{2T}\Tr\!\left[X_{T}(\mu)e^{\frac{1}{T}\left(H-\mu\cdot Q\right)}\Phi_{\mu}\!\left(Q_{i}\right)X_{T}(\mu)Q_{j}\right]\\
 & = -\frac{1}{2T}\Tr\!\left[X_{T}(\mu)\Phi_{\mu}\!\left(Q_{i}\right)\left(X_{T}(\mu) +I\right)Q_{j}\right] -\frac{1}{2T}\Tr\!\left[\left(X_{T}(\mu)+I\right)\Phi_{\mu}\!\left(Q_{i}\right)X_{T}(\mu)Q_{j}\right] \label{step:def_X+I}\\
 & =-\frac{1}{T}\operatorname{Re}\!\left\{\Tr\!\left[X_{T}(\mu)\Phi_{\mu}(Q_{i})\left(X_{T}(\mu)+I\right)Q_{j}\right]\right\},
\end{align}
where the equality in~\eqref{step:def_X+I} follows from the definition of $X_T(\mu)+I$ in~\eqref{eq:other-outcome-BE}.

\subsection{Proof of Proposition~\ref{prop:hessian-NSD-spec-up-bnd} (bounds
on Hessian of dual objective function)}
\label{sec:hessian-NSD-spec-up-bnd}We first prove that the Hessian
is negative semidefinite. Let $v\in\mathbb{R}^{c}$ be arbitrary,
and let $\nabla^{2}f_{T}(\mu)$ denote the Hessian matrix for $f_{T}(\mu)$,
with elements as given in~\eqref{eq:1st-hessian-exp}. Consider that
\begin{align}
 -v^{T}\nabla^{2}f_{T}(\mu)v
 & =-\sum_{i,j\in\left[c\right]}v_{i}\left[\frac{\partial^{2}}{\partial\mu_{i}\partial\mu_{j}}f_T(\mu)\right]v_{j}\\
 & =\frac{1}{T}\sum_{i,j\in\left[c\right]}v_{i}\int_{0}^{1}ds\,\Tr\!\left[Q_{i}X_{T}(\mu,1-s)Q_{j}X_{T}(\mu,s)\right]v_{j}\\
 & =\frac{1}{T}\int_{0}^{1}ds\,\Tr\!\left[
\left(\sum_{i\in\left[c\right]}v_{i}Q_{i}\right)X_{T}(\mu,1-s)
\left(\sum_{j\in\left[c\right]}v_{j}Q_{j}\right)X_{T}(\mu,s)\right]\\
 & =\frac{1}{T}\int_{0}^{1}ds\,\Tr\!\left[WX_{T}(\mu,1-s)WX_{T}(\mu,s)\right]\label{eq:Hessian-concave-last-step}\\
 & \geq0,
\end{align}
where $W\coloneqq\sum_{i\in\left[c\right]}v_{i}Q_{i}$. The last inequality
follows because the matrix $X_{T}(\mu,1-s)$ is positive semidefinite,
which implies that $WX_{T}(\mu,1-s)W$ is positive semidefinite given
that $W$ is Hermitian. Also, the matrix $X_{T}(\mu,s)$ is positive
semidefinite. As such, the trace expression in~\eqref{eq:Hessian-concave-last-step}
is non-negative for all $s\in\left[0,1\right]$, so that the integral
is non-negative also.

Let us now find an upper bound on the spectral norm of the Hessian matrix. To do so, we first establish a uniform bound on the operator norm of $X_{T}(\mu,s)$, defined in~\eqref{eq:BEop-t-depend}. Recall that $K_{\mu}\coloneqq H-\mu\cdot Q$ is assumed to be strictly positive definite, and let $\lambda_{\min}(K_{\mu})>0$ denote its smallest eigenvalue. Since $X_{T}(\mu,s)$ has the spectral decomposition
\begin{equation}
X_{T}(\mu,s)=\sum_{j}\frac{e^{s\lambda_{j}/T}}{e^{\lambda_{j}/T}-1}\left|\varphi_{j}\right\rangle \!\left\langle \varphi_{j}\right|,
\end{equation}
where $\lambda_{j}$ are the eigenvalues of $K_{\mu}$, and because the scalar function $h_{s}(\lambda)\coloneqq e^{s\lambda/T}/(e^{\lambda/T}-1)$ is monotonically decreasing in $\lambda>0$ for every $s\in[0,1]$, we conclude that
\begin{equation}
\left\Vert X_{T}(\mu,s)\right\Vert =\frac{e^{s\lambda_{\min}(K_\mu)/T}}{e^{\lambda_{\min}(K_\mu)/T}-1}.\label{eq:XT-s-norm}
\end{equation}
To verify the monotonicity claim, note that for $\lambda>0$ and $s\in[0,1]$,
\begin{equation}
\frac{\partial}{\partial\lambda}h_{s}(\lambda)=\frac{e^{s\lambda/T}}{T(e^{\lambda/T}-1)^{2}}\left[s\left(e^{\lambda/T}-1\right)-e^{\lambda/T}\right]=\frac{e^{s\lambda/T}}{T(e^{\lambda/T}-1)^{2}}\left[(s-1)e^{\lambda/T}-s\right]\leq0,
\end{equation}
where the last inequality follows because $(s-1)e^{\lambda/T}\leq0$ and $-s\leq0$ for $s\in[0,1]$ and $\lambda>0$.

Let us observe that the product of the norms of $X_{T}(\mu,s)$ and $X_{T}(\mu,1-s)$ is independent of $s$:
\begin{equation}
\left\Vert X_{T}(\mu,s)\right\Vert \cdot\left\Vert X_{T}(\mu,1-s)\right\Vert =\frac{e^{s\lambda_{\min}(K_\mu)/T}}{e^{\lambda_{\min}(K_\mu)/T}-1}\cdot\frac{e^{(1-s)\lambda_{\min}(K_\mu)/T}}{e^{\lambda_{\min}(K_\mu)/T}-1}=\frac{e^{\lambda_{\min}(K_\mu)/T}}{(e^{\lambda_{\min}(K_\mu)/T}-1)^{2}}=\bar{n}_{\mu}(\bar{n}_{\mu}+1),\label{eq:norm-product}
\end{equation}
where we have defined the Bose--Einstein occupation number of the lowest-energy mode as
\begin{equation}
\bar{n}_{\mu}\coloneqq \frac{1}{e^{\lambda_{\min}(K_{\mu})/T}-1}=\left\Vert X_{T}(\mu)\right\Vert .\label{eq:nbar-def}
\end{equation}
Note also that $\left\Vert X_{T}(\mu)+I\right\Vert =\bar{n}_{\mu}+1$.

With this in hand, let us now first bound the individual Hessian matrix elements. Consider that
\begin{align}
 \left|\frac{\partial^{2}}{\partial\mu_{i}\partial\mu_{j}}f_{T}(\mu)\right|
 & =\frac{1}{T}\left|\int_{0}^{1}ds\,\Tr\!\left[Q_{i}X_{T}(\mu,1-s)Q_{j}X_{T}(\mu,s)\right]\right|\\
 & \leq\frac{1}{T}\int_{0}^{1}ds\,\left|\Tr\!\left[X_{T}(\mu,s)Q_{i}X_{T}(\mu,1-s)Q_{j}\right]\right|\\
 & \leq\frac{1}{T}\int_{0}^{1}ds\,\left\Vert X_{T}(\mu,s)\right\Vert \left\Vert Q_{i}\right\Vert _{1}\left\Vert X_{T}(\mu,1-s)\right\Vert \left\Vert Q_{j}\right\Vert \label{step:holder_ineq}\\
 & \leq\frac{\bar{n}_{\mu}(\bar{n}_{\mu}+1)}{T}\left\Vert Q_{i}\right\Vert _{1}\left\Vert Q_{j}\right\Vert .\label{eq:hessian-elements-bound}
\end{align}
The inequality in~\eqref{step:holder_ineq} follows from the H\"older inequality, and the last step follows from~\eqref{eq:norm-product} and the fact that the product of norms is independent of $s$, so that the integral over $s\in[0,1]$ evaluates to one.

An alternative proof of the upper bound in~\eqref{eq:hessian-elements-bound}
is as follows, making use of the alternative expression for the Hessian elements in~\eqref{eq:2nd-hessian-exp}:
\begin{align}
  \left|\frac{\partial^{2}}{\partial\mu_{i}\partial\mu_{j}}f_{T}(\mu)\right|
 & =\left|-\frac{1}{T}\operatorname{Re}\!\left\{\Tr\!\left[X_{T}(\mu)\Phi_{\mu}(Q_{i})\left(X_{T}(\mu)+I\right)Q_{j}\right]\right\}\right|\\
 & \leq\frac{1}{T}\left|\Tr\!\left[X_{T}(\mu)\Phi_{\mu}(Q_{i})\left(X_{T}(\mu)+I\right)Q_{j}\right]\right|\\
 & \leq\frac{1}{T}\left\Vert X_{T}(\mu)\right\Vert \left\Vert \Phi_{\mu}(Q_{i})\right\Vert _{1}\left\Vert X_{T}(\mu)+I\right\Vert \left\Vert Q_{j}\right\Vert \\
 & \leq\frac{\bar{n}_{\mu}(\bar{n}_{\mu}+1)}{T}\left\Vert Q_{i}\right\Vert _{1}\left\Vert Q_{j}\right\Vert .\label{eq:hessian-elements-bound-1}
\end{align}
The last inequality follows because $\left\Vert X_{T}(\mu)\right\Vert =\bar{n}_{\mu}$, $\left\Vert X_{T}(\mu)+I\right\Vert =\bar{n}_{\mu}+1$, and $\left\Vert \Phi_{\mu}(Q_{i})\right\Vert _{1}\leq\left\Vert Q_{i}\right\Vert _{1}$, given that the trace norm is convex and unitarily invariant and $\Phi_{\mu}$ is a convex mixture of unitary conjugations.

Let us now obtain an upper bound on the largest singular value of
the Hessian (i.e., its spectral norm). Consider that
\begin{align}
\left\Vert \nabla^{2}f_{T}(\mu)\right\Vert  & \leq\left\Vert \nabla^{2}f_{T}(\mu)\right\Vert _{1} \label{step:spectral_norm}\\
 & = -\Tr\!\left[\nabla^{2}f_{T}(\mu)\right] \label{step:Hessian_negative}\\
 & \leq\sum_{i\in\left[c\right]}\left|\frac{\partial^{2}}{\partial\mu_{i}^{2}}f_{T}(\mu)\right|\\
 & \leq\frac{\bar{n}_{\mu}(\bar{n}_{\mu}+1)}{T}\sum_{i\in\left[c\right]}\left\Vert Q_{i}\right\Vert _{1}\left\Vert Q_{i}\right\Vert \label{step:hessian_bound}.
\end{align}
The inequality in~\eqref{step:spectral_norm} follows because the spectral norm does not exceed
the trace norm. The equality in~\eqref{step:Hessian_negative} follows because the Hessian matrix
$\nabla^{2}f_{T}(\mu)$ is negative semidefinite. The inequality in~\eqref{step:hessian_bound}
follows from~\eqref{eq:hessian-elements-bound}.

If the dual vectors $\mu$ are restricted to a parameter set $\mathcal{M}$ such that $\inf_{\mu\in\mathcal{M}}\lambda_{\min}(H-\mu\cdot Q)\geq\lambda_{\min}>0$, then defining the worst-case uniform bound $\bar{n}_{\lambda_{\min}}\coloneqq (e^{\lambda_{\min}/T}-1)^{-1}$, we can bound the occupation number across the entire set: $\bar{n}_{\mu} \leq \bar{n}_{\lambda_{\min}}$. Thus, the following uniform bound holds for the spectral norm:
\begin{equation}
\sup_{\mu\in\mathcal{M}}\left\Vert \nabla^{2}f_{T}(\mu)\right\Vert \leq L_{T}\coloneqq\frac{\bar{n}_{\lambda_{\min}}(\bar{n}_{\lambda_{\min}}+1)}{T}\sum_{i\in\left[c\right]}\left\Vert Q_{i}\right\Vert _{1}\left\Vert Q_{i}\right\Vert .\label{eq:smoothness-parameter_app}
\end{equation}

\section{Quantum algorithms and complexity analysis}
\label{app:quantum-algorithms}

\subsection{Proof of Proposition~\ref{prop:BE-algorithm-correctness}}
\label{sec:proof-BE-algorithm}

In this section, we provide the full mathematical derivation proving that the classical aggregation of measurements from the quantum circuit in Figure~\ref{fig:BE_circuit} yields an unbiased estimate of the Bose--Einstein thermal trace $\Tr[X_T(\mu) Q_i]$. The proof proceeds in three steps: (1) state evolution of the Hadamard test, (2) the Joint Monte Carlo expectation over time and state index, and (3) the geometric series reconstruction.

\paragraph*{Step 1: State Evolution of the Hadamard Test.}
For a given circuit execution, assume a fixed state index $k$ and a fixed evolution time $t$. We initialize the data register in the corresponding density matrix $\rho_k$. 
Let $U = e^{-i t K_\mu}$ denote the unitary time-evolution operator applied to the data register. The initial state of the bipartite system (the control qubit and the data register) is the product state:
\begin{equation}
    \sigma_0 = |0\rangle\!\langle 0| \otimes \rho_k.
\end{equation}
Applying the first Hadamard gate to the control qubit, as shown in the quantum circuit in Figure~\ref{fig:BE_circuit}, creates a coherent superposition in the control space:
\begin{equation}
    \sigma_1 = (H \otimes I) \sigma_0 (H \otimes I) = |+\rangle\!\langle +| \otimes \rho_k = \frac{1}{2} \Big( |0\rangle\!\langle 0| + |0\rangle\!\langle 1| + |1\rangle\!\langle 0| + |1\rangle\!\langle 1| \Big) \otimes \rho_k.
\end{equation}
Next, applying the controlled-unitary gate $CU = |0\rangle\!\langle 0| \otimes I + |1\rangle\!\langle 1| \otimes U$, the state updates to $\sigma_2 = CU \sigma_1 CU^\dagger$, which evaluates to:
\begin{equation}
    \sigma_2 = \frac{1}{2} \Big( |0\rangle\!\langle 0| \otimes \rho_k + |0\rangle\!\langle 1| \otimes \rho_k U^\dagger + |1\rangle\!\langle 0| \otimes U\rho_k + |1\rangle\!\langle 1| \otimes U\rho_k U^\dagger \Big).
\end{equation}
Finally, we apply the second Hadamard gate to the control qubit to yield the state $\sigma_3 = (H \otimes I) \sigma_2 (H \otimes I)$. We then measure the control qubit in the computational $Z$-basis. 
The expectation value of the Pauli $Z$ measurement is given by $\langle Z \rangle = \Tr[(Z \otimes I)\sigma_3]$. Using the cyclic property of the trace and the identity $HZH = X$, we can compute this expectation directly from $\sigma_2$:
\begin{equation}
    \langle Z \rangle = \Tr[(HZH \otimes I)\sigma_2] = \Tr[(X \otimes I)\sigma_2].
\end{equation}
Substituting $X = |0\rangle\!\langle 1| + |1\rangle\!\langle 0|$ into the trace over the control qubit extracts the off-diagonal block elements of $\sigma_2$:
\begin{align}
    \langle Z \rangle &= \Tr_{\operatorname{data}} \left[ \langle 0| \sigma_2 |1\rangle + \langle 1| \sigma_2 |0\rangle \right] \\
    &= \Tr_{\operatorname{data}} \left[ \frac{1}{2} \rho_k U^\dagger + \frac{1}{2} U \rho_k \right] \\
    &= \frac{1}{2} \left( \overline{\Tr[\rho_k U]} + \Tr[\rho_k U] \right) = \operatorname{Re}\!\left\{\Tr[\rho_k U]\right\}.\label{eq:hadamard_z_expectation}
\end{align}

Substituting $U = e^{-i t K_\mu}$ into~\eqref{eq:hadamard_z_expectation}, the expected outcome of a single circuit run yields exactly $\operatorname{Re}\!\left\{\Tr[\rho_k e^{-i t K_\mu}]\right\}$. 

\paragraph*{Step 2: Joint Monte Carlo Expectation.}
Algorithm~\ref{alg:BE-thermal-alg} constructs a Joint Monte Carlo estimator for each term $m$ by drawing $N_{\operatorname{shots}}^{(m)}$ independent pairs $(t, k)$ from the joint probability distribution $P(t, k) = p(t; m/T) p_k$, where $p(t; m/T)$ is the Cauchy distribution defined in~\eqref{eq:Cauchy_prob} and $p_k = |\alpha_{i,k}| / \left\|\alpha_i\right\|_1$. For each shot, the scaled classical estimator is $Y = \left\|\alpha_i\right\|_1 \operatorname{sgn}(\alpha_{i,k}) Z$. 

By the Law of Large Numbers, in the limit of infinite shots ($N_{\operatorname{shots}}^{(m)} \to \infty$), the sample average converges exactly to the expected value over this joint distribution:
\begin{align}
    \mathbb{E}_{(t,k)}[Y] &= \sum_k p_k \left\|\alpha_i\right\|_1 \operatorname{sgn}(\alpha_{i,k}) \mathbb{E}_{t} \left[ \langle Z \rangle \right] \nonumber \\
    &= \sum_k \alpha_{i,k} \operatorname{Re}\!\left\{ \Tr\!\left[ \rho_k \int_{-\infty}^{\infty} dt \ p(t; m/T) e^{-i t K_\mu} \right] \right\}. \label{eq:joint_expectation_integral}
\end{align}
To evaluate the Cauchy temporal integral, we use the fact that $K_\mu$ is Hermitian and thus admits a diagonal eigenbasis $K_\mu = \sum_j \lambda_j |\phi_j\rangle\!\langle\phi_j|$. Expanding the matrix exponential yields:
\begin{equation}
    \int_{-\infty}^{\infty} dt \ p(t; m/T) e^{-i t K_\mu} = \sum_{j} |\phi_j\rangle\!\langle\phi_j| \int_{-\infty}^{\infty} dt \ p(t; m/T) e^{-i t \lambda_j}.
\end{equation}
The scalar integral on the right is exactly the characteristic function (Fourier transform) of the Cauchy distribution, which evaluates analytically to $e^{-\frac{m}{T}|\lambda_j|}$. Since $K_\mu$ is strictly positive definite ($K_\mu > 0$), all eigenvalues are strictly positive ($\lambda_j > 0$), allowing us to drop the absolute value:
\begin{equation}
    \sum_{j} |\phi_j\rangle\!\langle\phi_j| e^{-\frac{m}{T}\lambda_j} = e^{-m K_\mu/T}.
\end{equation}
Because $e^{-m K_\mu / T}$ is a purely real, Hermitian operator (and $\rho_k$ is Hermitian), their trace is strictly real, meaning the $\operatorname{Re}\!\left\{\cdot\right\}$ wrapper in~\eqref{eq:joint_expectation_integral} acts as the identity. Thus, the joint temporal and spatial sampling perfectly recovers the unnormalized imaginary-time trace:
\begin{equation}
    \mathbb{E}_{(t,k)}[Y] = \sum_k \alpha_{i,k} \Tr\!\left[ \rho_k e^{-m K_\mu/T} \right] = \Tr\!\left[ Q_i e^{-m K_\mu/T} \right]. \label{eq:double_mc_result}
\end{equation}

\paragraph*{Step 3: Geometric Series Reconstruction.}
Finally, the algorithm sums these thermal components over the geometric series index $m$. In the limit of infinite series depth ($M \to \infty$), this sum exactly reconstructs the target operator trace:
\begin{equation}
    \sum_{m=1}^\infty \Tr\!\left[ Q_i e^{-m K_\mu/T} \right]= \Tr\!\left[ Q_i \left(\sum_{m=1}^\infty e^{-m K_\mu/T}\right) \right] = \Tr[X_T(\mu) Q_i].
\end{equation}
This completes the proof that the output of Algorithm~\ref{alg:BE-thermal-alg} provides an exact, unbiased estimate of the dual gradient component in the limit of infinite Joint Monte Carlo shots ($N_{\operatorname{shots}}^{(m)} \to \infty$ for all $m$) and infinite series depth ($M \to \infty$).

\subsection{Computational Complexity of Algorithm~\ref{alg:BE-thermal-alg}}
\label{app:complexity-analysis}

In this section, we derive the computational complexity of estimating the thermal trace $\Tr[X_T(\mu) Q_i]$ using Algorithm~\ref{alg:BE-thermal-alg}. To guarantee a total additive error bounded by $\varepsilon$, we partition the error budget equally among three sources of algorithmic approximation: the truncation of the infinite geometric series ($\varepsilon/3$), the truncation of the Cauchy time-evolution distribution ($\varepsilon/3$), and the statistical variance of the Monte Carlo sampling ($\varepsilon/3$).

\paragraph*{Step 1: Series Truncation.}
The target quantity is the infinite sum $G_i = \sum_{m=1}^\infty w_m$, where we have defined 
\begin{equation}
    w_m = \Tr[Q_i e^{-m K_\mu/T}].\label{eq:w_m}
\end{equation} 
To bound the truncation error $\sum_{m=M+1}^\infty w_m \leq \varepsilon/3$, we analyze the spectral decay. 
Because $Q_i = \sum \alpha_{i,k} \rho_k$, we can bound the magnitude of the $m$-th term:
\begin{equation}
    |w_m| \leq \sum_k |\alpha_{i,k}| \left| \Tr[\rho_k e^{-m K_\mu / T}] \right|.
\end{equation}
Since the minimum eigenvalue of the strictly positive dual slack operator $K_\mu$ is $\lambda_{\min} > 0$, the maximum eigenvalue of the positive operator $e^{-m K_\mu / T}$ is $e^{-m \lambda_{\min} / T}$. Because $\rho_k$ is a normalized density matrix ($\Tr[\rho_k]=1$), we have $\Tr[\rho_k e^{-m K_\mu / T}] \leq e^{-m \lambda_{\min} / T}$. Thus, the individual term is bounded by:
\begin{equation}\label{step:spectral_decay}
    |w_m| \leq \left\|\alpha_i\right\|_1 e^{-m \lambda_{\min} / T}.
\end{equation}
Summing the tail of this geometric bound from $m = M+1$ to infinity yields:
\begin{align}
    \text{Tail Error} &= \sum_{m=M+1}^\infty |w_m| \\
    & \leq \sum_{m=M+1}^\infty \left\|\alpha_i\right\|_1 e^{-m \lambda_{\min} / T}\\
    & = \left\|\alpha_i\right\|_1 \frac{e^{-(M+1)\lambda_{\min}/T}}{1 - e^{-\lambda_{\min}/T}} ,
\end{align}
where we used the formula for the sum of an infinite geometric series $\sum_{m=k}^{\infty} r^m = \frac{r^k}{1 - r}$. Setting this error bound to $\varepsilon/3$ and solving for the required truncation depth $M$ gives:
\begin{equation}
    M \geq \frac{T}{\lambda_{\min}} \ln\left( \frac{3 \left\|\alpha_i\right\|_1}{\varepsilon \left( 1 - e^{-\lambda_{\min}/T} \right)} \right) - 1. \label{eq:comp_M_exact}
\end{equation}
In the high-temperature regime ($T \gg \lambda_{\min}$), applying the first-order approximation $1 - e^{-\lambda_{\min}/T} \approx \lambda_{\min}/T$ yields the worst-case asymptotic scaling $M = \mathcal{O}\left( \frac{T}{\lambda_{\min}} \ln(1/\varepsilon) \right)$. In the low-temperature regime ($T \lesssim \lambda_{\min}$), the series converges significantly faster. We define $M$ as the ceiling of the bound in~\eqref{eq:comp_M_exact} to safely maintain the error budget. The algorithm thus evaluates the finite sum $\tilde{G}_i = \sum_{m=1}^M w_m$, requiring each of the $M$ components to be estimated to an individual precision of $\varepsilon / (3M)$.

\paragraph*{Step 2: Cauchy Time Truncation.}
For a specific integer $m$, the component $w_m$ is evaluated by integrating the unitary evolution over a Cauchy distribution $p(t; \tau_m)$ with scale parameter $\tau_m = m/T$. Because the scale parameter grows linearly with both the series index $m$ and the inverse temperature $1/T$, the Cauchy distribution becomes progressively flatter and more heavily tail-dominated in these regimes (see Figure~\ref{fig:cauchy_tails}). This increases the probability of sampling macroscopically large evolution times $t$, which inflates the depth of the Hamiltonian simulation circuit. Therefore, to ensure the quantum runtime remains strictly bounded, we truncate this distribution at a maximum simulation time $t_{\max}^{(m)}$.

\begin{figure}
    \centering
    \includegraphics[width=0.5\columnwidth]{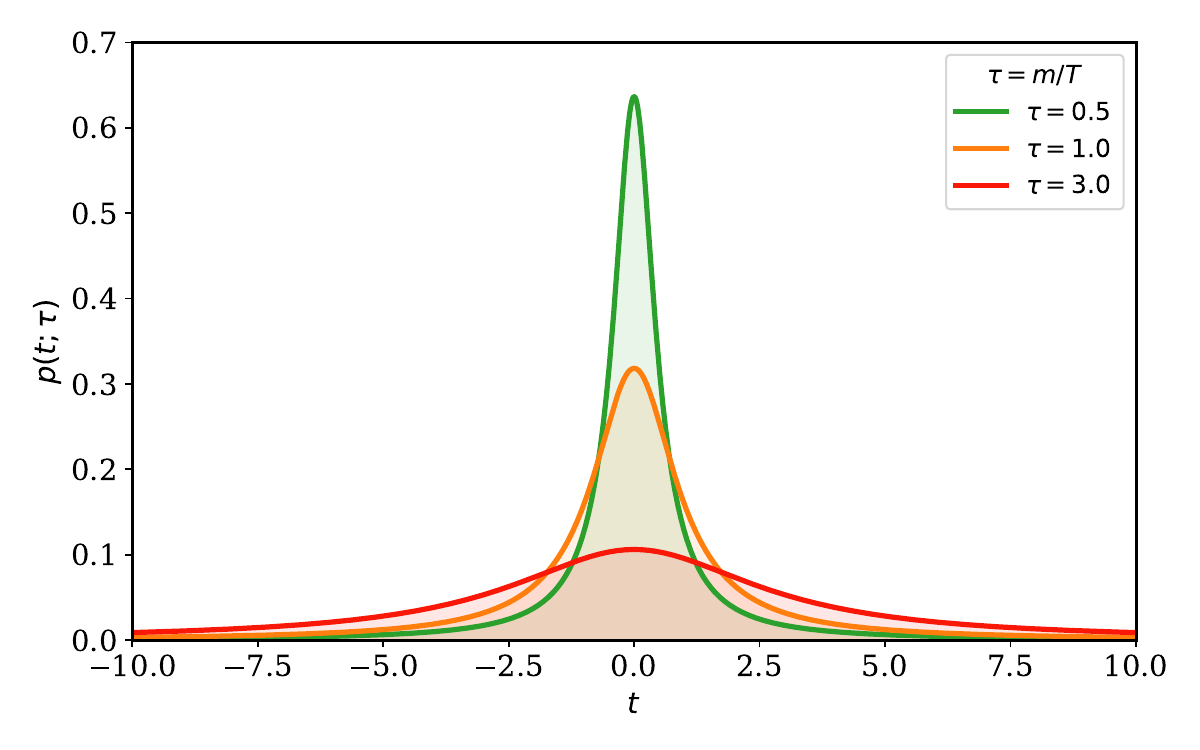}
    \caption{The probability density function of the Cauchy distribution $p(t; \tau)$ for varying scale parameters $\tau = m/T$. As the series index $m$ increases or temperature $T$ decreases, the distribution flattens, increasing the probability of sampling large evolution times $t$ and necessitating the maximum simulation cutoff $t_{\max}^{(m)}$.}
    \label{fig:cauchy_tails}
\end{figure}

Because the trace $\Tr[Q_i e^{-i t K_\mu}]$ is strictly bounded by $\left\|\alpha_i\right\|_1$, the absolute error introduced by discarding the tails of the Cauchy distribution ($|t| > t_{\max}^{(m)}$) scales as:
\begin{equation}\label{step:err_t_trunc}
    \text{Bias}_{\text{trunc}} \approx 2 \left\|\alpha_i\right\|_1 \int_{t_{\max}^{(m)}}^\infty p(t;\tau_m) dt \approx 2 \left\|\alpha_i\right\|_1 \frac{\tau_m}{\pi t_{\max}^{(m)}} = \frac{2 \left\|\alpha_i\right\|_1 m}{\pi T t_{\max}^{(m)}}.
\end{equation}
To restrict this bias to $\mathcal{O}(\varepsilon / M)$ for each term, the maximum simulation time must scale as:
\begin{equation}
    t_{\max}^{(m)} = \mathcal{O}\left( \frac{\left\|\alpha_i\right\|_1 m M}{T \varepsilon} \right). \label{eq:comp_xmax}
\end{equation}

\paragraph*{Step 3: Quantum Gate Complexity per Shot.}
To execute the Hadamard test circuit in Figure~\ref{fig:BE_circuit} for a sampled time $t$, the quantum computer must simulate the unitary $U = e^{-i t K_\mu}$. Using optimal Hamiltonian simulation algorithms, such as Qubitization or advanced Trotter formulas~\cite{Low2019hamiltonian}, the number of quantum gates required scales linearly with the simulation time and the 1-norm of the Hamiltonian coefficients $\left\|h\right\|_1$.
Thus, for a given $m$, the worst-case gate depth for a single circuit execution is:
\begin{equation}
    N_{\text{gates}}^{(m)} = \mathcal{O}\left( \left\|h\right\|_1 t_{\max}^{(m)} \right) = \mathcal{O}\left( \left\|h\right\|_1 \left\|\alpha_i\right\|_1 \frac{m M}{T \varepsilon} \right). \label{eq:comp_gates}
\end{equation}

\paragraph*{Step 4: Monte Carlo Sample Complexity.}
To estimate $w_m$, we construct a Joint Monte Carlo estimator. We simultaneously sample a state index $k$ and a time $t$ from the truncated Cauchy distribution. 
The single-shot observable is the scaled Pauli $Z$ measurement: $Y = \left\|\alpha_i\right\|_1 \operatorname{sgn}(\alpha_{i,k}) Z$. Because $Z \in [-1, 1]$, the variance of this single-shot estimator is strictly bounded by $\text{Var}(Y) \leq \left\|\alpha_i\right\|_1^2$.
By Hoeffding's inequality, the number of independent shots $N_{\operatorname{shots}}^{(m)}$ required to suppress the statistical variance of the mean to the target precision $\mathcal{O}(\varepsilon/M)$ scales as:
\begin{equation}
    N_{\operatorname{shots}}^{(m)} = \mathcal{O}\left( \frac{\text{Var}(Y)}{(\varepsilon/M)^2} \right) = \mathcal{O}\left( \frac{M^2 \left\|\alpha_i\right\|_1^2}{\varepsilon^2} \right). \label{eq:comp_shots}
\end{equation}

\paragraph*{Step 5: Total Asymptotic Complexity.}
The total algorithmic runtime is the sum of the gate operations across all shots for all $M$ terms in the geometric series:
\begin{equation}
    \text{Total Runtime} = \sum_{m=1}^M N_{\operatorname{shots}}^{(m)} \times N_{\text{gates}}^{(m)}.
\end{equation}
Substituting equations~\eqref{eq:comp_gates} and~\eqref{eq:comp_shots} yields:
\begin{align}
    \text{Total Runtime} &= \sum_{m=1}^M \mathcal{O}\left( \frac{M^2 \left\|\alpha_i\right\|_1^2}{\varepsilon^2} \right) \mathcal{O}\left( \left\|h\right\|_1 \left\|\alpha_i\right\|_1 \frac{m M}{T \varepsilon} \right) \nonumber \\
    &= \mathcal{O}\left( \frac{M^3 \left\|\alpha_i\right\|_1^3 \left\|h\right\|_1}{T \varepsilon^3} \right) \sum_{m=1}^M m \nonumber \\
    &= \mathcal{O}\left( \frac{M^5 \left\|\alpha_i\right\|_1^3 \left\|h\right\|_1}{T \varepsilon^3} \right).
\end{align}
Finally, substituting the asymptotic worst-case series truncation depth $M = \widetilde{\mathcal{O}}(T / \lambda_{\min})$ derived from the bound in~\eqref{eq:comp_M_exact}, we obtain the final asymptotic scaling:
\begin{equation}
    \text{Total Runtime} = \widetilde{\mathcal{O}}\left( \left\|\alpha_i\right\|_1^3 \left\|h\right\|_1 \frac{T^4}{\lambda_{\min}^5 \varepsilon^3} \right),
\end{equation}
where the $\widetilde{\mathcal{O}}$ notation suppresses the logarithmic dependence on $\varepsilon$, $\left\|\alpha_i\right\|_1$, and $\lambda_{\min}/T$. Thus, the algorithm evaluates the Bose--Einstein trace with a polynomial scaling in all physical parameters.

\subsection{Proof of Correctness for the Hessian Estimation Algorithm}
\label{sec:proof-hessian-algorithm}

In this section, we prove that the quantum circuit in Figure~\ref{fig:hessian_circuit} yields an unbiased estimate of the Hessian matrix element $H_{ij}$. The proof proceeds in three steps: (1) state evolution of the controlled-SWAP circuit, (2) the Joint Monte Carlo expectation over time, continuous parameter $s$, and state indices, and (3) the geometric series reconstruction.

\paragraph*{Step 1: State Evolution of the Controlled-SWAP Circuit.}
For a given circuit execution, assume a fixed tuple of variables $(s, t_1, t_2, k, l)$. We initialize the two data registers in the corresponding density matrices $\rho_k$ and $\sigma_l$. 
Let $U_1 = e^{-i t_1 K_\mu}$ and $U_2 = e^{-i t_2 K_\mu}$ denote the unitary time-evolution operators applied to the second and first data registers, respectively. The initial state of the tripartite system (the control qubit and the two data registers) is the product state:
\begin{equation}
    \sigma_0 = |0\rangle\!\langle 0| \otimes \rho_k \otimes \sigma_l.
\end{equation}
Applying the first Hadamard gate to the control qubit creates a coherent superposition:
\begin{equation}
    \sigma_1 = (H \otimes I \otimes I) \sigma_0 (H \otimes I \otimes I) = |+\rangle\!\langle +| \otimes \rho_k \otimes \sigma_l.
\end{equation}
Next, we apply the controlled-SWAP gate $CS = |0\rangle\!\langle 0| \otimes I \otimes I + |1\rangle\!\langle 1| \otimes \text{SWAP}$. Using the property that $\text{SWAP}(\rho_k \otimes \sigma_l)\text{SWAP} = \sigma_l \otimes \rho_k$, the state updates to $\sigma_2 = CS \sigma_1 CS^\dagger$:
\begin{align}
    \sigma_2 = \frac{1}{2} \Big( |0\rangle\!\langle 0| \otimes \rho_k \otimes \sigma_l + |0\rangle\!\langle 1| \otimes (\rho_k \otimes \sigma_l)\text{SWAP} + |1\rangle\!\langle 0| \otimes \text{SWAP}(\rho_k \otimes \sigma_l) + |1\rangle\!\langle 1| \otimes \sigma_l \otimes \rho_k \Big).
\end{align}
We then apply the controlled-Hamiltonian evolutions. Let $V = U_2 \otimes U_1$ represent the joint unitary operation on the data registers. The controlled-unitary is $CV = |0\rangle\!\langle 0| \otimes I \otimes I + |1\rangle\!\langle 1| \otimes V$. The state becomes $\sigma_3 = CV \sigma_2 CV^\dagger$:
\begin{align}
    \sigma_3 = \frac{1}{2} \Big( |0\rangle\!\langle 0| \otimes \rho_k \otimes \sigma_l + |0\rangle\!\langle 1| \otimes (\rho_k \otimes \sigma_l)\text{SWAP} V^\dagger + |1\rangle\!\langle 0| \otimes V \text{SWAP} (\rho_k \otimes \sigma_l) + |1\rangle\!\langle 1| \otimes V(\sigma_l \otimes \rho_k)V^\dagger \Big).
\end{align}
Finally, we apply the second Hadamard gate to the control qubit and measure it in the computational $Z$-basis. 
The expectation value of the Pauli $Z$ measurement is given by $\langle Z \rangle = \Tr[(Z \otimes I \otimes I)\sigma_3]$. Utilizing the identity $HZH = X$, we can evaluate this directly from $\sigma_3$:
\begin{equation}
    \langle Z \rangle = \Tr[(X \otimes I \otimes I)\sigma_3] = \Tr_{\operatorname{data}} \left[ \langle 0| \sigma_3 |1\rangle + \langle 1| \sigma_3 |0\rangle \right].
\end{equation}
Extracting the off-diagonal blocks from $\sigma_3$ yields:
\begin{align}
    \langle Z \rangle &= \Tr_{\operatorname{data}} \left[ \frac{1}{2} (\rho_k \otimes \sigma_l)\text{SWAP} V^\dagger + \frac{1}{2} V \text{SWAP} (\rho_k \otimes \sigma_l) \right] \\
    &= \operatorname{Re}\!\left\{\Tr_{\operatorname{data}}[ V \text{SWAP} (\rho_k \otimes \sigma_l) ]\right\}. \label{eq:hessian_pre_swap}
\end{align}
To evaluate this trace, we substitute $V = U_2 \otimes U_1$ and employ the trace identity for bipartite swap operations: $\Tr[(A \otimes B)\text{SWAP}(C \otimes D)] = \Tr[A D B C]$. Applying this identity to~\eqref{eq:hessian_pre_swap} yields:
\begin{equation}
    \langle Z \rangle = \operatorname{Re}\!\left\{\Tr[U_2 \sigma_l U_1 \rho_k]\right\}.
\end{equation}
By the cyclic property of the trace, this is identically $\operatorname{Re}\!\left\{\Tr[U_1 \rho_k U_2 \sigma_l]\right\}$. Substituting the explicit time-evolution operators, the expected outcome of a single circuit run is exactly:
\begin{equation}
    \langle Z \rangle = \operatorname{Re}\!\left\{\Tr[e^{-i t_1 K_\mu} \rho_k e^{-i t_2 K_\mu} \sigma_l]\right\}.
\end{equation}

\paragraph*{Step 2: Joint Monte Carlo Expectation.}
Algorithm~\ref{alg:hessian-est-alg} constructs a Joint Monte Carlo estimator for each index pair $(m_1, m_2)$ by drawing $N_{\operatorname{shots}}^{(m_1, m_2)}$ independent tuples from the joint probability distribution of time, state indices, and the interpolation parameter $s$. For each shot, the scaled classical estimator is $Y = \left\|\alpha_i\right\|_1 \left\|\alpha_j\right\|_1 \operatorname{sgn}(\alpha_{i,k}\alpha_{j,l}) Z$.

By the Law of Large Numbers, in the limit of infinite shots ($N_{\operatorname{shots}}^{(m_1, m_2)} \to \infty$), the sample average converges exactly to the expected value over this joint distribution. Summing over the index probabilities $p_i(k)$ and $p_j(l)$ cancels the normalization norms and restores the operator coefficients:
\begin{equation}
    \mathbb{E}[Y] = \int_{0}^{1} ds \operatorname{Re}\!\left\{ \Tr\!\left[ \left(\int_{-\infty}^{\infty} dt_1 \ p(t_1) e^{-i t_1 K_\mu}\right) Q_i \left(\int_{-\infty}^{\infty} dt_2 \ p(t_2) e^{-i t_2 K_\mu}\right) Q_j \right] \right\}.
\end{equation}
Following the spectral decomposition logic established in Appendix~\ref{sec:proof-BE-algorithm}, the Fourier transform of the Cauchy distributions $p(t_1; \tau_1)$ and $p(t_2; \tau_2)$ evaluates analytically to the non-unitary thermal weights. Because $K_\mu > 0$, the temporal integrals perfectly map the unitary operators to $e^{-\tau_1 K_\mu}$ and $e^{-\tau_2 K_\mu}$ respectively:
\begin{equation}
    \mathbb{E}[Y] = \int_{0}^{1} ds \operatorname{Re}\!\left\{\Tr\!\left[ e^{-\tau_1(m_1,s) K_\mu} Q_i e^{-\tau_2(m_2,s) K_\mu} Q_j \right]\right\}. \label{eq:hessian_double_mc_result}
\end{equation}

\paragraph*{Step 3: Geometric Series Reconstruction.}
Because the true Hessian matrix $H_{ij}$ is the second derivative of a real-valued scalar objective function $f_T(\mu)$, its elements are strictly real. Therefore, the $\operatorname{Re}\!\left\{\cdot\right\}$ wrapper naturally aligns with the underlying physical quantity.
Finally, the algorithm sums the components across the truncated double geometric series $m_1, m_2 \in \{1, \dots, M\}$ and multiplies by the global factor $-1/T$. In the limit of infinite series depth ($M \to \infty$), this exactly reconstructs the desired quantity for the Hessian matrix element $H_{ij}$:
\begin{equation}
    -\frac{1}{T} \int_{0}^{1}ds \sum_{m_1=1}^{\infty} \sum_{m_2=1}^{\infty} \Tr\!\left[ e^{-\frac{m_1-s}{T} K_\mu} Q_i e^{-\frac{m_2-1+s}{T} K_\mu} Q_j \right] = -\frac{1}{T}\int_{0}^{1}ds\,\Tr\!\left[X_{T}(\mu,s)Q_{i}X_{T}(\mu,1-s)Q_{j}\right].
\end{equation}

\subsection{Computational Complexity of Algorithm~\ref{alg:hessian-est-alg}}
\label{app:complexity_hessian}

In this section, we derive the computational complexity of estimating the Hessian matrix element $H_{ij}$ using Algorithm~\ref{alg:hessian-est-alg}. Because the algorithmic structure parallels the gradient estimator, we extend the error partitioning logic from Appendix~\ref{app:complexity-analysis} to the double geometric series. We allocate the total additive error budget $\varepsilon$ equally among the series truncation ($\varepsilon/3$), the Cauchy time truncation ($\varepsilon/3$), and the Monte Carlo statistical variance ($\varepsilon/3$).

\paragraph*{Step 1: Series Truncation.} The target quantity is the double infinite sum $H_{ij} = -\frac{1}{T} \int_0^1 ds \sum_{m_1=1}^{\infty} \sum_{m_2=1}^{\infty} w_{m_1, m_2}(s)$, where we define the integrand weights as:
\begin{equation}
    w_{m_1, m_2}(s) \coloneqq \Tr\!\left[e^{-\frac{m_1-s}{T}K_\mu} Q_i e^{-\frac{m_2-1+s}{T}K_\mu} Q_j\right].
\end{equation}
Under the Quantum State Model introduced for this algorithm, the Hermitian operators are decomposed into linear combinations of density matrices: $Q_i = \sum_k \alpha_{i,k} \rho_k$ and $Q_j = \sum_l \alpha_{j,l} \sigma_l$. By expanding $w_{m_1, m_2}(s)$ using these decompositions and following the spectral decay logic established in~\eqref{step:spectral_decay}, the magnitude of the joint thermal weights decays exponentially with the minimum eigenvalue $\lambda_{\min}$ of the dual slack operator $K_\mu$. Because the sum of the scaled exponent parameters is $\frac{m_1-s}{T} + \frac{m_2-1+s}{T} = \frac{m_1+m_2-1}{T}$, the terms are bounded uniformly for all $s \in [0,1]$ by:
\begin{equation}
    |w_{m_1, m_2}(s)| \leq \left\|\alpha_i\right\|_1 \left\|\alpha_j\right\|_1 e^{-(m_1+m_2-1)\lambda_{\min}/T},
\end{equation}
where $\left\|\alpha_i\right\|_1 = \sum_k |\alpha_{i,k}|$ and $\left\|\alpha_j\right\|_1 = \sum_l |\alpha_{j,l}|$ are the 1-norms of the state coefficients. The tail of the double geometric series consists of two two-dimensional strips: $\{m_{1}>M,m_{2}\geq 1\}$ and $\{m_{1}\leq M,m_{2}>M\}$. Each of these strips contributes
\begin{equation}
    \sum_{m_{1}>M}\sum_{m_{2}\geq 1}\left\|\alpha_i\right\|_1\left\|\alpha_j\right\|_1\,e^{-(m_{1}+m_{2}-1)\lambda_{\min}/T} = \frac{\left\|\alpha_i\right\|_1\left\|\alpha_j\right\|_1\,e^{-M\lambda_{\min}/T}}{(1-e^{-\lambda_{\min}/T})^{2}}\cdot e^{\lambda_{\min}/T},
\end{equation}
and similarly for the second strip. Setting the sum of these two contributions to $\varepsilon/3$ and solving for $M$ yields, up to a logarithmic factor that we suppress in the $\widetilde{\mathcal{O}}$ notation,
\begin{equation}
    M = \widetilde{\mathcal{O}}\left( \frac{T}{\lambda_{\min}} \ln(1/\varepsilon) \right).
\end{equation}
The algorithm thus evaluates a finite grid of $M^2$ terms. To maintain the overall error budget, each of the $M^2$ individual components must be estimated to a target precision of $\mathcal{O}(\varepsilon/M^2)$.

\paragraph*{Step 2: Cauchy Time Truncation.} Each component involves integrating two unitary evolutions over Cauchy distributions $p(t_1;\tau_1)$ and $p(t_2;\tau_2)$. The scale parameters $\tau_1, \tau_2$ are bounded by $\mathcal{O}(M/T)$. To restrict the bias introduced by discarding the heavy tails to $\mathcal{O}(\varepsilon/M^2)$ for each term, the maximum simulation times $t_{\max}^{(m_1, m_2)}$ must be extended. Paralleling the derivation in~\eqref{step:err_t_trunc}, the cutoff scales as:
\begin{equation}
    t_{\max} = \mathcal{O}\left( \frac{\left\|\alpha_i\right\|_1 \left\|\alpha_j\right\|_1 M^3}{T \varepsilon} \right).
\end{equation}

\paragraph*{Step 3: Quantum Gate Complexity per Shot.} For a sampled time pair $(t_1, t_2)$, the quantum computer simulates the unitaries $U_1$ and $U_2$. Using optimal Hamiltonian simulation, the gate depth scales linearly with the simulation time and the 1-norm of the Hamiltonian coefficients $\left\|h\right\|_1$. The worst-case gate depth for a single execution of the circuit in Figure~\ref{fig:hessian_circuit} is:
\begin{equation}
    N_{\text{gates}}^{(m_1, m_2)} = \mathcal{O}\left( \left\|h\right\|_1 t_{\max} \right) = \mathcal{O}\left( \left\|h\right\|_1 \left\|\alpha_i\right\|_1 \left\|\alpha_j\right\|_1 \frac{M^3}{T\varepsilon} \right).
\end{equation}

\paragraph*{Step 4: Monte Carlo Sample Complexity.} The single-shot observable is $Y = \left\|\alpha_i\right\|_1 \left\|\alpha_j\right\|_1 \mathrm{sgn}(\alpha_{i,k}\alpha_{j,l}) Z$. Because $Z \in \{-1, 1\}$, the variance is strictly bounded by $\mathrm{Var}(Y) \leq \left\|\alpha_i\right\|_1^2 \left\|\alpha_j\right\|_1^2$. By Hoeffding's inequality, the number of independent shots required to suppress the statistical variance of the mean to the target precision $\mathcal{O}(\varepsilon/M^2)$ scales as:
\begin{equation}
    N_{\operatorname{shots}}^{(m_1, m_2)} = \mathcal{O}\left( \frac{\mathrm{Var}(Y)}{(\varepsilon/M^2)^2} \right) = \mathcal{O}\left( \frac{M^4 \left\|\alpha_i\right\|_1^2 \left\|\alpha_j\right\|_1^2}{\varepsilon^2} \right).
\end{equation}

\paragraph*{Step 5: Total Asymptotic Complexity.} The total algorithmic runtime is the sum of the gate operations across all shots for all $M^2$ terms in the double series grid:
\begin{align}
    \text{Total Runtime} &= \sum_{m_1=1}^{M} \sum_{m_2=1}^{M} N_{\operatorname{shots}}^{(m_1, m_2)} \times N_{\text{gates}}^{(m_1, m_2)} \\
    &= M^2 \times \mathcal{O}\left( \frac{M^4 \left\|\alpha_i\right\|_1^2 \left\|\alpha_j\right\|_1^2}{\varepsilon^2} \right) \mathcal{O}\left( \left\|h\right\|_1 \left\|\alpha_i\right\|_1 \left\|\alpha_j\right\|_1 \frac{M^3}{T\varepsilon} \right) \\
    &= \mathcal{O}\left( \frac{M^9 \left\|\alpha_i\right\|_1^3 \left\|\alpha_j\right\|_1^3 \left\|h\right\|_1}{T \varepsilon^3} \right).
\end{align}
Substituting the asymptotic series truncation depth $M = \widetilde{\mathcal{O}}(T/\lambda_{\min})$, we obtain the final asymptotic scaling:
\begin{equation}
    \text{Total Runtime} = \widetilde{\mathcal{O}}\left( \left\|\alpha_i\right\|_1^3 \left\|\alpha_j\right\|_1^3 \left\|h\right\|_1 \frac{T^8}{\lambda_{\min}^9 \varepsilon^3} \right),
\end{equation}
where the $\widetilde{\mathcal{O}}$ notation suppresses logarithmic dependence on the physical parameters.

\section{Bose--Einstein relative entropy: proofs of properties}
\label{app:be_relative_entropy}

\subsection{Proof of Proposition~\ref{prop:BE-faithfulness} (Faithfulness and Support)}
\label{app:proof_BE_faithfulness}

\begin{proof}
We first prove the non-negativity $D_{\mathrm{BE}}(X\|Y) \geq 0$, with equality if and only if $X=Y$. As established in Remark~\ref{rem:bregman-formulation}, the Bose--Einstein relative entropy coincides with the matrix Bregman divergence generated by the trace function $F(X) = \Tr[f(X)] = -S_{\mathrm{BE}}(X)$, where the underlying scalar function is $f(x) \coloneqq x\ln x - (x+1)\ln(x+1)$.

As shown in Appendix~\ref{app:concavity_BEentropy} for the scalar bosonic entropy $g(x) = - f(x)$, the second derivative is $f''(x) = \frac{1}{x(x+1)}$. Because $f''(x) > 0$ for all $x > 0$, the function $f(x)$ is convex. The non-negativity of a matrix Bregman divergence generated by a convex trace function follows directly from Klein's inequality~\cite[Eq. 4]{Petz1994}. Specifically, Klein's inequality dictates that for any strictly convex, differentiable scalar function $f$ and positive semidefinite operators $A$ and $B$:
\begin{equation} \label{eq:klein-inequality}
    \Tr[f(A) - f(B) - (A-B)f'(B)] \geq 0,
\end{equation}
with equality if and only if $A=B$.

We now apply Klein's inequality by substituting $A=X$ and $B=Y$. By extending the scalar function $f$ and its derivative $f'$ to matrix operators, we recognize that the trace of $f(X)$ corresponds exactly to the negative Bose--Einstein entropy:
\begin{equation}
    \Tr[f(X)] = \Tr[X\ln X - (X+I)\ln(X+I)] = -S_{\mathrm{BE}}(X).
\end{equation}
Similarly, the matrix function for the derivative evaluated at $Y$ is $f'(Y) = \ln Y - \ln(Y+I)$. Substituting these operator identities into the left-hand side of~\eqref{eq:klein-inequality} yields:
\begin{align}
    & \Tr[f(X)] - \Tr[f(Y)] - \Tr[(X-Y)f'(Y)] \nonumber \\
    & = -S_{\mathrm{BE}}(X) + S_{\mathrm{BE}}(Y) - \Tr[(X-Y)(\ln Y - \ln(Y+I))] \\
    \begin{split} 
    &= -S_{\mathrm{BE}}(X) + \Tr[(Y+I)\ln(Y+I) - Y\ln Y]  - \Tr[X\ln Y - X\ln(Y+I) - Y\ln Y + Y\ln(Y+I)]
    \end{split} \\
    & = -S_{\mathrm{BE}}(X) + \Tr[X\ln(Y+I) - X\ln Y] + \Tr[\ln(Y+I)]\\
    & = -S_{\mathrm{BE}}(X) + \Tr[(X+I)\ln(Y+I) - X\ln Y].
\end{align}
This final expression coincides with the Bose--Einstein relative entropy $D_{\mathrm{BE}}(X\|Y)$, as defined in~\eqref{eq:BE_rel_entropy_main}. Therefore, $D_{\mathrm{BE}}(X\|Y) \geq 0$, with equality achieved if and only if $X=Y$.

To prove the support condition, suppose there exists an eigenvector $|\psi\rangle$ in the kernel of $Y$ (meaning $Y|\psi\rangle = 0$) such that $X|\psi\rangle = x|\psi\rangle$ with $x > 0$. Evaluating the term $\Tr[-X\ln Y]$ in the eigenbasis of $Y$ yields a component proportional to $-x \ln(0) \to \infty$. Therefore, $D_{\mathrm{BE}}(X\|Y)$ is finite if and only if the kernel of $Y$ is contained within the kernel of $X$, which is equivalent to stating that $\operatorname{supp}(X) \subseteq \operatorname{supp}(Y)$.
\end{proof}

\subsection{Proof of Proposition~\ref{prop:BE-isometric} (Unitary Invariance)}
\label{app:proof_BE_isometric}

\begin{proof}
Let $U$ be a unitary, i.e., $U^\dagger U = UU^\dagger = I$. Substituting $UXU^\dagger$ and $UYU^\dagger$ into the definition of the Bose--Einstein relative entropy in~\eqref{eq:BE_rel_entropy_main}, yields:
\begin{equation}\label{step:first_iso}
    D_{\mathrm{BE}}(UXU^\dagger\|UYU^\dagger) = -S_{\mathrm{BE}}(UXU^\dagger) + \Tr[(UXU^\dagger+I)\ln(UYU^\dagger+I) - UXU^\dagger\ln(UYU^\dagger)].
\end{equation}
Because the Bose--Einstein entropy, defined in~\eqref{eq:BE-entropy-def}, is evaluated as the trace of a matrix function, we can apply the identity $f(UXU^\dagger) = U f(X) U^\dagger$ and the cyclic property of the trace to obtain $S_{\mathrm{BE}}(UXU^\dagger) = \Tr[U f(X) U^\dagger] = \Tr[f(X) U^\dagger U] = S_{\mathrm{BE}}(X)$. Utilizing the same property for the matrix logarithm, we have $\ln(UYU^\dagger+I) = \ln(U(Y+I)U^\dagger) = U\ln(Y+I)U^\dagger$. Substituting this back into the trace term in the right-hand side of~\eqref{step:first_iso} gives:
\begin{equation}
    \Tr[(UXU^\dagger+I)\ln(UYU^\dagger+I) - UXU^\dagger\ln(UYU^\dagger)] = \Tr[UXU^\dagger U\ln(Y+I)U^\dagger + U\ln(Y+I)U^\dagger - UXU^\dagger U\ln Y U^\dagger].
\end{equation}
Using $U^\dagger U = I$ and the cyclic property of the trace ($\Tr[U A U^\dagger] = \Tr[A U^\dagger U] = \Tr[A]$), the unitaries cancel out, recovering the original formula for $D_{\mathrm{BE}}(X\|Y)$.
\end{proof}

\subsection{Proof of Proposition~\ref{prop:BE-spectral} (Spectral Expansion)}
\label{app:proof_BE_spectral}

\begin{proof}
Given the spectral decompositions $X = \sum_i x_i |\psi_i\rangle\!\langle\psi_i|$ and $Y = \sum_j y_j |\phi_j\rangle\!\langle\phi_j|$, we first expand the single-argument entropy $-S_{\mathrm{BE}}(X)$ in the definition of the Bose--Einstein relative entropy $D_{\mathrm{BE}}(X\|Y)$ in~\eqref{eq:BE_rel_entropy_main}. Because $\{|\phi_j\rangle\}$ forms a complete basis, $\sum_j |\phi_j\rangle\!\langle\phi_j| = I$. We can therefore insert this resolution of the identity:
\begin{align}
    -S_{\mathrm{BE}}(X) &= \Tr[X\ln X - (X+I)\ln(X+I)] \\
    &= \sum_{i} \left[ x_i\ln x_i - (x_i+1)\ln(x_i+1) \right] \langle\psi_i|\psi_i\rangle \\
    &= \sum_{i,j} \left[ x_i\ln x_i - (x_i+1)\ln(x_i+1) \right] \langle\psi_i|\phi_j\rangle\!\langle\phi_j|\psi_i\rangle \\
    &= \sum_{i,j} \left[ x_i\ln x_i - (x_i+1)\ln(x_i+1) \right] |\langle\psi_i|\phi_j\rangle|^2.
\end{align}
We now expand the cross terms $\Tr[(X+I)\ln(Y+I)]$ and $-\Tr[X\ln Y]$ in the joint eigenbasis:
\begin{align}
    \Tr[(X+I)\ln(Y+I)] &= \sum_{i,j} (x_i+1) \ln(y_j+1) |\langle\psi_i|\phi_j\rangle|^2, \\
    -\Tr[X\ln Y] &= \sum_{i,j} -x_i \ln(y_j) |\langle\psi_i|\phi_j\rangle|^2.
\end{align}
Summing these three components and grouping the scalar functions yields:
\begin{equation}
    D_{\mathrm{BE}}(X\|Y) = \sum_{i,j} \left( x_i\ln x_i - (x_i+1)\ln(x_i+1) + (x_i+1)\ln(y_j+1) - x_i\ln y_j \right) |\langle\psi_i|\phi_j\rangle|^2.
\end{equation}
By factoring the logarithms, the bracketed scalar term simplifies to $d_{\mathrm{BE}}(x_i \| y_j) = x_i \ln \frac{x_i}{y_j} + (x_i+1) \ln \frac{y_j+1}{x_i+1}$, proving the proposition.
\end{proof}

\subsection{Proof of Proposition~\ref{prop:BE-convexity-main} (Convexity and Data Processing)}
\label{app:proof_BE_convexity}

\begin{proof}
We first prove that $D_{\mathrm{BE}}(X\|Y)$ is strictly convex in $X$ for a fixed $Y$. The second term in the definition of the Bose--Einstein relative entropy in~\eqref{eq:BE_rel_entropy_main}, $X \mapsto \Tr[(X+I)\ln(Y+I) - X\ln Y]$, is linear in $X$. Because the Bose--Einstein entropy $S_{\mathrm{BE}}(X)$ is strictly concave in $X$ (see Appendix~\ref{app:concavity_BEentropy}), its negation $-S_{\mathrm{BE}}(X)$ is strictly convex. Since the sum of a strictly convex function and a linear function is strictly convex, $D_{\mathrm{BE}}(X\|Y)$ is strictly convex in $X$.

To prove the failure of joint convexity, it is sufficient to show that the underlying scalar Bose--Einstein relative entropy $d_{\mathrm{BE}}(x\|y)$, defined in~\eqref{eq:scalar_BE_rel_entr}, is not jointly convex. A multi-variable function is jointly convex if and only if its Hessian matrix $\mathcal{H}(x,y)$ is positive semidefinite for all $x, y > 0$, requiring the determinant $\det(\mathcal{H})$ to be non-negative everywhere. The second partial derivatives of $d_{\mathrm{BE}}(x\|y)$ with respect to $x$ and $y$ are:
\begin{align}
    \mathcal{H}_{xx} &= \frac{\partial^2 d_{\mathrm{BE}}}{\partial x^2} = \frac{1}{x} - \frac{1}{x+1} = \frac{1}{x(x+1)}, \\
    \mathcal{H}_{xy} &= \frac{\partial^2 d_{\mathrm{BE}}}{\partial x \partial y} = \frac{1}{y+1} - \frac{1}{y} = -\frac{1}{y(y+1)}, \\
    \mathcal{H}_{yy} &= \frac{\partial^2 d_{\mathrm{BE}}}{\partial y^2} = -\frac{x+1}{(y+1)^2} + \frac{x}{y^2} = \frac{2xy + x - y^2}{y^2(y+1)^2}.
\end{align}
Calculating the determinant $\det(\mathcal{H}) = \mathcal{H}_{xx}\mathcal{H}_{yy} - (\mathcal{H}_{xy})^2$:
\begin{align}
    \det(\mathcal{H}) &= \left[ \frac{1}{x(x+1)} \right] \left[ \frac{2xy + x - y^2}{y^2(y+1)^2} \right] - \left[ \frac{-1}{y(y+1)} \right]^2 \nonumber \\
    &= \frac{2xy + x - y^2 - x(x+1)}{x(x+1)y^2(y+1)^2} = -\frac{(x-y)^2}{x(x+1)y^2(y+1)^2}.
\end{align}
Because the denominator is strictly positive, and the numerator $-(x-y)^2$ is strictly negative for any $x \neq y$, the determinant is strictly negative. An indefinite Hessian implies the presence of saddle points, confirming that $d_{\mathrm{BE}}(x\|y)$, and consequently $D_{\mathrm{BE}}(X\|Y)$, is not jointly convex.

Finally, we demonstrate how the failure of joint convexity guarantees the failure of the data-processing inequality (DPI). We do this by showing that if a divergence were to satisfy the DPI, it would be forced to be jointly convex. Consider the CPTP map $\Phi$ corresponding to the partial trace over an orthogonal ancillary classical register $A$: $\Phi(X_{AB}) = \Tr_A(X_{AB})$. Let us construct the classical-quantum block-diagonal operators $X_{AB} = p |0\rangle\!\langle 0| \otimes X_1 + (1-p) |1\rangle\!\langle 1| \otimes X_2$ and $Y_{AB} = p |0\rangle\!\langle 0| \otimes Y_1 + (1-p) |1\rangle\!\langle 1| \otimes Y_2$ for some scalar $p \in [0,1]$. If the generalized DPI holds, then $D_{\mathrm{BE}}(\Phi(X_{AB})\|\Phi(Y_{AB})) \leq D_{\mathrm{BE}}(X_{AB}\|Y_{AB})$. The partial trace yields the convex combinations $\Phi(X_{AB}) = p X_1 + (1-p) X_2$ and $\Phi(Y_{AB}) = p Y_1 + (1-p) Y_2$. Concurrently, due to the direct-sum additivity of the Bose--Einstein relative entropy (Proposition~\ref{prop:BE-additivity}), the divergence of the block-diagonal states evaluates to $p D_{\mathrm{BE}}(X_1\|Y_1) + (1-p) D_{\mathrm{BE}}(X_2\|Y_2)$. Thus, substituting these into the DPI inequality yields:
\begin{equation}
    D_{\mathrm{BE}}(p X_1 + (1-p) X_2 \| p Y_1 + (1-p) Y_2) \leq p D_{\mathrm{BE}}(X_1\|Y_1) + (1-p) D_{\mathrm{BE}}(X_2\|Y_2),
\end{equation}
which coincides with the mathematical definition of joint convexity. Therefore, because we have proven that $D_{\mathrm{BE}}(X\|Y)$ is not jointly convex, it cannot satisfy the data-processing inequality under arbitrary CPTP maps.
\end{proof}   

\begin{remark}[Bregman Divergences and Joint Convexity]
The failure of joint convexity highlights the distinction between the Bose--Einstein relative entropy and the Fermi--Dirac relative entropy $D_{\mathrm{FD}}(M_1\|M_2)$ defined in~\cite[Definition 27]{liu2026sdp_fermi}. Both divergences belong to the class of matrix Bregman divergences~\cite{Dhillon2007Matrix}. A Bregman divergence generated by a strictly convex scalar function $f(x)$ is jointly convex if and only if the reciprocal of its second derivative, $1/f''(x)$, is a concave function~\cite[Theorem 3.3]{Bauschke2001Joint}. For fermions, $f_{\mathrm{FD}}(x) = x\ln x + (1-x)\ln(1-x)$, which gives $1/f_{\mathrm{FD}}''(x) = x - x^2$ (a strictly concave parabola), geometrically guaranteeing joint convexity~\cite[Example 3.5]{Bauschke2001Joint}. However, for bosons, the entropy $f_{\mathrm{BE}}(x) = -g(x) = x\ln x - (x+1)\ln(x+1)$ yields $1/f_{\mathrm{BE}}''(x) = x^2 + x$ (see the derivation in~\eqref{step:second_der_g(x)} in Appendix~\ref{app:concavity_BEentropy}), which is strictly convex, thus precluding joint convexity.
\end{remark}

\subsection{Proof of Proposition~\ref{prop:BE-additivity} (Additivity under Direct Sums)}
\label{app:proof_BE_additivity}

\begin{proof}
Let $X \oplus Z$ and $Y \oplus W$ be block-diagonal matrices.
Applying the property that matrix functions acting on block-diagonal operators evaluate block-wise~\cite{Higham2008functions} ($f(X \oplus Z) = f(X) \oplus f(Z)$) to the matrix logarithm and taking the trace yields the additivity of the scalar entropy:
\begin{equation}
    -S_{\mathrm{BE}}(X \oplus Z) = -S_{\mathrm{BE}}(X) - S_{\mathrm{BE}}(Z).
\end{equation}
For the cross terms in the relative entropy in~\eqref{eq:BE_rel_entropy_main}, the product of two block-diagonal matrices is the direct sum of their products, and the trace of a direct sum is the sum of the traces, thus giving:
\begin{align}
    \Tr[((X \oplus Z)+I)\ln((Y \oplus W)+I)] &= \Tr[(X+I)\ln(Y+I) \oplus (Z+I)\ln(W+I)] \\
    &= \Tr[(X+I)\ln(Y+I)] + \Tr[(Z+I)\ln(W+I)].
\end{align}
Following the same logic for the $-X\ln Y$ term and grouping the components corresponding to the respective subspaces recovers $D_{\mathrm{BE}}(X\|Y) + D_{\mathrm{BE}}(Z\|W)$.
\end{proof}

\subsection{Proof of Theorem~\ref{thm:BE-affine-monotonicity} (Affine Monotonicity) and Lemma~\ref{lem:BE-integral-representation}}
\label{app:proof_BE_affine_monotonicity}

We prove the integral representation first, then the scalar contraction inequality, and finally lift to the operator level.

\paragraph*{Integral representation (Lemma~\ref{lem:BE-integral-representation}).}
For any twice-differentiable convex function $f$, the Bregman divergence $D_f(x,y) = f(x) - f(y) - f'(y)(x-y)$ admits the standard integral form
\begin{equation}
D_f(x,y) = (x-y)^2 \int_0^1 (1-t)\,f''(y + t(x-y))\,dt.
\end{equation}
Applied to $f(x) = x \ln x - (x+1)\ln(x+1)$, whose derivatives are
\begin{equation}
f'(x) = \ln \frac{x}{x+1}, \qquad f''(x) = \frac{1}{x(x+1)},
\end{equation}
a direct calculation verifies that $D_f(x,y)$ coincides with the scalar divergence $d_{\mathrm{BE}}(x\|y)$ defined in~\eqref{eq:scalar_BE_rel_entr}. Substituting $f''$ into the integral form yields~\eqref{eq:BE-integral-rep}.

\paragraph*{Scalar contraction under affine maps (Theorem~\ref{thm:BE-affine-monotonicity}).}
Let $\Lambda(z) = az + b$ with $a, b \geq 0$ denote a generic affine map (we use $\Lambda$ rather than $T$ here to avoid clashing with the temperature). Applying~\eqref{eq:BE-integral-rep} to $\Lambda(x), \Lambda(y)$ gives
\begin{equation}
d_{\mathrm{BE}}(\Lambda(x) \| \Lambda(y)) = a^2 (x-y)^2 \int_0^1 \frac{1-t}{\Lambda(z_t)\,(\Lambda(z_t)+1)}\,dt,
\end{equation}
so comparing $d_{\mathrm{BE}}(x\|y)$ and $d_{\mathrm{BE}}(\Lambda(x)\|\Lambda(y))$ reduces to comparing the integrands pointwise in $t$. It is therefore sufficient to prove
\begin{equation}
\frac{a^2}{\Lambda(z)\,(\Lambda(z)+1)} \leq \frac{1}{z(z+1)} \qquad \forall z > 0,
\end{equation}
or equivalently $\Lambda(z)(\Lambda(z)+1) \geq a^2 z(z+1)$. Expanding,
\begin{equation}
(az+b)(az+b+1) - a^2 z(z+1) = b(b+1) + az(2b+1-a),
\end{equation}
which is non-negative for all $z \geq 0$ whenever $2b+1 \geq a$. This establishes the scalar form
\begin{equation}\label{eq:BE-scalar-affine}
d_{\mathrm{BE}}(x\|y) \geq d_{\mathrm{BE}}(ax+b\|ay+b).
\end{equation}

\paragraph*{Operator-level lift.}
By the spectral expansion of Proposition~\ref{prop:BE-spectral},
\begin{equation}
D_{\mathrm{BE}}(X\|Y) = \sum_{i,j} d_{\mathrm{BE}}(x_i \| y_j)\,|\langle\psi_i|\phi_j\rangle|^2.
\end{equation}
The affine map $Z \mapsto aZ + bI$ leaves the eigenvectors of $X$ and $Y$ invariant and shifts their eigenvalues by $z \mapsto az + b$. The spectral expansion of $D_{\mathrm{BE}}(aX + bI\|aY + bI)$ is therefore the same sum with each $d_{\mathrm{BE}}(x_i\|y_j)$ replaced by $d_{\mathrm{BE}}(ax_i+b\|ay_j+b)$. Applying~\eqref{eq:BE-scalar-affine} term by term yields~\eqref{eq:BE-affine-monotonicity-bnd}.

\paragraph*{Verification of the three named cases.}
The three specializations listed after Theorem~\ref{thm:BE-affine-monotonicity} all satisfy the condition $2b+1 \geq a$:
\begin{itemize}
    \item Attenuator: $2b+1-a = 2(1-\eta)N + 1 - \eta = (1-\eta)(2N+1) \geq 0$.
    \item Amplifier: $2b+1-a = 2(G-1)(N+1) + 1 - G = (G-1)(2N+1) \geq 0$.
    \item Additive noise: $2b+1-a = 2N \geq 0$.
\end{itemize}

\subsection{Proof of Proposition~\ref{prop:BE-fisher-info} (Bose--Einstein Fisher Information)}
\label{sec:proof-be-fisher}

\begin{proof}
To derive the expressions for the Fisher information matrix, we first evaluate the relative entropy $D_{\mathrm{BE}}(X_{T}(\mu)\|X_{T}(\mu+\varepsilon))$ using Definition~\ref{def:BE-rel-entropy}.
Let $X = X_T(\mu)$ and $Y = X_T(\mu+\varepsilon) = (e^{\frac{1}{T}(H-(\mu+\varepsilon)\cdot Q)}-I)^{-1}$.
Let us express the matrix logarithms of $Y$ in terms of $H$ and $Q$:
\begin{equation}
    \ln(Y+I) - \ln Y = \ln\left( (Y+I)Y^{-1} \right) = \ln\left( I + Y^{-1} \right) = \frac{1}{T}\big(H - (\mu+\varepsilon)\cdot Q\big).
\end{equation}
Using this identity, the trace term in the relative entropy becomes:
\begin{align}
    \Tr[(X+I)\ln(Y+I) - X\ln Y] &= \Tr[X(\ln(Y+I) - \ln Y)] + \Tr[\ln(Y+I)] \\
    &= \frac{1}{T}\Tr\!\left[X_T(\mu) \big(H - (\mu+\varepsilon)\cdot Q\big)\right] - \Tr\!\left[\ln\!\left(I - e^{-\frac{1}{T}(H-(\mu+\varepsilon)\cdot Q)}\right)\right],
\end{align}
where we used the identity $Y+I = (I - e^{-\frac{1}{T}(H-(\mu+\varepsilon)\cdot Q)})^{-1}$ to simplify the second trace.
Substituting this back into the relative entropy yields:
\begin{align}
    D_{\mathrm{BE}}(X_{T}(\mu)\|X_{T}(\mu+\varepsilon)) &= -S_{\mathrm{BE}}(X_T(\mu)) + \frac{1}{T}\Tr\!\left[X_T(\mu) \big(H - (\mu+\varepsilon)\cdot Q\big)\right] \nonumber \\
    &\qquad - \Tr\!\left[\ln\!\left(I - e^{-\frac{1}{T}(H-(\mu+\varepsilon)\cdot Q)}\right)\right].
\end{align}

We now take the second partial derivatives with respect to the perturbation variables $\varepsilon_i, \varepsilon_j$ evaluated at $\varepsilon=0$.
The first term $-S_{\mathrm{BE}}(X_T(\mu))$ is independent of $\varepsilon$.
The second term is linear in $\varepsilon$, meaning its second derivative is strictly zero.
The only surviving term is the third one:
\begin{align}
    \left[I(\mu)\right]_{i,j} &= \left.\frac{\partial^{2}}{\partial\varepsilon_{i}\partial\varepsilon_{j}}D_{\mathrm{BE}}(X_{T}(\mu)\|X_{T}(\mu+\varepsilon))\right|_{\varepsilon=0} \nonumber \\
    &= -\left.\frac{\partial^{2}}{\partial\varepsilon_{i}\partial\varepsilon_{j}}\Tr\!\left[\ln\!\left(I - e^{-\frac{1}{T}(H-(\mu+\varepsilon)\cdot Q)}\right)\right]\right|_{\varepsilon=0}.
\end{align}
Comparing this to the dual objective function $f_T(\mu)$ defined in~\eqref{eq:dual_obj_function}, we recognize that the trace term is exactly $\frac{1}{T}(f_T(\mu+\varepsilon) - (\mu+\varepsilon)\cdot q)$. Because the linear term $(\mu+\varepsilon)\cdot q$ vanishes under the second derivative, we have:
\begin{equation}
    \left[I(\mu)\right]_{i,j} = -\frac{1}{T} \frac{\partial^2}{\partial\mu_i\partial\mu_j} f_T(\mu) = -\frac{1}{T} [\nabla^2 f_T(\mu)]_{i,j}.
\end{equation}
Finally, substituting the established analytical forms for the Hessian elements $-\frac{\partial^2}{\partial\mu_i\partial\mu_j} f_T(\mu)$ derived previously in~\eqref{eq:1st-hessian-exp} and~\eqref{eq:2nd-hessian-exp} directly yields the integral and quantum channel expressions:
\begin{align}
    \left[I(\mu)\right]_{i,j} &= \frac{1}{T^{2}}\int_{0}^{1}ds\,\Tr\!\left[X_{T}(\mu,s)Q_{i}X_{T}(\mu,1-s)Q_{j}\right] \\
    &= \frac{1}{T^{2}}\operatorname{Re}\!\left\{\Tr\!\left[X_{T}(\mu)\Phi_{\mu}(Q_{i})\left(X_{T}(\mu)+I\right)Q_{j}\right]\right\},
\end{align}
completing the proof.
\end{proof}

\end{document}